\crefname{figure}{Figure}{Figure}
\newtheorem{definition}{Definition}
\newtheorem{theorem}{Theorem}
\newtheorem{corollary}{Corollary}
\newtheorem{lemma}{Lemma}
\newtheorem{claim}{Claim}
\theoremstyle{remark}
\newtheorem{remark}{Remark}
\theoremstyle{definition}
\DeclareMathAlphabet{\mathpzc}{OT1}{pzc}{m}{it}
\newcommand{\Nats}{\mathbb{N}}
\newcommand{\Reals}{\mathbb{R}}
\newcommand{\Level}{\mathsf{Lv}}
\newcommand{\Fwd}{\mbox{\sc Fwd}}
\newcommand{\Bwd}{\mbox{\sc Bwd}}
\newcommand{\Fwdstar}{\Fwd^+}
\newcommand{\Bwdstar}{\Bwd^+}
\newcommand{\ov}{\overline}
\newcommand{\PNode}[1]{\ov{#1}}
\newcommand{\Refines}{\sqsubseteq}
\newcommand{\Strictrefines}{\sqsubset}
\newcommand{\Distance}{d}
\newcommand{\Distancecomp}{d'}
\newcommand{\Weight}{\mathsf{wt}}
\newcommand{\Mergealgo}{\mathsf{Merge}}
\newcommand{\Splitalgo}{\mathsf{Split}}
\newcommand{\Nsep}{2/\delta}
\newcommand{\Seplist}{\mathcal{X}}
\newcommand{\Complist}{\mathcal{Y}}
\newcommand{\Bag}{B}
\newcommand{\RBag}{\mathcal{B}}
\newcommand{\RHBag}{\widehat{\mathcal{B}}}
\newcommand{\Tree}{\mathrm{Tree}}
\newcommand{\restr}[1]{[#1]}
\newcommand{\Rankalgo}{\mathsf{Rank}}
\newcommand{\Concpreprocessalgo}{\mathsf{ConcurPreprocess}}
\newcommand{\ConcurAP}{\mathsf{ConcurAP}}
\newcommand{\Concqueryalgo}{\mathsf{ConcurQuery}}
\newcommand{\Nhood}{\mathsf{Nh}}
\newcommand{\Nhoodnodes}{\mathsf{NhV}}
\newcommand{\Baghood}{\mathsf{Bh}}
\newcommand{\Comp}{\mathcal{C}}
\newcommand{\Zero}{\overline{\mathbf{0}}}
\newcommand{\One}{\overline{\mathbf{1}}}
\newcommand{\True}{\mathsf{True}}
\newcommand{\False}{\mathsf{False}}
\newcommand{\Ranktree}{\mathsf{R_G}}
\newcommand{\Rankhtree}{\widehat{\mathsf{R_G}}}
\newcommand{\Rep}{\mathsf{Replace}}
\newcommand{\Time}{\mathcal{T}}
\newcommand{\Space}{\mathcal{S}}
\newcommand{\Path}{\rightsquigarrow}
\newcommand{\Weightmap}{{\mathpzc{wt}}}
\newcommand{\ProductTree}{\mathsf{ConcurTree}}
\newcommand\numberthis{\addtocounter{equation}{1}\tag{\theequation}}
\newcommand{\From}{\Fwdstar}
\newcommand{\To}{\Bwdstar}
\newcommand{\AncestV}[1]{\mathpzc{#1}}
\newcommand{\Bagfactor}{4\cdot \lambda /\delta}
\newcommand{\Balfactor}{((1+\delta)/2)^{\lambda-1}}
\newcommand{\Levelfactor}{\lambda}
\newcommand{\Philosophers}{\mathsf{Dining Philosophers}}
\newcommand{\Prod}[1]{\langle #1 \rangle}
\newcommand{\eps}{\epsilon}
\newcommand{\Null}{\mathsf{null}}
\newcommand{\Children}{[2]}
\newcommand{\removelatexerror}{\let\@latex@error\@gobble}
\renewcommand{\baselinestretch}{1}
\title{Algorithms for Algebraic Path Properties in \\ Concurrent Systems 
of Constant Treewidth Components
}
\date{}
\begin{document}

\maketitle

\begin{abstract}
We study algorithmic questions for concurrent systems where the 
transitions are labeled from a complete, closed semiring, and 
path properties are algebraic with semiring operations.
The algebraic path properties can model dataflow analysis problems, 
the shortest path problem, and many other natural problems that arise
in program analysis.
We consider that each component of the concurrent system is a graph 
with constant treewidth, a property satisfied by the controlflow graphs of most programs.
We allow for multiple possible queries, which arise naturally in demand 
driven dataflow analysis.
The study of multiple queries allows us to consider the tradeoff 
between the resource usage of the \emph{one-time} preprocessing and
for \emph{each individual} query.
The traditional approach constructs the product graph of all components and 
applies the best-known graph algorithm on the product. 
In this approach, even the answer to a single query requires the 
transitive closure (i.e., the results of all possible queries), 
which provides no room for tradeoff between preprocessing and query time.

Our main contributions are algorithms that significantly improve the
worst-case running time of the traditional approach, and provide various
tradeoffs depending on the number of queries.
For example, in a concurrent system of two components, the traditional approach
requires hexic time in the worst case for answering one query as well as computing the 
transitive closure, whereas we show that with one-time preprocessing in almost cubic time, 
each subsequent query can be answered in at most linear time, 
and even the transitive closure can be computed in almost quartic time.
Furthermore, we establish conditional optimality results showing that
the worst-case running time of our algorithms cannot be improved without 
achieving major breakthroughs in graph algorithms (i.e., improving 
the worst-case bound for the shortest path problem in general graphs).
Preliminary experimental results show that our algorithms perform
favorably on several real-world benchmarks.
\end{abstract}

\category{F.3.2}{Logics and Meanings of Programs}{Semantics of Programming Languages---Program Analysis}



\keywords Concurrent systems, Constant-treewidth graphs, Algebraic path properties, Shortest path.




\section{Introduction\label{sec:intro}}

In this work we consider concurrent finite-state systems where each 
component is a constant-treewidth graph, 
and the algorithmic question is to determine algebraic path properties between pairs of nodes in the system.
Our main contributions are algorithms which significantly improve
the worst-case running time of the existing algorithms.
We establish conditional optimality results for some of our algorithms 
in the sense that they cannot be improved without achieving major 
breakthroughs in the algorithmic study of graph problems.
Finally, we provide a prototype implementation of our algorithms which 
significantly outperforms the existing algorithmic methods on several benchmarks.

\smallskip\noindent{\em Concurrency and algorithmic approaches.}
The analysis of concurrent systems is one of the fundamental problems in 
computer science in general, and programming languages in particular.
A finite-state concurrent system consists of several components, 
each of which is a finite-state graph, and the whole system is a composition 
of the components.
Since errors in concurrent systems are hard to reproduce by simulations 
due to combinatorial explosion in the number of interleavings, 
formal methods are necessary to analyze such systems. 
In the heart of the formal approaches are graph algorithms, which 
provide the basic search procedures for the problem.
The basic graph algorithmic approach is to construct the product graph 
(i.e., the product of the component systems) and then apply the 
best-known graph algorithms on the product graph.
While there are many practical approaches for the analysis of concurrent 
systems, a fundamental theoretical question is whether special properties of 
graphs that arise in analysis of programs can be exploited to develop asymptotically faster 
algorithms as compared to the basic approach.

\smallskip\noindent{\em Special graph properties for programs.}
A very well-studied notion in graph theory is the concept of {\em treewidth} 
of a graph, which is a measure of how similar a graph is to a tree 
(a graph has treewidth~1 precisely if it is a tree)~\cite{Robertson84}. 
The treewidth of a graph is defined based on a {\em tree decomposition} of 
the graph~\cite{Halin76}, see Section~\ref{sec:definitions} for a formal 
definition.
On one hand the treewidth property provides a mathematically elegant way 
to study graphs, and on the other hand there are many classes of graphs which 
arise in practice and have constant treewidth. 
The most important example is that the controlflow graph for goto-free 
programs for many programming languages are of constant 
treewidth~\cite{Thorup98}, and it was also shown in~\cite{Gustedt02} that 
typically all Java programs have constant treewidth. 


\smallskip\noindent{\em Algebraic path properties.}
To specify properties of traces of concurrent systems we consider a very 
general framework, where edges of the system are labeled from a complete, 
closed semiring (which subsumes bounded and finite distributive semirings), 
and we refer to the labels of the edges as weights. 
For a given path, the weight of the path is the semiring product of the weights
on the edges of the path, and the weights of different paths are combined using the
semiring plus operator. 
For example, (i)~the Boolean semiring (with semiring product as AND, and 
semiring plus as OR) expresses the reachability property; 
(ii)~the tropical semiring (with real numbers as edge weights, semiring product as 
standard sum, and semiring plus as minimum) expresses the shortest path 
property; and (iii)~with letter labels on edges, semiring product as string 
concatenation and semiring plus as union we can express the regular expression 
of reaching from one node to another.
The algebraic path properties subsumes the dataflow 
analysis of the IFDS/IDE frameworks~\cite{Reps95,Sagiv96} in the 
intraprocedural setting,
which consider compositions of distributive dataflow functions, and meet-over-all-paths 
as the semiring plus operator.
Since IFDS/IDE is a special case of our framework, a large and important class of 
dataflow analysis problems that can be expressed in IFDS/IDE
can also be expressed in our framework. 
However, the IFDS/IDE framework works for sequential interprocedural analysis,
whereas we focus on intraprocedural analysis, but in the concurrent setting.

\smallskip\noindent{\em Expressiveness of algebraic path properties.}
The algebraic path properties provide an expressive framework with rich modeling power.
Here we elaborate on three important classes.

\begin{compactenum}
\item {\em Weighted shortest path.} 
The algebraic paths framework subsumes several problems on weighted graphs.
The most well-known such problem is the shortest path problem~\cite{Floyd62, Warshall62, Bellman58, Ford56, Johnson77},
phrased on the tropical semiring.
For example, the edge weights (positive and negative) can express energy 
consumptions, and the shortest path problem asks for the least energy 
consuming path.
Another important quantitative property is the \emph{mean-payoff} property, 
where each edge weight represents a reward or cost, and the problem asks for 
a path that minimizes the average of the weights along a path. 
Many quantitative properties of relevance for program analysis (e.g., to 
express performance or resource consumption) can be modeled as mean-payoff 
properties~\cite{Chatterjee15B,CHR13}.
The mean-payoff and other fundamental problems on weighted graphs (e.g., the most probable path and the minimum initial 
credit problem) can be reduced to the 
shortest-path problem~\cite{Viterbi67, L76, Karp78, Bouyer08, CDH10, CHR13, W08, Chatterjee15C}.

\item {\em Dataflow problems.} 
A wide range of dataflow problems has an algebraic paths formulation, expressed as a ``meet-over-all-paths'' analysis~\cite{Kildall73}.
Perhaps the most well-known case is that of distributive flow functions considered in the IFDS framework~\cite{Reps95, Sagiv96}.
Given a finite domain $D$ and a universe $F$ of distributive dataflow functions $f:2^D\rightarrow 2^D$,
a weight function $\Weight:E\rightarrow F$ associates each edge of the controlflow graph with a flow function. 
The weight of a path is then defined as the composition of the flow functions along its edges,
and the dataflow distance between two nodes $u$, $v$ is the meet $\sqcap$ (union or intersection) of the
weights of all $u\Path v$ paths. 
The problem can be formulated on the meet-composition semiring $(F, \sqcap, \circ, \emptyset,  I)$, where $I$ is the identity function. 
We note, however, that the IFDS/IDE framework considers interprocedural paths in sequential programs.
In contrast, the current work focuses on intraprocedural analysis of concurrent programs.
The dataflow analysis of concurrent programs has been a problem of intensive study (e.g.~\cite{Grunwald93, Knoop96, Farzan07, Chugh08, Kahlon09, De11}), where (part of) the underlying analysis is based on an algebraic, ``meet-over-all-paths'' approach.

\item {\em Regular expressions.}
Consider the case that each edge is annotated with an observation or action.
Then the regular expression to reach from one node to another represents all the 
sequences of observable actions that lead from the start node to the target.
The regular languages of observable actions have provided useful formulations
in the analysis and synthesis of concurrent systems~\cite{Dwyer04, Farzan13,Cerny15}.
Regular expressions have also been used as algebraic relaxations of 
interprocedurally valid paths in sequential and concurrent systems~\cite{Yan11, Bouajjani03}.
\end{compactenum}

For a detailed discussion, see \cref{sec:example}.

\smallskip\noindent{\em The algorithmic problem.}
In graph theoretic parlance, graph algorithms typically consider two types of 
queries: (i)~a \emph{pair query} given nodes $u$ and $v$ 
(called $(u,v)$-pair query) asks for the algebraic path property 
from $u$ to $v$; and 
(ii)~a \emph{single-source} query given a node $u$ asks for the 
answer of $(u,v)$-pair queries for all nodes $v$.
In the context of concurrency, in addition to the classical pair 
and single-source queries, we also consider {\em partial} queries. 
Given a concurrent system with $k$ components, a node in the product graph 
is a tuple of $k$ component nodes. 
A \emph{partial} node $\PNode{u}$ in the product only specifies nodes of a nonempty 
strict subset of all the components.
Our work also considers partial pair and partial single-source queries, 
where the input nodes are partial nodes.
Queries on partial nodes are very natural, as they capture 
properties between local locations in a component, 
that are shaped by global paths in the whole concurrent system.
For example, constant propagation and dead code elimination are local properties in a program,
but their analysis requires analyzing the concurrent system as a whole.

\smallskip\noindent{\em Preprocess vs query.}
A topic of widespread interest in the programming languages community is that of on-demand analysis
~\cite{WayneA78,Zadeck84,Horwitz95,Duesterwald95,Reps95b,Sagiv96,Reps97,Yuan97,Naeem10,CIPG15}.
Such analysis has several advantages, such as (quoting from~\cite{Horwitz95,Reps97})
(i)~narrowing down the focus to specific points of interest,
(ii)~narrowing down the focus to specific dataflow facts of interest,
(iii)~reducing work in preliminary phases,
(iv)~sidestepping incremental updating problems, and
(v)~offering demand analysis as a user-level operation.
For example, in alias analysis, the question is whether two pointers may point to the same 
object, which is by definition modeled as a question between a pair of nodes.
Similarly, in constant propagation a relevant question is whether 
some variable remains constant between a pair of controlflow locations.
The problem of on-demand analysis allows us to distinguish between a single preprocessing phase 
(one time computation), and a subsequent query phase, where queries are answered on demand.
The two extremes of the preprocessing and query phase are:
(i)~\emph{complete preprocessing} (aka \emph{transitive closure} computation) 
where the result is precomputed for every possible query, and hence queries are 
answered by simple table lookup; and (ii)~\emph{no preprocessing} where 
every query requires a new computation.
However, in general, there can be a tradeoff between the preprocessing and 
query computation.
Most of the existing works for on-demand analysis do not make a formal distinction between preprocessing and query phases,
as the provided complexities only guarantee the \emph{same worst-case 
complexity} property, namely that the total time for handling any sequence of queries 
is no worse than the complete preprocessing.
Hence most existing tradeoffs are practical, without any theoretical guarantees.

\begin{table*}[!h]
\centering
\begin{ThreePartTable}
\small
\renewcommand{\arraystretch}{1.2}
\begin{tabular}{|c||c|c||c|c|c|c|}
\hline
& \multicolumn{2}{c||}{\textbf{Preprocess}} & \multicolumn{4}{c|}{\textbf{Query time}}\\
\cline{2-7}
& Time & Space  & Single-source & Pair & Partial single-source & Partial pair\\
\hline
\hline
Previous results~\cite{Lehmann77,Floyd62,Warshall62,Kleene56} & $O(n^6)$ & $O(n^4)$ & $O(n^2)$ & $O(1)$ & $O(n^2)$ & $O(1)$\\
\hline
{\bf Our result \cref{cor:ld}}~$(\eps>0)$ & $\mathbf{O(n^{3})}$ & $\mathbf{O(n^{2+\eps})}$ & $\mathbf{O(n^{2+\eps})}$ & $\mathbf{O(n^{2})}$ & $\mathbf{O(n^{2+\eps})}$ & $\mathbf{O(n^{2})}$\\
\hline
{\bf Our result \cref{them:concurrent}}~$(\eps>0)$  & $\mathbf{O(n^{3+\eps})}$ & $\mathbf{O(n^3)}$ & $\mathbf{O(n^{2+\eps})}$ & $\mathbf{O(n)}$ & $\mathbf{O(n^2)}$ & $\mathbf{O(1)}$\\
\hline
{\bf Our result \cref{cor:closure}}~$(\eps>0)$  & $\mathbf{O(n^{4+\eps})}$ & $\mathbf{O(n^4)}$ & $\mathbf{O(n^{2})}$ & $\mathbf{O(1)}$ & $\mathbf{O(n^2)}$ & $\mathbf{O(1)}$\\
\hline
\end{tabular}
\caption{The algorithmic complexity for computing algebraic path queries wrt a closed, complete semiring on a concurrent graph 
$G$ which is the product of two constant-treewidth graphs $G_1$, $G_2$, with $n$ nodes each.}
\label{tab:intro1}
\end{ThreePartTable}
\end{table*}

\smallskip\noindent{\em Previous results.} 
In this work we consider finite-state concurrent systems, where each 
component graph has constant treewidth, and the trace properties are specified as algebraic path properties. 
Our framework can model a large class of problems: typically the 
controlflow graphs of programs have constant treewidth~\cite{Thorup98,Gustedt02,Burgstaller04}, 
and if there is a constant number of synchronization variables with constant-size domains, 
then each component graph has constant treewidth. 
Note that this imposes little practical restrictions, as typically 
synchronization variables, such as locks, mutexes and condition variables 
have small (even binary) domains (e.g. locked/unlocked state).
The best-known graph algorithm for the algebraic path property problem is the 
classical Warshall-Floyd-Kleene~\cite{Lehmann77,Floyd62,Warshall62,Kleene56} style dynamic programming, which requires cubic time.
Two well-known special cases of the algebraic paths problem are 
(i)~computing the shortest path from a source to a target node in a weighted 
graph, and (ii)~computing the regular expression from a source to a target 
node in an automaton whose edges are labeled with letters from a 
finite alphabet.
In the first case, the best-known algorithm is the Bellman-Ford algorithm with
time complexity $O(n\cdot  m)$.
In the second case, the well-known construction of Kleene's~\cite{Kleene56} theorem requires cubic time.
The only existing algorithmic approach for the problem we consider is to first 
construct the product graph (thus if each component graph has size $n$, and
there are $k$ components, then the product graph has size $O(n^k)$), and then 
apply the best-known graph algorithm (thus the overall time complexity is 
$O(n^{3\cdot k})$ for algebraic path properties).
Hence for the important special case of two components we obtain a hexic-time (i.e., $O(n^6)$) algorithm.
Moreover, for algebraic path properties the current best-known algorithms for one pair query (or one 
single-source query) computes the entire transitive closure. 
Hence the existing approach does not allow a tradeoff of preprocessing and
query as even for one query the entire transitive closure is computed.

\smallskip\noindent{\em Our contributions.} 
Our main contributions are improved algorithmic upper bounds,
proving several optimality results of our algorithms, and experimental 
results.
Below all the complexity measures (time and space) are in the number of basic 
machine operations and number of semiring operations. 
We elaborate our contributions below.

\begin{compactenum}
\item {\em Improved upper bounds.}
We present improved upper bounds both for general $k$ components,
and the important special case of two components.

\begin{compactitem}
\item \emph{General case.} 
We show that for $k\geq 3$ components with $n$ nodes each, after 
$O(n^{3\cdot (k-1)})$ preprocessing time, we can answer 
(i)~single-source queries in $O(n^{2\cdot (k-1)})$ time, 
(ii)~pair queries in $O(n^{k-1})$ time, 
(iii)~partial single-source queries in $O(n^k)$ time, and 
(iv)~partial pair queries in $O(1)$ time; 
while using at all times $O(n^{2\cdot k -1})$ space.
In contrast, the existing methods~\cite{Lehmann77,Floyd62,Warshall62,Kleene56}
compute the transitive closure even for a single query, and thus require 
$O(n^{3\cdot k})$ time and $O(n^{2\cdot k})$ space.
 
\item \emph{Two components.} 
For the important case of two components, the existing methods 
require $O(n^6)$ time and $O(n^4)$ space even for one query.
In contrast, we establish a variety of tradeoffs between preprocessing and 
query times, and the best choice depends on the number of expected queries.
In particular, for any fixed $\eps>0$, we establish the following three results.

\smallskip\noindent{\em Three results.}
First, we show (\cref{cor:ld}) that with 
$O(n^3)$ preprocessing time and using $O(n^{2+\eps})$ space, we can answer
single-source queries in $O(n^{2+\eps})$ time, and 
pair and partial pair queries require $O(n^2)$ time.
Second, we show (\cref{them:concurrent}) that 
with $O(n^{3+\eps})$ preprocessing time and using $O(n^3)$ space,
we can answer pair and partial pair queries in time 
$O(n)$ and $O(1)$, respectively.
Third, we show (\cref{cor:closure}) that 
the transitive closure can be computed using $O(n^{4+\eps})$ preprocessing time 
and $O(n^{4})$ space, after which single-source queries require $O(n^2)$ time, 
and pair and partial pair queries require $O(1)$ time
(i.e., all queries require linear time in the size of the output).

\smallskip\noindent{\em Tradeoffs.}
Our results provide various tradeoffs:
The first result is best for answering $O(n^{1+\eps})$ pair and partial pair 
queries;
the second result is best for answering between $\Omega(n^{1+\eps})$ and  
$O(n^{3+\eps})$ pair queries, and $\Omega(n^{1+\eps})$ partial pair queries; 
and the third result is best when answering $\Omega(n^{3+\eps})$ pair queries.
Observe that the transitive closure computation is preferred when the number 
of queries is large, in sharp contrast to the existing methods that compute 
the transitive closure even for a single query.
Our results are summarized in~\cref{tab:intro1} and the tradeoffs are 
pictorially illustrated in~\cref{fig:outline}.
\end{compactitem}

\begin{figure}
\centering

\newcommand{\pgfplotsdrawaxis}{\pgfplots@draw@axis}
\pgfplotsset{axis line on top/.style={
  axis line style=transparent,
  ticklabel style=transparent,
  tick style=transparent,
  axis on top=false,
  after end axis/.append code={
    \pgfplotsset{axis line style=opaque,
      ticklabel style=opaque,
      tick style=opaque,
      grid=none}
    \pgfplotsdrawaxis}
  }
}

\begin{tikzpicture}[scale=1]

\begin{axis}[
unit vector ratio*=1 0.875 1,
axis background/.style={fill=red!20},
axis lines=left,
ticks=none,
compat=newest,
xmin=0,xmax=8,
ymin=0,ymax=8
]
\addplot[fill=green!20, draw=green!20] coordinates{
(0,8)
(5.6,8)
(5.6,0)
(0,0)
}\closedcycle;
\addplot[color=black, very thick,] coordinates {
(5.6,0)
(5.6,8)
};
\addplot [color=black, very thick, fill=blue!20] coordinates {
(0,3)
(3,3)
(3,0)
}\closedcycle;
\end{axis}

\begin{axis}[
unit vector ratio*=1 0.875 1,
xlabel=$i$ pair queries,
ylabel=$j$ partial pair queries,
axis lines=left,
ultra thick,
compat=newest,
xmin=0,xmax=8,
ymin=0,ymax=8,
xtick={3,  5.6, 7.8},
xticklabels={$n^{1+\eps}$, $n^{3+\eps}$, $n^4$},
ytick={3, 7.8},
yticklabels={$n^{1+\eps}$, $n^3$},
]
\node[align=center] at (axis cs:1.5,1.5) {$n^3+(i+j)\cdot n^2$\\ ~\\ \cref{cor:ld}};
\node[align=center] at (axis cs:3,5) {$n^{3+\eps}+i\cdot n+j$\\ ~\\ \cref{them:concurrent}};
\node[align=center] at (axis cs:6.8,4) {$n^{4+\eps}+i+j$\\ ~\\ \cref{cor:closure}};
\end{axis}

\end{tikzpicture}
\caption{
Given a concurrent graph $G$ of two constant-treewidth graphs of $n$ nodes each,
the figure illustrates the time required by the variants of our algorithms to preprocess $G$, 
and then answer $i$ pair queries and $j$ partial pair queries.
The different regions correspond to the best variant for handling different number of such queries.
In contrast, the current best solution requires $O(n^6+i+j)$ time.
For ease of presentation we omit the $O(\cdot)$ notation.
}\label{fig:outline}
\end{figure}
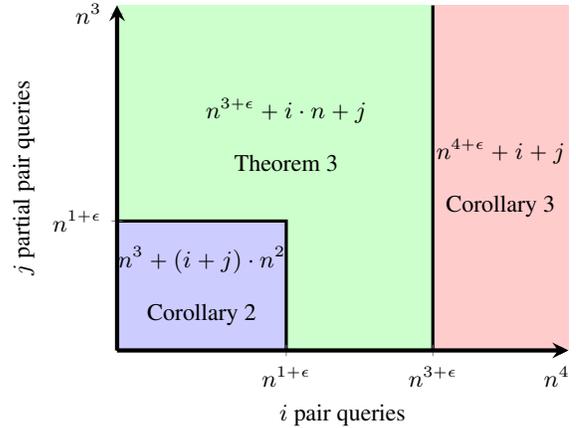

\item {\em Optimality of our results.}
Given our significant improvements for the case of two components, a very 
natural question is whether the algorithms can be improved further. 
While presenting matching  bounds for polynomial-time graph algorithms 
to establish optimality is very rare in the whole of computer science, 
we present \emph{conditional lower bounds} which show that our combined
preprocessing and query time cannot be improved without
achieving a major breakthrough in graph algorithms.

\begin{compactitem}

\item {\em Almost optimality.} 
First, note that in the first result (obtained from~\cref{cor:ld})
our space usage and single-source query time are arbitrarily close to 
optimal, as both the input and the output have size $\Theta(n^2)$.
Moreover, the result is achieved with preprocessing time less than 
$\Omega(n^4)$, which is a lower bound for computing the transitive closure 
(which has $n^4$ entries).
Furthermore, in our third result (obtained from~\cref{cor:closure}) 
the $O(n^{4 + \eps})$ preprocessing time is arbitrarily close to optimal, and the $O(n^4)$ preprocessing 
space is indeed optimal, 
as the transitive closure computes the distance among all $n^4$ pairs of 
nodes (which requires $\Omega(n^4)$ time and space).

\item {\em Conditional lower bound.}
In recent years, the conditional lower bound problem has received
vast attention in complexity theory, where under the assumption that 
certain problems (such as matrix multiplication, all-pairs shortest path)
cannot be solved faster than the existing upper bounds, lower bounds for other problems (such as 
dynamic graph algorithms) are obtained~\cite{AW14,AWY15,HKNS15}.
The current best-known algorithm for algebraic path properties for general 
(not constant-treewidth) graphs is cubic in the number of nodes.
Even for the special case of shortest paths with positive and negative 
weights, the best-known algorithm (which has not been improved over five 
decades) is $O(n\cdot  m)$, where $m$ is the number of edges. 
Since $m$ can be $\Omega(n^2)$, the current best-known worst-case complexity 
is cubic in the number of nodes.
We prove that pair queries require more time in a concurrent graph of two 
constant-treewidth graphs, with $n$ nodes each, than in general graphs with $n$ nodes.
This implies that improving the $O(n^3)$ combined preprocessing and query time over our result (from~\cref{cor:ld}) for answering $r$ queries, for $r= O(n)$, 
would yield the same improvement over the $O(n^3)$ time for answering $r$ pair queries in general graphs. 
That is, the combination of our preprocessing and query time 
(from~\cref{cor:ld}) cannot be improved without equal improvement on the long standing 
cubic bound for the shortest path and the algebraic 
path problems in general graphs.
Additionally, our result (from \cref{them:concurrent}) 
cannot be improved much further even for $n^2$ queries,
as the combined time for preprocessing and answering $n^2$ 
queries is $O(n^{3+\epsilon})$ using \cref{them:concurrent}, while
the existing bound is $O(n^3)$ for general graphs. 
\end{compactitem}

\item {\em Experimental results.}
We provide a prototype implementation of our algorithms which significantly 
outperforms the baseline methods on several benchmarks.

\end{compactenum}

\smallskip\noindent{\em Technical contributions.}
The results of this paper rely on several novel technical contributions.
\begin{compactenum}

\item \emph{Upper bounds}.
Our upper bounds depend on a series of technical results.
\begin{compactenum}
\item 
The first key result is an algorithm 
for constructing a \emph{strongly balanced} tree-decomposition $T$.
A tree is called $(\beta, \gamma)$-balanced if for every node $u$ and descendant $v$ of $u$
that appears $\gamma$ levels below,
the size of the subtree of $T$ rooted at $v$ is at most a $\beta$ fraction of the size
of the subtree of $T$ rooted at $u$.
For any fixed $\delta>0$ and $\lambda\in \Nats$ with $\lambda\geq 2$,
let $\beta=\Balfactor$ and $\gamma=\Levelfactor$.
We show that a $(\beta,\gamma)$-balanced tree decomposition of a constant-treewidth graph 
with $n$ nodes can be constructed in $O(n\cdot \log n)$ time and $O(n)$ space.
To our knowledge, this is the first algorithm that constructs a tree decomposition 
with such a strong notion of balance.
This  property is crucial for achieving the resource bounds of our algorithms
for algebraic paths.
The construction is presented in \cref{sec:prelim_tree_dec}.
\item 
Given a concurrent graph $G$ obtained from $k$ constant-treewidth graphs $G_i$, 
we show how a tree-decomposition of $G$ can be constructed from the strongly balanced 
tree-decompositions $T_i$ of the components $G_i$, in time that is linear in the size of the output.
We note that $G$ can have large treewidth, and thus determining the treewidth of $G$ can be computationally 
expensive.
Instead, our construction avoids computing the treewidth of $G$, and 
directly constructs a tree-decomposition of $G$ from the strongly balanced tree decompositions $T_i$.
The construction is presented in \cref{sec:product_tree_dec}.
\item
Given the above tree-decomposition algorithm for concurrent graphs $G$,
in \cref{sec:concurrent} we present the algorithms for handling algebraic path queries.
In particular, we introduce the \emph{partial expansion} $\PNode{G}$ of $G$ for additionally handling partial queries,
and describe the algorithms for preprocessing and querying $\PNode{G}$ in the claimed time and space bounds.
\end{compactenum}
\item \emph{Lower bound}.
Given an arbitrary graph $G$ (not of constant treewidth) of $n$ nodes, we show how to construct
a constant-treewidth graph $G''$ of $2\cdot n$ nodes, and a graph $G'$ that is 
the product of $G''$ with itself, such that algebraic path queries in $G$ coincide with 
such queries in $G'$.
This construction requires quadratic time on $n$.
The conditional optimality of our algorithms follows, as improvement over our algorithms 
must achieve the same improvement for algebraic path properties on arbitrary graphs.
\end{compactenum}
All our algorithms are simple to implement and provided as pseudocode 
in \cref{sec:algorithms}. Several technical proofs are also relegated 
to the full version due to lack of space.

\subsection{Related Works}

\smallskip\noindent{\em Treewidth of graphs.}
The notion of treewidth of graphs as an elegant mathematical tool
to analyze graphs was introduced in~\cite{Robertson84}.
The significance of constant treewidth in graph theory is large 
mainly because several problems on graphs become complexity-wise easier.
Given a tree decomposition of a graph with low treewidth $t$, 
many NP-complete problems for arbitrary graphs can be solved in time polynomial 
in the size of the graph, but exponential in $t$~\cite{Arnborg89,Bern1987216,Bodlaender88,Bodlaender93,Bodlaender05}. 
Even for problems that can be solved in polynomial time, 
faster algorithms can be obtained for low treewidth graphs,
e.g., for the distance problem~\cite{Chaudhuri95}.
The constant-treewidth property of graphs has also been used in the context
of logic: Monadic Second Order (MSO) logic is a very expressive logic, and a 
celebrated result of~\cite{Courcelle91} showed that for constant-treewidth graphs the 
decision questions for MSO can be solved in polynomial time; and 
the result of~\cite{Elberfeld10} shows that this can even be achieved in deterministic
log-space.
Various other models (such as probabilistic models of Markov decision processes
and games played on graphs for synthesis) with the constant-treewidth restriction 
have also been considered~\cite{CL13,Obdrzalek03}.
The problem of computing a balanced tree decomposition for a constant treewidth 
graph was considered in~\cite{Reed92}.
More importantly, in the context of programming languages, it was shown in~\cite{Thorup98}
that the controlflow graphs of goto-free programs in many programming languages have 
constant treewidth.
This theoretical result was subsequently followed up in several practical 
approaches, and although in the presence of gotos the treewidth is not 
guaranteed to be bounded, it has been shown that programs in several 
programming languages have typically low treewidth 
 ~\cite{Gustedt02,Burgstaller04}.
The constant-treewidth property of graphs has been used to develop faster 
algorithms for sequential interprocedural analysis~\cite{CIPG15},
and on the analysis of automata with auxiliary storage (e.g., stacks and queues)~\cite{Madhusudan11}.
These results have been followed in practice, and some compilers (e.g., SDCC) implement tree-decomposition-based algorithms
for performance optimizations~\cite{Krause13}.

\smallskip\noindent{\em Concurrent system analysis.}
The problem of concurrent system analysis has been considered in several 
works, both for intraprocedural as well context-bounded interprocedural
analysis~\cite{Harel97,Alur99,Farzan13,Qadeer05,Bouajjani05,LaTorre08, Lal08,Lal09, Kahlon13}, 
and many practical tools have been developed as well~\cite{Qadeer05, Lal09,Suwimonteerabuth08,Lal12}.
In this work we focus on the intraprocedural analysis with constant-treewidth
graphs, and present algorithms with better asymptotic complexity.
To our knowledge, none of the previous works consider the constant-treewidth property, nor do they
improve the asymptotic complexity of the basic algorithm for the algebraic 
path property problem.

\section{Definitions}\label{sec:definitions}
In this section we present definitions related to semirings, 
graphs, concurrent graphs, and tree decompositions.
We start with some basic notation on sets and sequences.

\smallskip\noindent{\bf Notation on sets and sequences.}
Given a number $r\in \Nats$, we denote by $[r]=\{1,2,\dots, r\}$ the natural numbers from $1$ to $r$.
Given a set $X$ and a $k\in \Nats$, we denote by $X^k=\prod_{i=1}^kX$, the
$k$ times Cartesian product of $X$.
A sequence $x_1,\dots x_k$ is denoted for short by $(x_i)_{1\leq i\leq k}$,
or $(x_i)_i$ when $k$ is implied from the context.
Given a sequence $Y$, we denote by $y\in Y$ the fact that $y$ appears in $Y$.

\subsection{Complete, closed semirings}

\begin{definition}[Complete, closed semirings]
\normalfont
We fix a complete {\em semiring} $S=(\Sigma, \oplus, \otimes, \Zero, \One)$ where $\Sigma$ is a countable set, 
$\oplus$ and $\otimes$ are binary operators on $\Sigma$, and $\Zero, \One\in \Sigma$, and the following properties hold:
\begin{compactenum}
\item $\oplus$ is infinitely associative, commutative, and $\Zero$ is the neutral element,
\item $\otimes$ is associative, and $\One$ is the neutral element,
\item $\otimes$ infinitely distributes over $\oplus$,
\item $\Zero$ absorbs in multiplication, i.e., $\forall a\in \Sigma: a\otimes \Zero=\Zero$.
\end{compactenum}
Additionally, we consider that $S$ is equipped with a \emph{closure} operator $^*$,
such that $\forall s\in \Sigma:~s^*=\One\oplus (s\otimes s^*) =\One\oplus (s^*\otimes s)$
(i.e., the semiring is {\em closed}).
\end{definition}

\subsection{Graphs and tree decompositions}

\smallskip\noindent{\bf Graphs and weighted paths.}
Let $G=(V,E)$ be a weighted finite directed graph (henceforth called simply a graph) where $V$ is a set of $n$ nodes and
$E\subseteq V\times V$ is an edge relation, along with a weight
function $\Weight:E\rightarrow \Sigma$ that assigns to each edge of $G$ an element from $\Sigma$. 
Given a set of nodes $X\subseteq V$, we denote by $G\restr{X}=(X, E\cap (X\times X))$
the subgraph of $G$ induced by $X$.
A path $P:u\Path v$ is a sequence of nodes $(x_1,\dots, x_k)$
such that $x_1=u$, $x_k=v$, and for all $1\leq i < k$ we have $(x_i,x_{i+1})\in E$.
The length of $P$ is $|P|=k-1$, and a single node is itself a $0$-length path.
A path $P$ is \emph{simple} if no node repeats in the path (i.e., it does not
contain a cycle). 
Given a path $P=(x_1,\dots , x_k)$, the weight of $P$ is $\otimes(P)=\bigotimes(\Weight(x_i,x_{i+1}))_i$ if 
$|P|\geq 1$ else $\otimes(P)=\One$. Given nodes $u,v\in V$, the {\em semiring distance} $\Distance(u,v)$
is defined as $\Distance(u,v)=\bigoplus_{P:u\Path v}\otimes(P)$, and $\Distance(u,v)=\Zero$
if no such $P$ exists.

\smallskip\noindent{\bf Trees.}
A \emph{tree} $T=(V,E)$ is an undirected graph with a root node $u_0$, 
such that between every two nodes there is a unique simple path. 
For a node $u$ we denote by $\Level(u)$ the \emph{level} of $u$ which is 
defined as the length of the simple path from $u_0$ to $u$.
A \emph{child} of a node $u$ is a node $v$ such that $\Level(v)= \Level(u) +1$
and $(u,v) \in E$, and then $u$ is the \emph{parent} of $v$.
For a node $u$, any node (including $u$ itself) that appears in the path 
from $u_0$ to $u$ is an \emph{ancestor} of $u$, and if $v$ is an ancestor
of $u$, then $u$ is a \emph{descendant} of $v$. 
Given two nodes $u,v$, the \emph{lowest common ancestor (LCA)} is the common 
ancestor of $u$ and $v$ with the highest level.
Given a tree $T$, a \emph{contiguous subtree} is subgraph $(X,E')$ of $T$ 
such that $E'=E \cap (X\times X)$ and for every pair $u,v \in X$, 
every node that appears in the unique path from $u$ to $v$ belongs to $X$.
A tree is {\em $k$-ary} if every node has at most $k$-children 
(e.g., a binary tree has at most two children for every node).
In a {\em full $k$-ary} tree, every node that $0$ or $k$-children.

\smallskip\noindent{\bf Tree decompositions.}
A \emph{tree-decomposition} $\Tree(G)=T=(V_T, E_T)$ of a graph $G$ is a tree,
where every node $B_i$ in $T$ is a subset of nodes of $G$ such that  
the following conditions hold:
\begin{compactenum}
\item[C1] $V_T=\{\Bag_0,\dots, \Bag_{b}\}$ with $\Bag_i\subseteq V$, and $\bigcup_{\Bag_i\in V_T}\Bag_i=V$ (every node is covered).
\item[C2] For all $(u,v)\in E$ there exists $\Bag_i\in V_T$ such that $u,v\in \Bag_i$ (every edge is covered).
\item[C3] For all $i,j,k$ such that there is a bag $\Bag_k$ that appears in the simple path $\Bag_i\Path \Bag_j$ in $\Tree(G)$, 
we have $\Bag_i\cap \Bag_j\subseteq \Bag_k$ (every node appears in a contiguous subtree of $T$).
\end{compactenum}
The sets $\Bag_i$ which are nodes in $V_T$ are called \emph{bags}.
We denote by $|T|=|V_T|$ the number of bags in $T$.
Conventionally, we call $\Bag_0$ the root of $T$, and denote by $\Level(\Bag_i)$
the level of $\Bag_i$ in $\Tree(G)$. 
For a bag $\Bag$ of $T$, we denote by $T(\Bag)$ the subtree of $T$ rooted at $\Bag$.
A bag $\Bag$ is called the \emph{root bag} of a node $u$ if $u\in \Bag$
and every $\Bag'$ that contains $u$ appears in $T(\Bag)$.
We often use $\Bag_u$ to refer to the root bag of $u$,
and define $\Level(u)=\Level(\Bag_u)$.
Given a bag $\Bag$, we denote by
\begin{compactenum}
\item $V_T(\Bag)$ the nodes of $G$ that appear in bags in $T(\Bag)$,
\item $\AncestV{V}_T(\Bag)$ the nodes of $G$ that appear in $\Bag$ and its ancestors in $T$.
\end{compactenum}
The \emph{width} of the tree-decomposition $T$ is the size of the largest bag minus $1$.
The treewidth $t$ of $G$ is the smallest width among the widths of all tree decompositions of $G$.
Note that if $T$ achieves the treewidth of $G$, we have
$|V_T(\Bag)|\leq (t+1)\cdot |T(\Bag)|$.
Given a graph $G$ with treewidth $t$ and a fixed $\alpha\in \Nats$,
a tree-decomposition $\Tree(G)$ is called \emph{$\alpha$-approximate}
if it has width at most $\alpha\cdot (t+1)-1$.
\cref{fig:G} illustrates the above definitions on a small example.

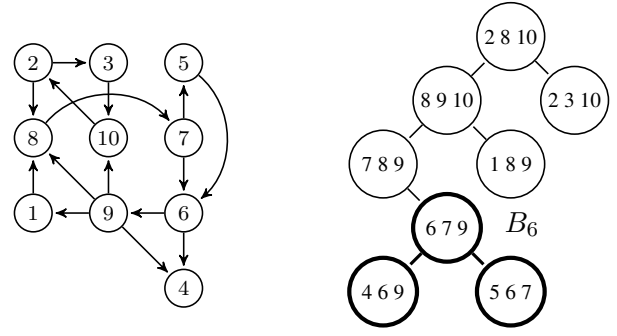
\begin{figure}
\newcommand{\distone}{5cm*0.2}
\centering
\begin{tikzpicture}[->,>=stealth',shorten >=1pt,auto,node distance=\distone,
                    semithick,scale=1 ]
      
\tikzstyle{every state}=[fill=white,draw=black,text=black,font=\small , inner sep=0.05cm, minimum size=0.5cm]
\tikzstyle{invis}=[fill=white,draw=white,text=white,font=\small , inner sep=-0.05cm]

\node[state] (v1) at (0,0.5) {$1$};
\node[state,above of=v1] (v8) {$8$};
\node[state,right of=v1] (v9) {$9$};
\node[state,above of=v8] (v2) {$2$};
\node[state,above of=v9] (v10) {$10$};
\node[state,above of=v10] (v3) {$3$};
\node[state,right of=v9] (v6) {$6$};
\node[state,below of=v6] (v4) {$4$};
\node[state,above of=v6] (v7) {$7$};
\node[state,above of=v7] (v5) {$5$};

\path (v1) edge (v8)
(v8) edge[bend left=50] (v7)
(v9) edge (v8) edge (v1) edge (v4) edge (v10)
(v2) edge (v3) edge (v8)
(v10) edge (v2)
(v3) edge (v10)
(v6) edge (v9) edge (v4)
(v7) edge (v5) edge (v6)
(v5) edge[bend left=50] (v6);


\tikzstyle{every state}=[fill=white,draw=black,text=black,font=\small , inner sep=0.05cm, minimum size=0.9cm]
\renewcommand{\distone}{4cm*0.3}

\node[state,] (v8910) at (5.5,2) {8 9 10};
\node[state,below right of=v8910] (v189) {1 8 9};
\node[state,above right of=v8910] (v2810) {2 8 10};
\node[state,below right of=v2810] (v2310) {2 3 10};
\node[state,below left of=v8910] (v789) {7 8 9};
\node[state,  ultra thick,below right of=v789] (v679) {6 7 9};
\node[state, ultra thick, below left of=v679] (v469) {4 6 9};
\node[state,ultra thick, below right of=v679] (v567) {5 6 7};

\node[] at (6.5,0.35) {\large $\Bag_6$};

\path[-] (v189) edge (v8910)
(v8910) edge (v2810) edge (v789)
(v2810) edge (v2310)
(v789) edge (v679)
(v679) edge[very thick] (v469) edge[very thick] (v567);

\end{tikzpicture}
\caption{A graph $G$ with treewidth $2$ (left) and a corresponding tree-decomposition $T=\Tree(G)$ of $8$ bags and width $2$ (right).
The distinguished bag $\Bag_6$ is the root bag of node $6$. 
We have $V_T(\Bag_6)=\{6,7,9,4,5\}$ and $\AncestV{V}_T(\Bag_6)=\{6,7,9,8,10,2\}$.
The subtree $T(\Bag_6)$ is shown in bold.
}\label{fig:G}
\end{figure}

\subsection{Concurrent graphs}

\noindent{\bf Product graphs.}
A graph $G_{\mathsf{p}}=(V_{\mathsf{p}},E_{\mathsf{p}})$ is said to be the {\em product graph} of $k$ graphs $(G_i=(V_i, E_i))_{1\leq i\leq k}$
if $V_{\mathsf{p}}=\prod_i V_i$ and $E_{\mathsf{p}}$ is such that for all $u,v\in V_{\mathsf{p}}$ 
with $u=\Prod{u_i}_{1\leq i\leq k}$ and $v=\Prod{v_i}_{1\leq i\leq k}$,
we have $(u,v)\in E_{\mathsf{p}}$ iff there exists a set $\mathcal{I}\subseteq [k]$ such that
(i)~$(u_i, v_i)\in E_i$ for all $i\in \mathcal{I}$, and  (ii)~$u_{i}=v_{i}$ for all  $i \not\in \mathcal{I}$.
In words, an edge $(u,v)\in E_{\mathsf{p}}$ is formed in the product graph
by traversing a set of edges $\{(u_i,v_i)\in E_i\}_{i\in \mathcal{I}}$ in some component graphs $\{G_i\}_{i\in \mathcal{I}}$,
and traversing no edges in the remaining $\{G_{i}\}_{i\not\in \mathcal{I}}$.
We say that $G_{\mathsf{p}}$ is the $k$-{\em self-product} of a graph $G'$ if $G_i=G'$ for all $1\leq i\leq k$.

\smallskip\noindent{\bf Concurrent graphs.} 
A graph $G=(V,E)$ is called a  \emph{concurrent graph} of $k$ graphs $(G_i=(V_i, E_i))_{1\leq i\leq k}$
if $V=V_{\mathsf{p}}$ and $E\subseteq E_{\mathsf{p}}$, where $G_{\mathsf{p}}=(V_{\mathsf{p}}, E_{\mathsf{p}})$
is the product graph of $(G_i)_{i}$.
Given a concurrent graph $G=(V,E)$ and a node $u\in V$, we will denote by $u_i$ the $i$-th constituent of $u$.
We say that $G$ is a $k$-{\em self-concurrent} of a graph $G'$ if $G_{\mathsf{p}}$ is the $k$-{\em self-product} of $G'$.

\smallskip\noindent{\bf Various notions of composition.}
The framework we consider is quite general as it captures various different notions of concurrent composition.
Indeed, the edge set of the concurrent graph is any possible subset of the edge set of the corresponding product graph.
Then, two  well-known composition notions can be modeled as follows.
For any edge $(u,v)\in E$ of the concurrent graph $G$, let $\mathcal{I}_{u,v}=\{i\in [k]: (u_i,v_i)\in E_i\}$ denote the components that execute a 
transition in $(u,v)$.
\begin{compactenum}
\item 
In \emph{synchronous composition} at every step all components make one move 
each simultaneously.
This is captured by $\mathcal{I}_{u,v}=[k]$ for all  $(u,v)\in E$.
\item 
In \emph{asynchronous composition} at every step only one component makes a 
move. 
This is captured by $|\mathcal{I}_{u,v}|=1$ for all $(u,v)\in E$.
\end{compactenum}
Thus the framework we consider is not specific to any particular notion of composition,
and all our results apply to various different notions of concurrent composition that exist in the literature.

\newcommand{\Fork}{\mathsf{fork}}
\newcommand{\Knife}{\mathsf{knife}}
\newcommand{\Attain}{\mathsf{acquire}}
\newcommand{\Release}{\mathsf{release}}
\newcommand{\Lock}{\mathsf{lock}}
\newcommand{\Unlock}{\mathsf{unlock}}
\newcommand{\Mine}{\mathsf{mine}}
\newcommand{\Dine}{\mathsf{dine}}
\newcommand{\Sleep}{\mathsf{discuss}}

\begin{figure*}
\renewcommand{\thealgocf}{}
\removelatexerror
\centering
\begin{tikzpicture}[thick, >=latex,
pre/.style={<-,shorten >= 1pt, shorten <=1pt, thick},
post/.style={->,shorten >= 1pt, shorten <=1pt,  thick},
und/.style={very thick, draw=gray},
bag/.style={ellipse, minimum height=7mm,minimum width=14mm,draw=gray!80, line width=1pt, inner sep=1},
internal/.style={circle,draw=black!80, inner sep=1, minimum size=4.5mm, very thick},
entry/.style={isosceles triangle, shape border rotate=-90, isosceles triangle stretches, minimum width=8mm, minimum height=3.6mm, draw=black!80, inner sep=0},
exit/.style={isosceles triangle, shape border rotate=90, isosceles triangle stretches, minimum width=8mm, minimum height=3.6mm, draw=black!80, inner sep=0},
call/.style={isosceles triangle, shape border rotate=0, isosceles triangle stretches, minimum width=8mm, minimum height=3.6mm, draw=black!80, inner sep=0},
return/.style={isosceles triangle, shape border rotate=180, isosceles triangle stretches, minimum width=8mm, minimum height=3.6mm, draw=black!80, inner sep=0},
virt/.style={circle,draw=black!50,fill=black!20, opacity=0}]

\node	[text width = 5.3cm] at (0,0)	{%

\begin{algorithm}[H]
\small
\SetInd{0.6em}{0.0em}
\SetAlgoNoLine
\SetAlgorithmName{Method}{method}
\DontPrintSemicolon
\caption{$\Philosophers$}\label{algo:philosophers}
\While{$\True$}{
\While{$\Fork$ not $\Mine$ or $\Knife$ not $\Mine$}{
\If{$\Fork$ is free}{
$\Lock(\ell)$\\
$\Attain(\Fork)$\\
$\Unlock(\ell)$\\
}
\If{$\Knife$ is free}{
$\Lock(\ell)$\\
$\Attain(\Knife)$\\
$\Unlock(\ell)$\\
}
}
$\Dine(\Fork, \Knife)$\tcp{for some time}
$\Lock(\ell)$\\
$\Release(\Fork)$\\
$\Release(\Knife)$\\
$\Unlock(\ell)$\\
$\Sleep()$\tcp{for some time}
}

\end{algorithm}
};

\newcommand{\globalxdisposition}{4.75}
\newcommand{\globalydisposition}{3.4}
\newcommand{\ynodestep}{-0.8}
\newcommand{\xnodestep}{0.5}

\node	[internal]		(x1)	at	(\globalxdisposition+0*\xnodestep,\globalydisposition+0*\ynodestep)		{$1$};
\node	[internal]		(x20)	at	(\globalxdisposition+3*\xnodestep,\globalydisposition+0*\ynodestep)		{$20$};
\node	[internal]		(x2)	at	(\globalxdisposition+1*\xnodestep,\globalydisposition+1*\ynodestep)		{$2$};
\node	[internal]		(x13)	at	(\globalxdisposition-1*\xnodestep,\globalydisposition+2*\ynodestep)		{$13$};
\node	[internal]		(x14)	at	(\globalxdisposition-1*\xnodestep,\globalydisposition+3*\ynodestep)		{$14$};
\node	[internal]		(x15)	at	(\globalxdisposition-1*\xnodestep,\globalydisposition+4*\ynodestep)		{$15$};
\node	[internal]		(x16)	at	(\globalxdisposition-1*\xnodestep,\globalydisposition+5*\ynodestep)		{$16$};
\node	[internal]		(x17)	at	(\globalxdisposition-1*\xnodestep,\globalydisposition+6*\ynodestep)		{$17$};
\node	[internal]		(x18)	at	(\globalxdisposition-1*\xnodestep,\globalydisposition+7*\ynodestep)		{$18$};
\node	[internal]		(x19)	at	(\globalxdisposition-1*\xnodestep,\globalydisposition+8*\ynodestep)		{$19$};

\node	[internal]		(x3)	at	(\globalxdisposition+2*\xnodestep,\globalydisposition+2*\ynodestep)		{$3$};
\node	[internal]		(x4)	at	(\globalxdisposition+2*\xnodestep,\globalydisposition+3*\ynodestep)		{$4$};
\node	[internal]		(x5)	at	(\globalxdisposition+2*\xnodestep,\globalydisposition+4*\ynodestep)		{$5$};
\node	[internal]		(x6)	at	(\globalxdisposition+2*\xnodestep,\globalydisposition+5*\ynodestep)		{$6$};
\node	[internal]		(x7)	at	(\globalxdisposition+2*\xnodestep,\globalydisposition+6*\ynodestep)		{$7$};

\node	[internal]		(x8)	at	(\globalxdisposition+5*\xnodestep,\globalydisposition+2*\ynodestep)		{$8$};
\node	[internal]		(x9)	at	(\globalxdisposition+5*\xnodestep,\globalydisposition+3*\ynodestep)		{$9$};
\node	[internal]		(x10)	at	(\globalxdisposition+5*\xnodestep,\globalydisposition+4*\ynodestep)		{$10$};
\node	[internal]		(x11)	at	(\globalxdisposition+5*\xnodestep,\globalydisposition+5*\ynodestep)		{$11$};
\node	[internal]		(x12)	at	(\globalxdisposition+5*\xnodestep,\globalydisposition+6*\ynodestep)		{$12$};

\draw [->, thick] (x1) to 	(x2);

\draw [->, thick] (x2) to 	(x13);
\draw [->, thick] (x2) to 	(x3);

\draw [->, thick] (x13) to 	(x14);
\draw [->, thick] (x14) to 	(x15);
\draw [->, thick] (x15) to 	(x16);
\draw [->, thick] (x16) to 	(x17);
\draw [->, thick] (x17) to 	(x18);
\draw [->, thick] (x18) to 	(x19);
\draw [->, thick] (x1) to 	(x20);

\draw [->, thick] (x3) to 	(x4);
\draw [->, thick] (x4) to 	(x5);
\draw [->, thick] (x5) to 	(x6);
\draw [->, thick] (x6) to 	(x7);

\draw [->, thick] (x8) to 	(x9);
\draw [->, thick] (x9) to 	(x10);
\draw [->, thick] (x10) to 	(x11);
\draw [->, thick] (x11) to 	(x12);

\draw [->, thick] (x7) .. controls ([xshift=1cm] x7) and ([xshift=-1cm] x8) .. (x8);
\draw [->, thick] (x12) .. controls ([xshift=-1cm] x12) and ([xshift=1.5cm] x2) .. (x2);
\draw [->, thick] (x19) .. controls ([xshift=-1cm] x19) and ([xshift=-1.5cm] x1) .. (x1);
\draw [->, thick] (x3) .. controls ([xshift=-0.7cm] x3) and ([xshift=-0.7cm] x7) .. (x7);
\draw [->, thick] (x8) .. controls ([xshift=0.7cm] x8) and ([xshift=0.7cm] x12) .. (x12);

\renewcommand{\globalxdisposition}{10.5}
\renewcommand{\globalydisposition}{3.3}
\renewcommand{\ynodestep}{-0.9}
\renewcommand{\xnodestep}{0.6}

\node	[bag]		(b1)	at	(\globalxdisposition + 1*\xnodestep,\globalydisposition+0*\ynodestep)		{$1, 19, 20$};
\node	[bag]		(b2)	at	(\globalxdisposition + 1*\xnodestep,\globalydisposition+1*\ynodestep)		{$1, 2, 19$};
\node	[bag]		(b13)	at	(\globalxdisposition + -1*\xnodestep,\globalydisposition+2*\ynodestep)		{$2, 13, 19$};
\node	[bag]		(b14)	at	(\globalxdisposition + -1*\xnodestep,\globalydisposition+3*\ynodestep)		{$13, 14, 19$};
\node	[bag]		(b15)	at	(\globalxdisposition + -1*\xnodestep,\globalydisposition+4*\ynodestep)		{$14, 15, 19$};
\node	[bag]		(b16)	at	(\globalxdisposition + -1*\xnodestep,\globalydisposition+5*\ynodestep)		{$15, 16, 19$};
\node	[bag]		(b17)	at	(\globalxdisposition + -1*\xnodestep,\globalydisposition+6*\ynodestep)		{$16, 17, 19$};
\node	[bag]		(b18)	at	(\globalxdisposition + -1*\xnodestep,\globalydisposition+7*\ynodestep)		{$17, 18, 19$};

\node	[bag]		(b7)	at	(\globalxdisposition + 3*\xnodestep,\globalydisposition+2*\ynodestep)		{$2, 7, 12$};

\node	[bag]		(b3)	at	(\globalxdisposition + 2*\xnodestep,\globalydisposition+3*\ynodestep)		{$2, 3, 7$};
\node	[bag]		(b4)	at	(\globalxdisposition + 2*\xnodestep,\globalydisposition+4*\ynodestep)		{$3, 4, 7$};
\node	[bag]		(b5)	at	(\globalxdisposition + 2*\xnodestep,\globalydisposition+5*\ynodestep)		{$4, 5, 7$};
\node	[bag]		(b6)	at	(\globalxdisposition + 2*\xnodestep,\globalydisposition+6*\ynodestep)		{$5, 6, 7$};

\node	[bag]		(b8)	at	(\globalxdisposition + 5*\xnodestep,\globalydisposition+3*\ynodestep)		{$7, 8, 12$};
\node	[bag]		(b9)	at	(\globalxdisposition + 5*\xnodestep,\globalydisposition+4*\ynodestep)		{$8, 9, 12$};
\node	[bag]		(b10)	at	(\globalxdisposition + 5*\xnodestep,\globalydisposition+5*\ynodestep)	   {$9, 10, 12$};
\node	[bag]		(b11)	at	(\globalxdisposition + 5*\xnodestep,\globalydisposition+6*\ynodestep)	   {$10, 11, 12$};

\draw [-, very thick] (b1) to 	(b2);
\draw [-, very thick] (b2) to 	(b13);
\draw [-, very thick] (b13) to 	(b14);
\draw [-, very thick] (b14) to 	(b15);
\draw [-, very thick] (b15) to 	(b16);
\draw [-, very thick] (b16) to 	(b17);
\draw [-, very thick] (b17) to 	(b18);

\draw [-, very thick] (b2) to 	(b7);
\draw [-, very thick] (b7) to 	(b8);
\draw [-, very thick] (b8) to 	(b9);
\draw [-, very thick] (b9) to 	(b10);
\draw [-, very thick] (b10) to 	(b11);

\draw [-, very thick] (b7) to 	(b3);
\draw [-, very thick] (b3) to 	(b4);
\draw [-, very thick] (b4) to 	(b5);
\draw [-, very thick] (b5) to 	(b6);

\end{tikzpicture}
\caption{A concurrent program (left), its controlflow graph (middle), and a tree decomposition of the controlflow graph (right).}\label{fig:fig_example}
\end{figure*}

\smallskip\noindent{\bf Partial nodes of concurrent graphs.}
A {\em partial node} $\PNode{u}$ of a concurrent graph $G$ is an element of $\prod_i(V_i\cup \{\bot\})$,
where $\bot \not\in \bigcup_i V_i$.
Intuitively, $\bot$ is a fresh symbol to denote that a component is unspecified.
A partial node $\PNode{u}$ is said to {\em refine} a partial node $\PNode{v}$,
denoted by $\PNode{u}\Refines \PNode{v}$ if for all $1\leq i\leq k$ either $\PNode{v}_i=\bot $ or $\PNode{v}_i=\PNode{u}_i$.
We say that the partial node $\PNode{u}$ \emph{strictly refines} $\PNode{v}$, denoted by $\PNode{u}\Strictrefines \PNode{v}$,
if $\PNode{u}\Refines \PNode{v}$ and $\PNode{u}\neq \PNode{v}$
(i.e., for at least one constituent $i$ we have $\PNode{v}_i=\bot$ but $\PNode{u}_i\neq \bot$).
A partial node $\PNode{u}$ is called \emph{strictly partial} if it is strictly refined by some node $u\in V$
(i.e., $\PNode{u}$ has at least one $\bot$).
The notion of semiring distances is extended to partial nodes, 
and for partial nodes $\PNode{u}, \PNode{v}$ of $G$ we define the semiring distance from $\PNode{u}$ to $\PNode{v}$ as
\[
\Distance(\PNode{u}, \PNode{v}) = \bigoplus_{u\Refines \PNode{u}, v\Refines \PNode{v}}\Distance(u,v)
\]
where $u,v\in V$.
In the sequel, a partial node $\PNode{u}$ will be either (i)~a node of $V$, or (ii)~a strictly partial node.
We refer to nodes of the first case as actual nodes, and write $u$ (i.e., without the bar).
Distances where one endpoint is a strictly partial node $\PNode{u}$ succinctly quantify
over all nodes of all the components for which the corresponding constituent of $\PNode{u}$ is $\bot$.
Observe that the distance still depends on the unspecified components.

\smallskip\noindent{\bf The algebraic paths problem on concurrent graphs of constant-treewidth components.}
In this work we are interested in the following problem.
Let $G=(V,E)$ be a concurrent graph of $k\geq 2$ constant-treewidth graphs $(G_i=(V_i, E_i))_{1\leq i\leq k}$,
and $\Weight:E\to \Sigma$ be a weight function that assigns to every edge of $G$
a weight from a set $\Sigma$ that forms a complete, closed semiring 
$S=(\Sigma, \oplus, \otimes, \Zero, \One)$.
The {\em algebraic path problem} on $G$ asks the following types of queries:
\begin{compactenum}
\item {\em Single-source query.} Given a partial node $\PNode{u}$ of $G$, return the distance $\Distance(\PNode{u},v)$ 
to every node $v\in V$. When the partial node $\PNode{u}$ is  an actual node of $G$, we have a traditional single-source query.
\item {\em Pair query.} Given two nodes $u,v\in V$, return the distance $\Distance(u,v)$.
\item {\em Partial pair query.} Given two partial nodes $\PNode{u}, \PNode{v}$ of $G$ where at least one is strictly partial,
return the distance $\Distance(\PNode{u}, \PNode{v})$.
\end{compactenum}

\cref{fig:fig_example} presents the notions introduced in this section on a toy example on the dining philosophers problem.
In \cref{sec:example} we discuss the rich modeling capabilities of the algebraic paths framework,
and illustrate the importance of pair and partial pair queries in the analysis of the dining philosophers program.

\smallskip\noindent{\bf Input parameters.}
For technical convenience, we consider a uniform upper bound $n$ on the number of nodes of each $G_i$ (i.e. $|V_i|\leq n$).
Similarly, we let $t$ be an upper bound on the treewidth of each $G_i$.
The number $k$ is taken to be fixed and independent of $n$.
The input of the problem consists of the graphs $(G_i)_{1\leq i\leq k}$, together with some 
representation of the edge relation $E$ of $G$. 

\smallskip\noindent{\bf Complexity measures.}
The complexity of our algorithms is measured as a function of $n$.
In particular, we ignore the size of the representation of $E$ when considering 
the size of the input.
This has the advantage of obtaining complexity bounds that are independent of the representation 
of $E$, which can be represented implicitly (such as synchronous or asynchronous composition) 
or explicitly, depending on the modeling of the problem under consideration. 
The time complexity of our algorithms is measured in number of operations,
with each operation being either a basic machine operation,
or an application of one of the operations of the semiring.

\section{Strongly Balanced Tree Decompositions}\label{sec:prelim_tree_dec}

In this section we introduce the notion of strongly balanced tree 
decompositions, and present an algorithm for computing them efficiently on 
constant-treewidth graphs.
Informally, a strongly balanced tree-decomposition is a binary tree-decomposition in which the number of descendants of each bag is typically approximately half of that of its parent.
The following sections make use of this construction.

\smallskip\noindent{\bf Strongly balanced tree decompositions.}
Given a binary tree-decomposition $T$ and constants $0<\beta < 1$, $\gamma\in \Nats^+$,
a bag $\Bag$ of $T$ is called \emph{$(\beta, \gamma)$-balanced} 
if for every descendant $\Bag_i$ of $\Bag$ with $\Level(\Bag_i)-\Level(\Bag)=\gamma$,
we have $|T(\Bag_i)|\leq \beta \cdot |T(\Bag)|$,
i.e., the number of bags in $T(\Bag_i)$ is at most a $\beta$-fraction of those in $T(\Bag)$.
A tree-decomposition $T$ is called a $(\beta, \gamma)$ tree-decomposition if
every bag of $T$ is $(\beta, \gamma)$-balanced.
A $(\beta, \gamma)$ tree-decomposition that is $\alpha$-approximate is called an
$(\alpha, \beta, \gamma)$ tree-decomposition.
The following theorem is central to the results obtained in this paper.
The proof is technical and presented in \cref{sec:tree_decomp},
and here we provide a sketch of the algorithm for obtaining it.

\smallskip
\begin{restatable}{theorem}{balancedtreedec}\label{them:tree_decomp}
For every graph $G$ with $n$ nodes and constant treewidth, 
for any fixed $\delta>0$ and $\lambda\in \Nats$ with $\lambda\geq 2$,
let $\alpha=\Bagfactor$, $\beta=\Balfactor$, and $\gamma=\Levelfactor$.
A binary $(\alpha,\beta,\gamma)$ tree-decomposition $\Tree(G)$ with $O(n)$ bags
can be constructed in $O(n\cdot \log n)$ time and  $O(n)$ space.
\end{restatable}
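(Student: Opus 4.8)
The plan is to first obtain a cheap tree decomposition and then repeatedly re-root it around carefully chosen separators until it becomes strongly balanced. First I would invoke a standard linear-time algorithm for constant-treewidth graphs to obtain a tree decomposition of $G$ of width $O(t)$ with $O(n)$ bags, and then binarize it in the usual way (replacing each high-degree bag by a binary subtree of copies of it, which does not increase the width), yielding a \emph{binary} $T_0=\Tree(G)$ of width $O(t)$ with $O(n)$ bags. Every later step only rearranges, clones, or mildly enlarges the bags of $T_0$, so the final decomposition automatically has $O(n)$ bags; the work left is to rebalance $T_0$ while keeping the width and the tree-decomposition axioms under control.

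The core is a recursive procedure $\Balancealgo(T',U)$ that takes a \emph{piece} $T'$ --- a contiguous subtree of $T_0$ with $m$ bags --- together with an ``interface'' vertex set $U$ that must be pinned at the root of the returned decomposition, and returns a binary decomposition of the subgraph spanned by the vertices occurring in $T'$ together with $U$. Using precomputed subtree sizes, $\Balancealgo$ selects a \emph{connected} separator $\mathcal S$ of $O(\Levelfactor/\delta)$ bags of $T'$ whose removal splits $T'$ into sub-pieces $C_1,\dots,C_d$ each with at most $\tfrac12 m$ bags (a single centroid bag already guarantees the $\tfrac12 m$ bound; taking a slightly larger connected $\mathcal S$ leaves extra room for the balanced layout below). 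It sets the root super-bag $R:=\big(\bigcup\mathcal S\big)\cup U$, orders the $C_j$ by decreasing size, recurses with $\Balancealgo(C_j,U_j)$ where $U_j$ is the set of vertices common to $R$ and $C_j$, and then attaches the returned subtrees below $R$ using $O(|\mathcal S|)$ \emph{clones} of $R$, both to keep the tree binary and to lay out the subtrees so that subtree sizes decay. The crucial observation is that $U_j$ lies inside the single bag of $C_j$ adjacent to $\mathcal S$, so $|U_j|=O(t)$: the interface never accumulates, and hence $|R|\le(|\mathcal S|+1)\cdot O(t)\le \alpha(t+1)$, making the output $\alpha$-approximate with $\alpha=\Bagfactor$.

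Correctness reduces to four invariants, proved by induction on $m$. Axioms C1 and C2 are immediate, since cloning and enlarging bags only widens coverage and $\mathcal S$ together with the $C_j$ covers all vertices and edges of $T'$. Axiom C3 holds because (i) the clones of $R$ share its vertex set and form a connected caterpillar, and (ii) the only vertices shared between $R$ and a sub-piece $C_j$ are exactly $U_j$, which is pinned at the root of $C_j$'s returned decomposition and, by the induction hypothesis, remains on a contiguous downward subtree. Binariness is enforced by the cloning. For $(\beta,\gamma)$-balance: each ``separator round'' multiplies the piece size by $\le\tfrac12$ while inserting $O(\Levelfactor/\delta)$ clone levels, and by ordering the $C_j$ by decreasing size (bundling the tiny ones) one arranges that in any window of $\gamma=\Levelfactor$ consecutive levels at most one level is ``imperfect'' (ratio near $1$) while the others have ratio $\le\tfrac12$; the product over the window is then $\le(\tfrac12)^{\gamma-1}\le\big(\tfrac{1+\delta}{2}\big)^{\gamma-1}=\beta=\Balfactor$. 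Here $\delta$ is exactly the slack that pays for the one imperfect level per window, and $\Levelfactor$ sets how widely the imperfect levels must be spread.

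For the resources: each call on a piece of size $m$ emits $O(\Levelfactor/\delta)=O(1)$ new bags and recurses on pieces of total size $<m$, so $O(n)$ bags overall; the recursion depth is $O\!\big(\tfrac{\Levelfactor}{\delta}\log_{2/(1+\delta)}n\big)=O(\log n)$ for fixed $\delta$ and $\Levelfactor$, and at each depth the active pieces are disjoint while the per-piece work (locating $\mathcal S$, recomputing subtree sizes for the $C_j$) is linear, so the time is $O(n\log n)$; running the recursion depth-first keeps only a single root-to-node chain of pieces in memory, giving $O(n)$ space. I expect the main obstacle to be the $(\beta,\gamma)$-balance invariant: choosing $\mathcal S$ and laying out its clones and the ordered sub-pieces so that the per-window decay holds simultaneously at \emph{every} bag --- reconciling ``$\mathcal S$ small enough to meet the width bound $\alpha=\Bagfactor$'' with ``the local structure it induces balanced enough for $\beta=\Balfactor$ over $\gamma=\Levelfactor$ levels'' --- is the delicate quantitative core, and is precisely where the stated values of $\alpha$, $\beta$, $\gamma$ originate.
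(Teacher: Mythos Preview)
Your width argument is fine: because each $C_j$ is a connected subtree of $T_0$, any vertex of $R=(\bigcup\mathcal S)\cup U$ that also appears in $C_j$ must, by C3, lie in the single boundary bag of $C_j$, so indeed $|U_j|=O(t)$ and the interface never accumulates. The gap is in the $(\beta,\gamma)$-balance. You write that one separator round ``multiplies the piece size by $\le\tfrac12$ while inserting $O(\lambda/\delta)$ clone levels,'' and then claim that in every $\lambda$-window all but one level has ratio $\le\tfrac12$. These two statements are incompatible: with $\Theta(\lambda/\delta)$ clone levels per halving, a window of $\gamma=\lambda$ consecutive levels sees only $O(\delta)$ halvings, giving a shrink factor of roughly $(1/2)^{O(\delta)}\approx 1-O(\delta)$, not $\beta=((1+\delta)/2)^{\lambda-1}$. ``Ordering by decreasing size'' or ``bundling the tiny ones'' does not help as long as you recurse on the individual connected $C_j$: whatever the clone layout, the total number of bags below $R$ only drops by $\sum_j|C_j|$ across all $\Theta(\lambda/\delta)$ clone levels, so most clone levels necessarily have ratio close to $1$. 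If instead you recurse on \emph{bundles} of $C_j$'s (so that each level really halves), the bundles are no longer contiguous in $T_0$ and your one-boundary-bag interface argument collapses.

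This tension is exactly what the paper's construction resolves, and it is the idea missing from your sketch. The paper does \emph{not} keep the pieces connected. At $\lambda-1$ out of every $\lambda$ levels it removes $2/\delta$ separator bags, obtaining many tiny pieces of size $\le\tfrac{\delta}{2}|\Comp|$, and then \emph{merges} them into exactly two (generally disconnected) groups $\overline{\Comp}_1,\overline{\Comp}_2$ of size $\le\tfrac{1+\delta}{2}|\Comp|$; this is what makes every such level contribute a clean $\tfrac{1+\delta}{2}$ factor, with no clone levels at all. The price is that the ``interface'' (the neighborhood $\Nhood(\Comp)$ in $T_0$) now grows by up to $2/\delta$ per level. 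To keep it bounded, the remaining $1$ out of every $\lambda$ levels uses a \emph{single} separator bag chosen to halve $|\Nhood(\Comp)|$; that level gives no size guarantee, which is precisely why $\beta=((1+\delta)/2)^{\lambda-1}$ has exponent $\lambda-1$ rather than $\lambda$. A short induction then shows $|\Nhood(\Comp)|\le 2\cdot(2/\delta)\cdot\lambda-1$ throughout, and replacing each bag $\RBag$ by $\RBag\cup\bigcup_{\Bag\in\Nhood(\RBag)}\Bag$ yields a valid tree decomposition of width $\le\alpha(t+1)$ with $\alpha=4\lambda/\delta$. In short, the specific values $\alpha=4\lambda/\delta$, $\beta=((1+\delta)/2)^{\lambda-1}$, $\gamma=\lambda$ are calibrated to this merge-plus-alternation scheme, not to a connected-separator-plus-clones scheme.
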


\smallskip\noindent{\bf Sketch of \cref{them:tree_decomp}.}
The construction of \cref{them:tree_decomp} considers that a 
tree-decomposition $\Tree'(G)$ of width $t$ and $O(n)$ bags 
is given (which can be obtained using e.g.~\cite{Bodlaender96} in $O(n)$ time). 
Given two parameters $\delta>0$ and $\lambda\in \Nats$ with $\lambda\geq 2$,
$\Tree'(G)$ is turned to an $(\alpha, \beta, \gamma)$ tree-decomposition,
for $\alpha=\Bagfactor$, $\beta=\Balfactor$, and $\gamma=\Levelfactor$,
 in two conceptual steps.
\begin{compactenum}
\item A tree of bags $\Ranktree$ is constructed, which is $(\beta, \gamma)$-balanced.
\item $\Ranktree$ is turned to an $\alpha$-approximate tree 
decomposition of $G$.
\end{compactenum}
The first construction is obtained by a recursive algorithm $\Rankalgo$, which operates on inputs
$(\Comp, \ell)$, where $\Comp$ is a 
component of $\Tree'(G)$,
and $\ell\in [\Levelfactor]$ specifies the type of operation the algorithm performs on $\Comp$.
Given such a component $\Comp$, we denote by $\Nhood(\Comp)$ the \emph{neighborhood}
of $\Comp$, defined as the set of bags of $\Tree'(G)$ that are incident to $\Comp$.
Informally, on input $(\Comp, \ell)$, the algorithm partitions $\Comp$ into two sub-components $\ov{\Comp}_1$ and $\ov{\Comp}_2$
such that either 
(i)~the size of each $\ov{\Comp}_i$  is approximately half the size of $\Comp$, or
(ii)~the size of the neighborhood of each $\ov{\Comp}_i$ is approximately half the size of the neighborhood of $\Comp$.
More specifically, 
\begin{compactenum}
\item If $\ell>0$, then $\Comp$ is partitioned into components $\Complist=(\Comp_1,\dots, \Comp_{r})$,
by removing a list of bags $\Seplist=(\Bag_1,\dots \Bag_m)$, such that $|\Comp_i|\leq\frac{\delta}{2}\cdot |\Comp|$.
The union of $\Seplist$ yields a new bag $\RBag$ in $\Ranktree$.
Then $\Complist$ is merged into two components $\ov{\Comp}_1$, $\ov{\Comp}_2$ with $|\ov{\Comp}_1|\leq |\ov{\Comp}_2|\leq \frac{1+\delta}{2}\cdot |\Comp|$. 
Finally, each $\ov{\Comp}_i$ is passed on to the next recursive step with $\ell=(\ell+1)\mod \Levelfactor$.

\item If $\ell=0$, then $\Comp$ is partitioned into two components 
$\ov{\Comp}_1, \ov{\Comp}_{2}$
such that $|\Nhood(\ov{\Comp}_i)\cap \Nhood(\Comp)|\leq \frac{|\Nhood(\Comp)|}{2}$ by removing a single bag $\Bag$.
This bag becomes a new bag $\RBag$ in $\Ranktree$, and each $\ov{\Comp}_i$ is passed on to the next recursive step
with $\ell=(\ell+1)\mod \Levelfactor$.
\end{compactenum}
\cref{fig:nhood} provides an illustration.
The second construction is obtained simply by inserting in each bag $\RBag$ of $\Ranktree$
the nodes contained in the neighborhood $\Nhood(\Comp)$ of the component $\Comp$
upon which $\RBag$ was constructed.

\begin{figure}
\newcommand\irregularcircle[2]{
  \pgfextra {\pgfmathsetmacro\len{(#1)+rand*(#2)}}
  +(0:\len pt)
  \foreach \a in {10,25,...,350}{
    \pgfextra {\pgfmathsetmacro\len{(#1)+rand*(#2)}}
    -- +(\a:\len pt)
  } -- cycle
}

\newcommand\irregularellipse[3]{
  \pgfextra {\pgfmathsetmacro\b{(#1)+rand*(#3)}}
  \pgfextra {\pgfmathsetmacro\c{(#2)+rand*(#3)}}
  \pgfextra {\pgfmathsetmacro\len{\b*\c/ ( (\c * cos(0))^2 + (\b * sin(0 ))^2)^(1/2)}}
  +(0:\len pt)
  \foreach \a in {10,20,...,350}{
    \pgfextra {\pgfmathsetmacro\b{(#1)+rand*(#3)}}
    \pgfextra {\pgfmathsetmacro\c{(#2)+rand*(#3)}}
    \pgfextra {\pgfmathsetmacro\len{\b*\c/ ( (\c * cos(\a))^2 + (\b * sin(\a ))^2)^(1/2)}}
    -- +(\a:\len  pt)
  } -- cycle
}

\centering
\begin{tikzpicture}
\pgfmathsetseed{1234}
\coordinate (c) at (0,0);
\draw[very thick,rounded corners=1mm, fill=red!30] (c) \irregularellipse{38mm}{13mm}{1.2mm};
\draw[very thick,rounded corners=1mm, fill=gray!30] (c) \irregularellipse{23mm}{9mm}{1.2mm};
\node[] at (-3,0) {$\Nhood(\Comp)$};
\node[circle, draw=black, thick, fill=white] at (0,0) (bag)	{$\Bag$};
\node[] at (1,0.1) {$\ov{\Comp}_2$};
\node[] at (-1.2,-0.1) {$\ov{\Comp}_1$};
\draw[thick, decoration={random steps,segment length=2mm,amplitude=0.5mm}, decorate,] (bag) -- (1.1,-1.2);
\draw[thick, decoration={random steps,segment length=2mm,amplitude=0.5mm}, decorate,] (bag) -- (-1.8,1.15);
\end{tikzpicture}
\caption{Illustration of one recursive step of $\Rankalgo$ on a component $\Comp$ (gray).
$\Comp$ is split into two sub-components $\ov{\Comp}_1$ and $\ov{\Comp}_2$
by removing a list of bags $\Seplist=(\Bag_i)_i$.
Once every $\lambda$ recursive calls, $\Seplist$ contains one bag, 
such that the neighborhood $\Nhood(\ov{\Comp}_i)$ of each $\ov{\Comp}_i$ is at most half the size of $\Nhood(\Comp)$
(i.e., the red area is split in half).
In the remaining $\lambda-1$ recursive calls, $\Seplist$ contains $m$ bags,
such that the size of each $\ov{\Comp}_i,$
is at most $\frac{1+\delta}{2}$ fraction the size of $\Comp$.
(i.e., the gray area is split in almost half).
}\label{fig:nhood}
\end{figure}

\smallskip\noindent{\bf Use of $(\alpha, \beta, \gamma)$ tree-decompositions.}
For ease of presentation we consider that every $\Tree(G)$ is a full binary tree.
Since our tree decompositions are $(\beta,\gamma)$-balanced,
we can always attach empty children bags to those that have only one child,
while increasing the size of $\Tree(G)$ by a constant factor only.
In the sequel, $\Tree(G)$ will denote a full binary 
$(\alpha, \beta, \gamma)$ tree-decomposition of $G$.
The parameters  $\delta$ and $\lambda$ will be chosen appropriately in later sections.

\smallskip
\begin{remark}\label{rem:balance_comparison}
The notion of balanced tree decompositions exists in the 
literature~\cite{Elberfeld10,BH98}, but balancing only requires that the height of the tree is logarithmic in its size.
Here we develop a stronger notion of balancing, which is crucial for proving the complexity results of the algorithms
presented in this work.
\end{remark}

\section{Concurrent Tree Decomposition}\label{sec:product_tree_dec}
In this section we present the construction of a tree-decomposition $\Tree(G)$
of a concurrent graph $G=(V,E)$ of $k$ constant-treewidth graphs.
In general, $G$ can have treewidth which depends on the number of its nodes
(e.g., $G$ can be a grid, which has treewidth $n$, obtained as the product of two lines, which have treewidth $1$).
While the treewidth computation for constant-treewidth graphs is linear 
time~\cite{Bodlaender96}, it is 
NP-complete for general graphs~\cite{Bodlaender93}. 
Hence computing a tree decomposition that achieves the treewidth of $G$ 
can be computationally expensive (e.g., exponential in the size of $G$).
Here we develop an algorithm $\ProductTree$ which constructs a 
tree-decomposition $\ProductTree(G)$ of $G$, given a $(\alpha,\beta,\gamma)$
tree-decomposition of the components, in $O(n^k)$ time and space 
(i.e., linear in the size of $G$), such that the 
following properties hold:
(i)~the width is $O(n^{k-1})$; and 
(ii)~for every bag in level at least $i\cdot \gamma$, the size of the bag is $O(n^{k-1}\cdot \beta^{i})$
(i.e., the size of the bags decreases geometrically along the levels).

\smallskip\noindent{\bf Algorithm $\ProductTree$ for concurrent tree decomposition.}
Let $G$ be a concurrent graph of $k$ graphs $(G_i)_{1\leq i\leq k}$.
The input consists of a full binary tree-decomposition $T_i$ of constant width for every graph $G_i$.
In the following, $\Bag_i$ ranges over bags of $T_i$, 
and we denote by $\Bag_{i,r}$, with $r\in \Children$, the $r$-th child of $\Bag_i$.
We construct the {\em concurrent tree-decomposition} $T=\ProductTree(G)=(V_T, E_T)$ of $G$
using the recursive procedure $\ProductTree$, which operates as follows.
On input $(T_i(\Bag_i))_{1\leq i\leq k}$, return a tree decomposition where
\begin{compactenum}
\item The root bag $\Bag$ is
\begin{equation}\label{eq:bag}
\Bag=\bigcup_{1\leq i\leq k}\left( \left(\prod_{j< i} V_{T_j}\left(\Bag_j\right) \right) \times \Bag_i \times \left(\prod_{j> i} V_{T_j}\left(\Bag_j\right) \right)\right)
\end{equation}

\item If every $\Bag_i$ is a non-leaf bag of $T_i$,
for every choice of $\Prod{r_1,\dots, r_k}\in\Children^k$,
repeat the procedure for $(T_i(\Bag_{i,{r_i}}))_{1\leq i\leq k}$,
and let $\Bag'$ be the root of the returned tree. Make $\Bag'$ a child of $\Bag$.
\end{compactenum}
Let $\Bag_i$ be the root of the tree-decomposition $T_i$.
We denote by $\ProductTree(G)$ the application of the recursive procedure $\ProductTree$ on $(T_i(\Bag_i))_{1\leq i\leq k}$.
\cref{fig:product_tree_dec} provides an illustration.

\begin{figure}
\newcommand{\distone}{5cm*0.3}
\def \darkgreen {black!50!green}
\def \darkred {black!30!red}
\centering
\begin{tikzpicture}[->,>=stealth',shorten >=1pt,auto,node distance=\distone,
                    very thick,scale=1 ]
      
\tikzstyle{every state}=[fill=white,draw=black,text=black,font=\small , inner sep=0.05cm, minimum size=0.5cm]
\tikzstyle{invis}=[fill=white,draw=white,text=white,font=\small , inner sep=-0.05cm]
\tikzstyle{every state}=[fill=white,draw=black,text=black,font=\small , inner sep=0.05cm, minimum size=0.9cm]
\renewcommand{\distone}{4cm*0.28}

\def\xbias{2}
\def\ybias{2.2}
\def\textbias{0.7}

\node[] (t1) at (-\xbias,\textbias-2.1) {$\Tree(G_1)$};
\node[state,minimum size=0.7cm,] (v1) at (-\xbias,0) {$1$};
\node[state,minimum size=0.7cm, below left of=v1] (v2) {$1$~~$2$};
\node[state,minimum size=0.7cm, below right of=v1] (v3) {$1$~~$3$};
\path[-] (v1) edge (v2) edge (v3);

\node[] (t2) at (\xbias,\textbias-2.1) {$\Tree(G_2)$};
\node[state,minimum size=0.7cm, ] (va) at (\xbias,0) {$a$};
\node[state,minimum size=0.7cm, below left of=va] (vb) {$a$~~$b$};
\node[state, minimum size=0.7cm, below right of=va] (vc) {$a$~~$c$};
\path[-] (va) edge (vb) edge (vc);

\renewcommand{\distone}{5cm*0.42}
\node[] (t) at (0,-\ybias+\textbias-3.2) {$\ProductTree(G)$};
\node[state, text width=1.5cm] (v1a) at (0,-\ybias) {~~~~$\Prod{1,a}$~~~\\ $\Prod{1,b}$  $\Prod{2,a}$ $\Prod{1,c}$ $\Prod{3,a}$};
\node[state, align=center, text width=0.7cm, below left of=v1a] (v3a) {$\Prod{1,a}$  $\Prod{1,b}$ $\Prod{3,a}$ $\Prod{3,b}$ };
\node[state, align=center, text width=0.7cm, left of=v3a] (v2a) {$\Prod{1,a}$ $\Prod{1,b}$  $\Prod{2,a}$ $\Prod{2,b}$};
\node[state, align=center, text width=0.7cm, below right of=v1a] (v1b) {$\Prod{1,a}$ $\Prod{1,c}$ $\Prod{2,a}$ $\Prod{2,c}$};
\node[state, align=center, text width=0.7cm, right of=v1b] (v1c) {$\Prod{1,a}$ $\Prod{1,c}$ $\Prod{3,a}$ $\Prod{3,c}$};
\path[-] (v1a) edge (v2a) edge (v3a) edge (v1b) edge (v1c);

\end{tikzpicture}
\caption{The tree-decomposition $\ProductTree(G)$ of a concurrent graph $G$
of two constant-treewidth graphs $G_1$ and $G_2$.}\label{fig:product_tree_dec}
\end{figure}
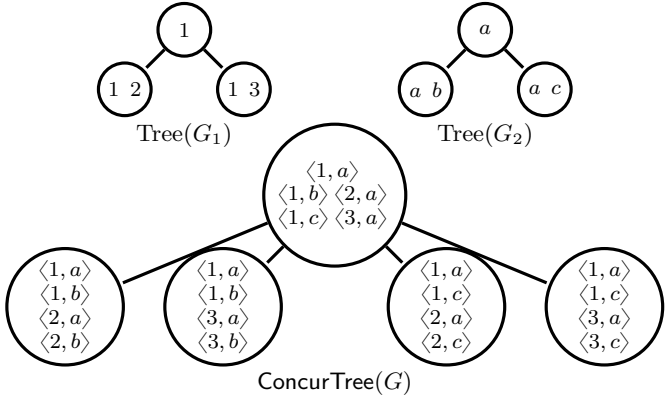

\smallskip
\begin{restatable}{remark}{rembagsize}\label{rem:bag_size}
Recall that for any bag $\Bag_j$ of a tree-decomposition $T_j$, we have $V_{T_j}(\Bag_j)=\bigcup_{\Bag'_j}\Bag'_j$,
where $\Bag'_j$ ranges over bags in $T_j(\Bag_j)$.
Then, for any two bags $\Bag_{i_1}$, $\Bag_{i_2}$, of tree-decompositions $T_{i_1}$ and $T_{i_2}$ respectively, we have
\[
V_{T_{i_1}}(\Bag_{i_1}) \times V_{T_{i_2}}(\Bag_{i_2}) = \bigcup_{\Bag'_{i_1}, \Bag'_{i_2}} \left(\Bag'_{i_1} \times \Bag'_{i_2}\right)
\]
where $\Bag'_{i_1}$ and $\Bag'_{i_2}$ range over bags in $T_{i_1}(\Bag_{i_1})$ and $T_{i_2}(\Bag_{i_2})$ respectively.
Since each tree-decomposition $T_i$ has constant width,
it follows that $|V_{T_{i_1}}(\Bag_{i_1}) \times V_{T_{i_2}}(\Bag_{i_2})|=O(|T_{i_1}(\Bag_{i_1})|\cdot |T_{i_2}(\Bag_{i_2})|)$.
Thus, the size of each bag $\Bag$ of $\ProductTree(G)$ constructed in \cref{eq:bag}
on some input $(T_i(\Bag_i))_i$ is $|\Bag|=O(\sum_i \prod_{j\neq i} n_j)$, where $n_i=|T_i(\Bag_i)|$.
\end{restatable}

In view of \cref{rem:bag_size}, the time and space required by $\ProductTree$
to operate on input $(T_i(\Bag_i))_{1\leq i\leq k}$ where $|T_i(\Bag_i)|=n_i$ is given, up to constant factors, by 

\begin{equation}
\Time(n_1,\dots, n_k) \leq \sum_{1\leq i\leq k} \prod_{j\neq i}n_j  + \sum_{\mathclap{(r_i)_i\in \Children^k}} \Time( n_{1,{r_1}},\dots,  n_{k,{r_k}})
\end{equation}
such that  for every $i$ we have that $\sum_{r_i\in \Children} n_{i,r_i} \leq n_i$.
In \cref{sec:product_tree_proofs} we establish that the solution to the above recurrence is $O(n^k)$, where $n_i\leq n$ for all $1\leq i\leq k$.

The following theorem establishes that $\ProductTree(G)$ is a tree-decomposition of $G$ constructed in $O(n^k)$ time and space.
Additionally, if every input tree-decomposition $T_i$ is $(\beta,\gamma)$-balanced, 
then the size of each bag $\Bag$ of $\ProductTree(G)$ decreases geometrically with its level $\Level(\Bag)$.
See \cref{sec:product_tree_proofs} for a formal proof.

\smallskip
\begin{restatable}{theorem}{producttree}\label{them:product}
Let $G=(V,E)$ be a concurrent graph of $k$ constant-treewidth graphs $(G_i)_{1\leq i\leq k}$ of $n$ nodes each.
Let a binary $(\alpha,\beta,\gamma)$ tree-decomposition $T_i$ for every graph $G_i$ be given, for some constant $\alpha$.
$\ProductTree$ constructs a $2^{k}$-ary tree-decomposition $\ProductTree(G)$ of $G$ in $O(n^k)$ time and space,
with the following property.
For every $i\in \Nats$ and bag $\Bag$ at level $\Level(\Bag)\geq i\cdot \gamma$, we have $|\Bag|=O(n^{k-1}\cdot \beta^{i})$. 
\end{restatable}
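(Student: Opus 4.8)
The plan is to verify the three tree-decomposition axioms C1--C3 for $\ProductTree(G)$, then bound the time/space, and finally establish the geometric bag-size decay. I would proceed by structural induction on the recursion of $\ProductTree$. For C1 (node coverage): at the root call on $(T_i(\Bag_i))_i$ with each $\Bag_i$ the root of $T_i$, \cref{eq:bag} gives a bag containing $\prod_i V_{T_i}(\Bag_i) = \prod_i V_i = V$ as a subset (in fact the root bag contains $V$), so every node is covered. For C2 (edge coverage): take an edge $(u,v)\in E \subseteq E_{\mathsf{p}}$; then $u,v$ differ only in the coordinates of some $\Index \subseteq [k]$, with $(u_i,v_i)\in E_i$ for $i\in\Index$. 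For each $i\in\Index$ the edge $(u_i,v_i)$ is covered by some bag of $T_i$; I descend the recursion choosing at each step, for each component $i$, the child $\Bag_{i,r_i}$ whose subtree still contains that covering bag (for $i\notin\Index$, any child whose subtree still contains the single node $u_i=v_i$ works, and since $u_i$ lies in a contiguous subtree such a child persists as long as needed). The deepest bag reached along this descent contains, by \cref{eq:bag} applied with the $i$-th factor being a bag of $T_i$ holding both $u_i$ and $v_i$, the pair $u,v$. For C3 (contiguity): I would show that for a fixed product node $u=\Prod{u_i}_i$, the set of bags of $\ProductTree(G)$ containing $u$ forms a contiguous subtree; this follows because a bag $\Bag$ produced from input $(T_i(\Bag_i))_i$ contains $u$ iff $u_i \in V_{T_i}(\Bag_i)$ for every $i$, and each condition ``$u_i\in V_{T_i}(\Bag_i)$'' defines a downward-closed-then-cut-off set along each $T_i$ branch (contiguity in $T_i$), whose conjunction across the product-recursion tree is again contiguous. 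The same argument handles pairs that occur together.

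For the resource bound, I would invoke the recurrence displayed just before the theorem statement together with \cref{rem:bag_size}: the work at each node is $O(\sum_i \prod_{j\ne i} n_j)$ where $n_i = |T_i(\Bag_i)|$, the recursion branches over $\Children^k$, and the child sizes satisfy $\sum_{r_i} n_{i,r_i} \le n_i$ for each $i$ independently. The claim (proved in \cref{sec:product_tree_proofs}, which I may cite) is that this solves to $O(n^k)$; the intuition is that the product $\prod_i |T_i(\Bag_i)|$ summed over all recursion nodes telescopes because the $T_i$ are trees, so $\sum_{\text{nodes}} \prod_i n_i = \prod_i |T_i| = O(n^k)$, and each work term is smaller than this product by a factor of at least one $n_j$. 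Space is bounded the same way since the output has that size. The arity $2^k$ is immediate from branching over $\Children^k = [2]^k$.

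The geometric decay is the part I'd treat most carefully. Fix a bag $\Bag$ of $\ProductTree(G)$ at level $\Level(\Bag) = L \ge i\cdot\gamma$, produced from input $(T_j(\Bag_j))_j$. Because $\ProductTree$ descends all $k$ components in lockstep (every recursive call advances every $\Bag_j$ to a child), each $\Bag_j$ is at level exactly $L$ in $T_j$, hence at level $\ge i\cdot\gamma$. Since each $T_j$ is $(\beta,\gamma)$-balanced, iterating the balance condition $i$ times down a path of length $i\cdot\gamma$ gives $|T_j(\Bag_j)| \le \beta^{i}\cdot|T_j|$... but I must be careful: $|T_j|$ is $O(n)$ only for the root, and what I actually need is $|T_j(\Bag_j)| = O(n\cdot\beta^{i})$, which follows since $|T_j| = O(n)$ and the balance condition applied $i$ times (each step contributing a factor $\beta$, valid because $\Level(\Bag_j) - \Level(\text{root}) = L \ge i\gamma$) yields $|T_j(\Bag_j)| \le \beta^{i}|T_j|$. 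Then by \cref{rem:bag_size}, $|\Bag| = O\big(\sum_j \prod_{\ell\ne j} |T_\ell(\Bag_\ell)|\big) = O\big(k\cdot (n\cdot\beta^{i})^{k-1}\big) = O(n^{k-1}\cdot\beta^{i(k-1)})$; since $\beta<1$ and $k\ge 1$ we have $\beta^{i(k-1)} \le \beta^{i}$ only when $k\ge 2$, which holds, so the bound $O(n^{k-1}\cdot\beta^{i})$ follows. I expect the main obstacle to be the bookkeeping in the C3 contiguity argument — specifically, verifying that the conjunction over $k$ component-wise contiguity conditions, transported through the $2^k$-branching product recursion, still yields a contiguous subtree of $\ProductTree(G)$ rather than merely a union of contiguous pieces; this needs a clean lemma that the recursion faithfully ``multiplies'' the contiguous structures of the $T_j$.
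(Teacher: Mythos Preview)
Your complexity and bag-size decay arguments are fine (in fact your bag bound $O(n^{k-1}\beta^{i(k-1)})$ is sharper than what the paper records). The problem is that your C1 and C3 arguments rest on a misreading of \cref{eq:bag}. You treat the bag produced from $(T_i(\Bag_i))_i$ as if it were $\prod_i V_{T_i}(\Bag_i)$, and you characterise membership by ``$u\in\Bag$ iff $u_i\in V_{T_i}(\Bag_i)$ for every $i$''. But \cref{eq:bag} is a \emph{union} over $j$ of products in which the $j$-th factor is the bag $\Bag_j$ itself, not the full subtree vertex set $V_{T_j}(\Bag_j)$. So $u\in\Bag$ iff (for all $i$, $u_i\in V_{T_i}(\Bag_i)$) \emph{and} (for some $j$, $u_j\in\Bag_j$). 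Concretely, in \cref{fig:product_tree_dec} the root bag does not contain $\Prod{2,b}$ even though $2\in V_1$ and $b\in V_2$; your C1 claim that the root bag contains all of $V$ is therefore false. Likewise, your C3 contiguity argument proves contiguity of the wrong set of bags: the bags satisfying only the ``for all $i$'' condition form a single root-to-something path, but the bags actually containing $u$ are a different (larger) set, and your argument says nothing about them.

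The paper's proof of C1 instead locates, for each $u$, a specific recursion depth---namely $\min_i\Level(u_i)$---at which the bag along the appropriate branch is forced to contain $u$ via the $j=\arg\min_i\Level(u_i)$ term of \cref{eq:bag}. For C3 the paper takes $u\in\Bag\cap\Bag'$ and an intermediate $\Bag''$ on the tree path, first observes (as you do) that $u_i\in V_{T_i}(\Bag''_i)$ for all $i$, but then crucially finds a coordinate $j$ with $u_j\in\Bag_j\cap\Bag'_j$; since the projection of the $\ProductTree$ path to $T_j$ places $\Bag''_j$ on the $\Bag_j\Path\Bag'_j$ path, C3 in $T_j$ gives $u_j\in\Bag''_j$, which together with the ``for all $i$'' condition puts $u$ in $\Bag''$. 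Your contiguity lemma needs to be rebuilt around this two-part membership criterion rather than the simpler one you assumed.
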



\section{Concurrent Algebraic Paths}\label{sec:concurrent}

We now turn our attention to the core algorithmic problem of this paper,
namely answering semiring distance queries in a concurrent graph $G$
of $k$ constant-treewidth graphs $(G_i)_{1\leq i\leq k}$.
To this direction, we develop a data-structure $\ConcurAP$ (for \emph{concurrent algebraic paths}) which will preprocess $G$
and afterwards support single-source, pair, and partial pair queries on $G$.
As some of the proofs are technical, they are presented only in \cref{sec:concurrent_proofs}.

\smallskip\noindent{\bf Semiring distances on tree decompositions.}
The preprocessing and query of our data-structure exploits a key property of semiring distances on tree decompositions.
This property is formally stated in \cref{lem:u_v_dist}, and concerns any two nodes $u$, $v$ that appear
in some distinct bags $\Bag_1$, $\Bag_j$ of $\Tree(G)$.
Informally, the semiring distance $\Distance(u,v)$ can be written as the semiring multiplication of distances $\Distance(x_i, x_{i+1})$,
where $x_i$ is a node that appears in the $i$-th and $(i-1)$-th bags of the 
unique simple path $\Bag_1\Path\Bag_j$ in $\Tree(G)$.
\cref{fig:uv_dist} provides an illustration.

\smallskip
\begin{restatable}{lemma}{uvdist}\label{lem:u_v_dist}
Consider a graph $G=(V,E)$ with a weight function $\Weight:E\rightarrow \Sigma$, and a tree-decomposition $\Tree(G)$.
Let $u,v\in V$, and $P:\Bag_{1},\Bag_{2},\dots,\Bag_{j}$ be a simple path in $T$ such that $u\in \Bag_{1}$ and $v\in \Bag_{j}$. 
Let $A=\{u\} \times (\prod_{1< i\leq j} \left(\Bag_{i-1}\cap \Bag_i\right))\times \{v\}$. 
Then $\Distance(u,v)=\bigoplus_{(x_1,\dots, x_{j+1})\in A}\bigotimes_{i=1}^{j} \Distance(x_i, x_{i+1})$.
\end{restatable}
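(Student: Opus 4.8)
The plan is to prove the identity by induction on the length $j-1$ of the path $P$, using the fundamental property C3 of tree decompositions: if a bag $\Bag_k$ separates $\Bag_1$ from $\Bag_j$ in the tree, then every path in $G$ from a node in $V_T(\Bag_1\text{-side})$ to a node in $V_T(\Bag_j\text{-side})$ must pass through a node of $\Bag_k$. More precisely, removing the edge $(\Bag_1,\Bag_2)$ from $T$ splits $T$ into two contiguous subtrees; let $W_1$ be the set of nodes of $G$ appearing in bags on the $\Bag_1$ side and $W_2$ those on the $\Bag_2$ side. By C3 applied along the path, $W_1\cap W_2\subseteq \Bag_1\cap \Bag_2$, and since $u\in\Bag_1\subseteq W_1$ and $v\in\Bag_j\subseteq W_2$, every $u\Path v$ path in $G$ must contain at least one node of $\Bag_1\cap\Bag_2$. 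This lets us factor an arbitrary $u\Path v$ path at its \emph{first} visit to $\Bag_1\cap\Bag_2$.

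The base case $j=1$ is vacuous since $u,v$ are required to lie in distinct bags; the first genuine case is $j=2$, where $A=\{u\}\times(\Bag_1\cap\Bag_2)\times\{v\}$ and the claim reads $\Distance(u,v)=\bigoplus_{x\in \Bag_1\cap\Bag_2}\Distance(u,x)\otimes\Distance(x,v)$. Here I would argue both inequalities in the natural partial order of the semiring (where $a\oplus b$ is an upper bound of $a$ and $b$): the right side is at most $\Distance(u,v)$ because concatenating a $u\Path x$ path with an $x\Path v$ path yields a $u\Path v$ path, and infinite distributivity plus associativity of $\otimes$ over $\oplus$ shows $\Distance(u,x)\otimes\Distance(x,v)=\bigoplus_{P_1:u\Path x,\,P_2:x\Path v}\otimes(P_1\cdot P_2)$; conversely, every $u\Path v$ path $P$ hits $\Bag_1\cap\Bag_2$, so fixing $x$ to be the first such node and splitting $P=P_1\cdot P_2$ there, we get $\otimes(P)=\otimes(P_1)\otimes\otimes(P_2)$, which is subsumed by the term for that $x$; summing over all $P$ (again using infinite associativity/commutativity of $\oplus$) gives $\Distance(u,v)$ at most the right-hand side. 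Note that $x$ ranging over $\Bag_1\cap\Bag_2$ already covers the case $x=u$ or $x=v$ when these lie in the intersection, and the closure operator $^*$ handles any cycling through $x$ since $\Distance(x,x)$ includes $\One$.

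For the inductive step, let $\Bag_2$ be the second bag of $P$. Applying the $j=2$ argument to the pair $(\Bag_1,\Bag_2)$ — but relative to the \emph{whole} graph — I would first want $\Distance(u,v)=\bigoplus_{x_2\in\Bag_1\cap\Bag_2}\Distance(u,x_2)\otimes\Distance(x_2,v)$; the subtlety is that this requires $\Bag_1\cap\Bag_2$ to separate $u$ from $v$ in $G$, which holds because $u\in\Bag_1$, $v\in\Bag_j$ with $\Bag_j$ on the far side of the edge $(\Bag_1,\Bag_2)$, so C3 gives exactly this separation. Then $x_2\in\Bag_2$ and $v\in\Bag_j$ with $P':\Bag_2,\dots,\Bag_j$ a simple path of length $j-2$, so the induction hypothesis expands $\Distance(x_2,v)=\bigoplus_{(x_2,x_3,\dots,x_{j+1})\in A'}\bigotimes_{i=2}^{j}\Distance(x_i,x_{i+1})$ where $A'=\{x_2\}\times(\prod_{2<i\le j}(\Bag_{i-1}\cap\Bag_i))\times\{v\}$. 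Substituting and using infinite distributivity to pull the $\bigoplus_{x_2}$ inside, together with associativity of $\otimes$, collapses the nested sums into a single sum over $A=\{u\}\times(\prod_{1<i\le j}(\Bag_{i-1}\cap\Bag_i))\times\{v\}$, which is the claim.

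The main obstacle I anticipate is bookkeeping the interchange of the infinite $\oplus$ and $\otimes$ rigorously — in particular making sure the "first visit to the separator" decomposition is well-defined and exhaustive, and that re-indexing the (possibly infinite) families of paths does not double count or drop terms. This is exactly what the completeness and closedness hypotheses on $S$ are for (infinite associativity/commutativity of $\oplus$, infinite distributivity, and $s^*=\One\oplus s\otimes s^*$ to absorb the self-loops at separator nodes), so the argument is conceptually routine once one commits to working with the semiring's natural order and the path-weight reformulation $\Distance(u,x)\otimes\Distance(x,v)=\bigoplus_{P_1,P_2}\otimes(P_1\cdot P_2)$; the care lies in the details rather than in any genuinely hard idea.
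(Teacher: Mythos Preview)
Your proposal is correct and follows essentially the same route as the paper: factor $\Distance(u,v)$ through the separator $\Bag_1\cap\Bag_2$ using the tree-decomposition property C3, then induct along the path $\Bag_1,\dots,\Bag_j$. The paper's proof is terser---it cites an external lemma for the separator fact and leaves the induction as ``easy''---whereas you spell out the first-visit decomposition and the semiring manipulations, but the underlying argument is the same. One minor quibble: the case $j=1$ is not vacuous but trivial (the middle product is empty, so $A=\{(u,v)\}$ and the identity reads $\Distance(u,v)=\Distance(u,v)$); the statement does not actually require $\Bag_1\neq\Bag_j$.
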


\begin{figure}
\newcommand{\distone}{5cm*0.3}
\def \darkgreen {black!50!green}
\def \darkred {black!30!red}
\centering
\begin{tikzpicture}[->,>=stealth',shorten >=1pt,auto,node distance=\distone,
                    very thick,scale=1 ]
      
\tikzstyle{every state}=[fill=white,draw=black,text=black,font=\small , inner sep=0.05cm, minimum size=0.5cm]
\tikzstyle{invis}=[fill=white,draw=white,text=white,font=\small , inner sep=-0.05cm]
\tikzstyle{bag}=[draw=black, ultra thick, ellipse, minimum width=12mm, minimum height=12mm]
\tikzstyle{node}=[draw=black, fill=black, ultra thick, circle, minimum size=1mm, inner sep=0 ]
\tikzstyle{every state}=[fill=white,draw=black,text=black,font=\small , inner sep=0.05cm, minimum size=0.9cm]
\renewcommand{\distone}{5cm*0.3}

\def\xbias{2}
\def\ybias{3}
\def\textbias{0.6}

\renewcommand{\distone}{5cm*0.42}

\node[bag, ] (b1) at (0,0) {};
\node[bag, right of=b1] (b2) {};
\node[bag, right of=b2] (b3) {};
\node[bag, right of=b3] (b4) {};
\draw[-, very thick] (b1) to (b2) to (b3) to (b4);

\node[] (bt1) at (0,-0.85) {$\Bag_1$};
\node[ right of=bt1] (bt2) {$\Bag_2$};
\node[ right of=bt2] (bt3) {$\Bag_3$};
\node[ right of=bt3] (bt4) {$\Bag_4$};

\node[node] (n1) at (0.2,0.3) {};
\node[node, right of=n1] (n2) {};
\node[node, right of=n2] (n3) {};
\node[node, right of=n3] (n4) {};

\node[node] (m1) at (0.2,-0.3) {};
\node[node, right of=m1] (m2) {};
\node[node, right of=m2] (m3) {};
\node[node, right of=m3] (m4) {};

\draw[->,  thick] (n1) to (m1);
\draw[->,  thick] (n2) to (m2);
\draw[->,  thick] (n3) to (m3);
\draw[->,  thick] (n4) to (m4);

\node[] (t1) at (-0.2,0.3) {$u$};
\node[ right of=t1] (t2) {$x_2$};
\node[ right of=t2] (t3) {$x_3$};
\node[ right of=t3] (t4) {$x_4$};

\node[] (r1) at (-0.2,-0.3) {$x_2$};
\node[ right of=r1] (r2) {$x_3$};
\node[ right of=r2] (r3) {$x_4$};
\node[ right of=r3] (r4) {$v$};

\end{tikzpicture}
\caption{Illustration of \cref{lem:u_v_dist}.
If $P$  is the unique simple path $\Bag_1\Path \Bag_4$ in $\Tree(G)$,
then there exist (not necessarily distinct) $x_i\in \Bag_{i-1} \cap \Bag_i$ with $1<i\leq 4$ such that 
$\Distance(u,v)=\Distance(u, x_2) \otimes \Distance(x_2,x_3) \otimes \Distance(x_3,x_4) \otimes \Distance(x_4,v)$.
 }\label{fig:uv_dist}
\end{figure}
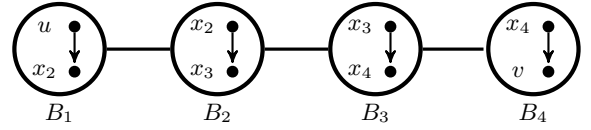

\smallskip\noindent{\bf Informal description of the preprocessing.}
The preprocessing phase of $\ConcurAP$ is handled by algorithm $\Concpreprocessalgo$,
which performs the following steps.
\begin{compactenum}
\item First, the \emph{partial expansion} $\PNode{G}$ of $G$ is constructed by introducing
a pair of strictly partial nodes $\PNode{u}^1$, $\PNode{u}^2$ for every strictly partial node $\PNode{u}$ of $G$,
and edges between strictly partial nodes and the corresponding nodes of $G$ that refine them.

\item Second, the concurrent tree-decomposition $T=\ProductTree(G)$ of $G$ is constructed, and modified
to a tree-decomposition $\PNode{T}$ of the partial expansion graph $\PNode{G}$.

\item Third, a standard, two-way pass of $\PNode{T}$ is performed to compute \emph{local distances}.
In this step, for every bag $\PNode{B}$ in $\PNode{T}$ and all partial nodes $\PNode{u}, \PNode{v} \in \PNode{B}$,
the distance $\Distance(\PNode{u},\PNode{v})$ is computed (i.e., all-pair distances in $\PNode{\Bag}$).
Since we compute distances between nodes that are \emph{local} in a 
bag, this step is called local distance computation.
This information is used to handle (i)~single-source queries and (ii)~partial pair queries in which both nodes are strictly partial.

\item Finally, a top-down pass of $\PNode{T}$ is performed in which 
for every node $u$ and partial node $\PNode{v}\in \AncestV{\PNode{V}}_{\PNode{T}}(\PNode{\Bag}_u)$
(i.e., $\PNode{v}$ appears in some ancestor of $\PNode{\Bag}_u$)
the distances $\Distance(u, \PNode{v})$ and $\Distance(\PNode{v}, u)$ are computed.
This information is used to handle pair queries in which at least one node is a node of $G$
(i.e., not strictly partial).
\end{compactenum}

\smallskip\noindent{\bf Algorithm $\Concpreprocessalgo$.}
We now formally describe algorithm $\Concpreprocessalgo$ for preprocessing
the concurrent graph $G=(V,E)$ for the purpose of answering algebraic path queries.
For any desired $0<\eps\leq 1$, we choose appropriate constants $\alpha$, $\beta$, $\gamma$, 
which will be defined later for the complexity analysis.
On input $G=(V,E)$, where $G$ is a concurrent graph of $k$ constant-treewidth graphs
$(G_i=(V_i, E_i))_{1\leq i\leq k}$, and a weight function $\Weight:E\rightarrow \Sigma$, $\Concpreprocessalgo$ operates as follows:
\begin{compactenum}
\item\label{item:partial_exp} Construct the \emph{partial expansion} $\PNode{G}=(\PNode{V}, \PNode{E})$ of $G$
together with an extended weight function $\PNode{\Weight}: \PNode{E}\rightarrow \Sigma$ as follows.
\begin{compactenum}
\item The node set is $\PNode{V}= V \cup \{\PNode{u}^1, \PNode{u}^2: \exists u\in V \text{ s.t. } u\Strictrefines \PNode{u} \}$; i.e.,
$\PNode{V}$ consists of nodes in $V$ and two copies for every partial node $\PNode{u}$ that is strictly refined by a node $u$ of $G$.

\item The edge set is $\PNode{E} = E\cup \{(\PNode{u}^1, u), (u, \PNode{u}^2): \PNode{u}^1, \PNode{u}^2\in \PNode{V} \text { and } u\in V \text{ s.t. } u\Strictrefines \PNode{u}^1, \PNode{u}^2\}$,
i.e., along with the original edges $E$,
the first (resp. second) copy of every strictly partial node has outgoing (resp. incoming) edges to (resp. from) the nodes of $G$ that refine it.
\item For the weight function we have $\PNode{\Weight}(\PNode{u},\PNode{v})=\Weight(\PNode{u},\PNode{v})$ if $\PNode{u},\PNode{v}\in V$, and $\PNode{\Weight}(\PNode{u},\PNode{v})=\One$ otherwise. 
That is, the original weight function is extended with value $\One$ (which is neutral for semiring multiplication) 
to all new edges in $\PNode{G}$.
\end{compactenum}
\item\label{item:partial_tree_dec} Construct the tree-decomposition $\PNode{T}=({\PNode{V}_T, \PNode{E}_T})$ of $\PNode{G}$ as follows.
\begin{compactenum}

\item\label{item:tree_decomp} Obtain an $(\alpha, \beta, \gamma)$ tree-decomposition $T_i=\Tree(G_i)$ of every graph $G_i$ 
using \cref{them:tree_decomp}.

\item Construct the concurrent tree-decomposition $T=\ProductTree(G)$ of $G$ using $(T_i)_{1\leq i\leq k}$.

\item Let $\PNode{T}$ be identical to $T$, with the following exception:
For every bag $\Bag$ of $T$ and $\PNode{\Bag}$ the corresponding bag in $\PNode{T}$,
for every node $u\in \Bag$, insert in $\PNode{\Bag}$ all strictly partial nodes $\PNode{u}^1$, $\PNode{u}^2$ of $\PNode{V}$ that $u$ refines.
Formally, set $\PNode{\Bag}=  \Bag\cup \{\PNode{u}^1, \PNode{u}^2: \exists u\in \Bag \text{ s.t. } u\Strictrefines\PNode{u}\}$.
Note that also $u\in \PNode{\Bag}$.
\end{compactenum}
Observe that the root bag of $\PNode{T}$ contains all strictly partial nodes.
\item\label{item:ld} Perform the \emph{local distance computation} on $\PNode{T}$ as follows.
For every partial node $\PNode{u}$, maintain two map data-structures
$\Fwd_{\PNode{u}}, \Bwd_{\PNode{u}}:\PNode{\Bag}_{\PNode{u}}\rightarrow \Sigma$.
Intuitively, $\Fwd_{\PNode{u}}$ (resp. $\Bwd_{\PNode{u}}$) aims to store the forward 
(resp., backward) distance, i.e., distance from (resp., to) $\PNode{u}$ to (resp. from) 
vertices in $\PNode{\Bag}_{\PNode{u}}$.
Initially set $\Fwd_{\PNode{u}}(\PNode{v})=\PNode{\Weight}(\PNode{u},\PNode{v})$ and $\Bwd_{\PNode{u}}(\PNode{v})=\PNode{\Weight}(\PNode{v},\PNode{u})$
for all partial nodes $\PNode{v}\in \PNode{\Bag}_{\PNode{u}}$
(and $\Fwd_{\PNode{u}}(\PNode{v})=\Bwd_{\PNode{u}}(\PNode{v})=\Zero$ if $(\PNode{u}, \PNode{v})\not\in \PNode{E}$).
At any point in the computation, given a bag $\PNode{\Bag}$ we 
denote by $\Weightmap_{\PNode{\Bag}}:\PNode{\Bag}\times \PNode{\Bag}\rightarrow \Sigma$
a map data-structure such that for every pair of partial nodes $\PNode{u}, \PNode{v}$ with $\Level(\PNode{v})\leq \Level(\PNode{u})$ we have
$\Weightmap_{\PNode{\Bag}}(\PNode{u},\PNode{v})= \Fwd_{\PNode{u}}(\PNode{v})$ and $\Weightmap_{\PNode{\Bag}}(\PNode{v},\PNode{u})= \Bwd_{\PNode{u}}(\PNode{v})$.
\begin{compactenum}
\item\label{item:bottom_up} Traverse $\PNode{T}$ bottom-up, and for every bag $\PNode{\Bag}$,
execute an all-pairs algebraic path computation on $\PNode{G}\restr{\PNode{\Bag}}$
with weight function $\Weightmap_{\PNode{\Bag}}$.
This is done using classical algorithms for the transitive closure, e.g.~\cite{Lehmann77,Floyd62,Warshall62,Kleene56}.
For every pair of partial nodes $\PNode{u},\PNode{v}$ with $\Level(\PNode{v})\leq \Level(\PNode{u})$, 
set $\Bwd_{\PNode{u}}(\PNode{v})=\Distancecomp(\PNode{v},\PNode{u})$ and $\Fwd_{\PNode{u}}(\PNode{v})=\Distancecomp(\PNode{u},\PNode{v})$,
where $\Distancecomp(\PNode{u},\PNode{v})$ and $\Distancecomp(\PNode{v},\PNode{u})$ are the computed distances in $\PNode{G}\restr{\PNode{\Bag}}$.
\item\label{item:top_down} Traverse $\PNode{T}$ top-down, and for every bag $\PNode{\Bag}$
perform the computation of \cref{item:bottom_up}.
\end{compactenum}
\item\label{item:ancestors} Perform the \emph{ancestor distance computation} on $\PNode{T}$ as follows.
For every node $u$, maintain two map data-structures 
$\From_{u}, \To_{u}: \AncestV{\PNode{V}}_{\PNode{T}}(\PNode{\Bag}_u) \to \Sigma$
from partial nodes that appear in the ancestor bags of $\PNode{\Bag}_u$ to $\Sigma$.
These maps aim to capture distances between the node $u$ and nodes in the ancestor bags of $\PNode{\Bag}_{u}$
(in contrast to $\Fwd_{u}$ and $\Bwd_{u}$ which store distances only between $u$ and nodes in $\PNode{\Bag}_{u}$).
Initially, set $\From_{u}(\PNode{v}) = \Fwd_{u}(\PNode{v})$ and $\To_{u}(\PNode{v}) = \Bwd_{u}(\PNode{v})$ 
for every partial node $\PNode{v}\in \PNode{\Bag}_u$.
Given a pair of partial nodes $\PNode{u}, \PNode{v}$ with $\Level(\PNode{v})\leq \Level(\PNode{u})$ we denote by
$\Weightmap^{+}(\PNode{u},\PNode{v})= \From_{\PNode{u}}(\PNode{v})$ and $\Weightmap^{+}(\PNode{v},\PNode{u})= \To_{\PNode{u}}(\PNode{v})$.
Traverse $\PNode{T}$ via a DFS starting from the root, and for every encountered bag $\PNode{\Bag}$
with parent $\PNode{\Bag}'$, for every node $u$ such that $\PNode{\Bag}$ is the root bag of $u$,
for every partial node $\PNode{v}\in\AncestV{\PNode{V}}_{\PNode{T}}(\PNode{\Bag}_u)$, assign
\[
\From_{u}(\PNode{v}) = \bigoplus_{x\in \PNode{\Bag}\cap \PNode{\Bag}'} \Fwd_{u}(x) \otimes \Weightmap^{+}(x, \PNode{v}) \numberthis \label{eq:from}
\]
\[
\To_{u}(\PNode{v}) = \bigoplus_{x\in \PNode{\Bag}\cap \PNode{\Bag}'} \Bwd_{u}(x) \otimes \Weightmap^{+}(\PNode{v}, x) \numberthis \label{eq:to}
\]
If $\PNode{\Bag}$ is the root of $\PNode{T}$, simply initialize the maps $\From_{u}$ and $\To_{u}$
according to the corresponding maps $\Fwd_{u}$ and $\Bwd_{u}$ constructed from \cref{item:ld}.

\item Preprocess $\PNode{T}$ to answer LCA queries in $O(1)$ time~\cite{Tarjan84}.
\end{compactenum}
The following claim states that the first (resp. second) copy of each strictly partial node inserted in \cref{item:partial_exp}
captures the distance from (resp. to) the corresponding strictly partial node of $\PNode{G}$.

\smallskip
\begin{restatable}{claim}{pnodedistances}\label{prop:pnode_distances}
For every partial node $\PNode{u}$ and strictly partial node $\PNode{v}$ we have
$\Distance(\PNode{u}, \PNode{v}) = \Distance(\PNode{u}, \PNode{v}^2)$ and
$\Distance(\PNode{v}, \PNode{u}) = \Distance(\PNode{v}^1, \PNode{u})$.
\end{restatable}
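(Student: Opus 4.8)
The plan is to establish the first identity $\Distance(\PNode{u},\PNode{v})=\Distance(\PNode{u},\PNode{v}^2)$ by a path-level analysis of $\PNode{G}$; the second one, $\Distance(\PNode{v},\PNode{u})=\Distance(\PNode{v}^1,\PNode{u})$, then follows by the symmetric argument with the orientation of the copy-node edges reversed. First I would reduce to the case where $\PNode{u}=u$ is an actual node of $G$. By the definition of semiring distance on partial nodes, both sides unfold as a $\bigoplus$ over refinements of $\PNode{u}$: on the left $\Distance_G(\PNode{u},\PNode{v})=\bigoplus_{u\Refines\PNode{u},\,u\in V}\Distance_G(u,\PNode{v})$, and similarly on the right, since $\PNode{v}^2$ is an \emph{actual} node of $\PNode{G}$, $\Distance_{\PNode{G}}(\PNode{u},\PNode{v}^2)=\bigoplus_{u\Refines\PNode{u},\,u\in V}\Distance_{\PNode{G}}(u,\PNode{v}^2)$. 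It is important here that the partial node $\PNode{u}$ refines to exactly the same set of actual nodes in $G$ and in $\PNode{G}$, because the copy nodes added to $\PNode{V}$ do not belong to $V$. Hence it suffices to prove $\Distance_G(u,\PNode{v})=\Distance_{\PNode{G}}(u,\PNode{v}^2)$ for every $u\in V$.

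The structural fact I would exploit is that every copy node of $\PNode{G}$ is a source or a sink: a node $\PNode{w}^1$ has only the outgoing edges $(\PNode{w}^1,w')$ for the nodes $w'\in V$ with $w'\Refines\PNode{w}$ and no incoming edge, while a node $\PNode{w}^2$ has only incoming edges and no outgoing edge. Consequently a copy node can never occur in the interior of a path of $\PNode{G}$. Two corollaries follow: (i) a path of $\PNode{G}$ between two nodes of $V$ is literally a path of $G$ with the same weight (no copy node can occur in it, and the edge relations agree on $V\times V$); and (ii) any path $P:u\Path\PNode{v}^2$ in $\PNode{G}$ with $u\in V$ must have the form ``a path $P':u\Path u'$ in $G$, followed by the edge $(u',\PNode{v}^2)$'', where $u'\in V$ and $u'\Refines\PNode{v}$ (precisely the condition under which $(u',\PNode{v}^2)\in\PNode{E}$), and conversely every such $P'$ extends uniquely to such a $P$. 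Since $\PNode{\Weight}(u',\PNode{v}^2)=\One$ is neutral for $\otimes$, we have $\otimes(P)=\otimes(P')$.

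It then remains to reorganize the infinite semiring sum, which is legitimate by the infinite associativity, commutativity and distributivity guaranteed for complete semirings:
\[
\Distance_{\PNode{G}}(u,\PNode{v}^2)=\bigoplus_{P:u\Path\PNode{v}^2}\otimes(P)=\bigoplus_{\substack{u'\in V\\ u'\Refines\PNode{v}}}\ \bigoplus_{P':u\Path u'\text{ in }G}\otimes(P')=\bigoplus_{\substack{u'\in V\\ u'\Refines\PNode{v}}}\Distance_G(u,u')=\Distance_G(u,\PNode{v}),
\]
where the last step is the definition of the distance from the actual node $u$ to the partial node $\PNode{v}$. The proof of the second identity is identical, using instead that $\PNode{v}^1$ has outgoing edges of weight $\One$ to exactly the nodes $v'\Refines\PNode{v}$, so that paths $\PNode{v}^1\Path u$ in $\PNode{G}$ are in weight-preserving bijection with pairs consisting of a refinement $v'\Refines\PNode{v}$ and a path $v'\Path u$ in $G$. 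I expect the only points needing care to be the reduction to actual nodes (ensuring the refinement sets coincide across $G$ and $\PNode{G}$) and the claim that no copy node appears in the interior of the relevant paths, which is exactly what licenses moving between path-sums in $G$ and in $\PNode{G}$; the semiring bookkeeping itself is routine once completeness is invoked.
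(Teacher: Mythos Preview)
Your proposal is correct and follows essentially the same approach as the paper: both arguments exploit that the copy nodes $\PNode{v}^1,\PNode{v}^2$ are attached to every refinement $v\Strictrefines\PNode{v}$ by weight-$\One$ edges, so passing through them leaves path weights unchanged. The paper's proof is a two-line sketch of this observation, while you carry out the argument in full detail (the reduction to actual sources, the source/sink structure of copy nodes, the explicit weight-preserving bijection, and the semiring bookkeeping); there is no real difference in strategy.
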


\smallskip\noindent{\bf Key novelty and insights.} 
The key novelty and insights of our algorithm are as follows:
\begin{compactenum}
\item A partial pair query can be answered by breaking it down to several pair queries. 
Instead, preprocessing the partial expansion of the concurrent graph
allows to answer partial pair queries directly. 
Moreover, the partial expansion does not increase the asymptotic complexity of
the preprocessing time and space.

\item $\Concpreprocessalgo$ computes the transitive closure only during the local
distance computation in each bag (\cref{item:ld} above), instead of 
a global computation on the whole graph. 
The key reason of our algorithmic improvement lies on the fact that the local computation is cheaper than 
the global computation, and is also sufficient to handle queries fast.

\item The third key aspect of our algorithm is the strongly balanced tree decomposition,
which is crucially used in \cref{them:product} to construct a tree decomposition for the 
concurrent graph such that the size of the bags decreases geometrically along the levels. 
By using the cheaper local distance computation (as opposed to the transitive closure globally)
and recursing on a geometrically decreasing series we obtain the desired complexity bounds for our algorithm. 
Both the strongly balanced tree decomposition and the fast local distance computation play important roles in our algorithmic improvements.
\end{compactenum}

We now turn our attention to the analysis of $\Concpreprocessalgo$.

\smallskip
\begin{restatable}{lemma}{partialtreedecomp}\label{lem:partial_tree_decomp}
$\PNode{T}$ is a tree decomposition of the partial expansion $\PNode{G}$.
\end{restatable}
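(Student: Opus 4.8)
The plan is to verify the three tree-decomposition conditions C1, C2, C3 for $\PNode{T}$ with respect to $\PNode{G}$, exploiting that $T=\ProductTree(G)$ is already a tree decomposition of $G$ (\cref{them:product}) and that $\PNode{T}$ has exactly the same underlying tree as $T$: each bag $\PNode{\Bag}$ is obtained from $\Bag$ by inserting the strictly partial nodes $\PNode{u}^1,\PNode{u}^2$ for which some $u\in\Bag$ satisfies $u\Strictrefines\PNode{u}$. Conditions C1 and C2 are immediate. Every actual node, and every edge of $E$, is covered by a bag of $T$, hence by the corresponding (larger) bag of $\PNode{T}$. For a strictly partial node $\PNode{u}$, pick any $u\in V$ with $u\Strictrefines\PNode{u}$ and let $\Bag$ be its root bag in $T$; then $u,\PNode{u}^1,\PNode{u}^2\in\PNode{\Bag}$, which simultaneously covers $\PNode{u}^1$, $\PNode{u}^2$ and the two new edges $(\PNode{u}^1,u)$ and $(u,\PNode{u}^2)$. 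Likewise, C3 for an actual node $u$ is inherited verbatim from C3 of $T$, since inserting partial nodes into bags never places $u$ in a bag that did not already contain it, so the bags of $\PNode{T}$ containing $u$ are exactly those of $T$ containing $u$.

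The substance is C3 for a strictly partial node $\PNode{u}$. The bags of $\PNode{T}$ containing $\PNode{u}^1$ (equivalently $\PNode{u}^2$) are precisely the members of $\mathcal{S}(\PNode{u})=\{\Bag\in T:\Bag\text{ contains some }u'\in V\text{ with }u'\Refines\PNode{u}\}$, and the goal is to show $\mathcal{S}(\PNode{u})$ is a contiguous subtree. I would prove the stronger statement that $\mathcal{S}(\PNode{u})$ is \emph{ancestor-closed} in $T$; since it is nonempty (it contains the root bag of any refining $u'$), ancestor-closedness makes it a contiguous subtree containing the root. The argument rests on two structural facts about $\ProductTree(G)$. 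First, parenthood in $T$ acts coordinate-wise on the tuple of component bags of \cref{eq:bag}: if a bag $\Bag^c$, built from component bags $(\Bag^c_i)_i$, is an ancestor of a bag $\Bag^a$, built from $(\Bag^a_i)_i$, then each $\Bag^c_i$ is an ancestor of $\Bag^a_i$ in $T_i$. Second, by \cref{eq:bag}, $u'\in\Bag$ holds iff $u'_j\in V_{T_j}(\Bag_j)$ for every $j$ and $u'_i\in\Bag_i$ for some single $i$; and $V_{T_i}(\Bag^a_i)\subseteq V_{T_i}(\Bag^c_i)$ whenever $\Bag^c_i$ is an ancestor of $\Bag^a_i$.

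Now take $\Bag^a\in\mathcal{S}(\PNode{u})$, witnessed by a refinement $u^a\Refines\PNode{u}$ together with a coordinate $i^*$ such that $u^a_{i^*}\in\Bag^a_{i^*}$ and $u^a_j\in V_{T_j}(\Bag^a_j)$ for $j\neq i^*$, and let $\Bag^c$ be an ancestor of $\Bag^a$. Monotonicity gives $u^a_j\in V_{T_j}(\Bag^c_j)$ for all $j$, so it suffices to exhibit a refinement $u^c\Refines\PNode{u}$ that differs from $u^a$ in at most one coordinate and satisfies $u^c_{i^{**}}\in\Bag^c_{i^{**}}$ for some coordinate $i^{**}$. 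If $\Bag^c_{i^*}$ still contains $u^a_{i^*}$ — which, by C3 of $T_{i^*}$, can fail only when $\Bag^c_{i^*}$ is a strict ancestor of the root bag of $u^a_{i^*}$ in $T_{i^*}$ — then $u^c=u^a$ with $i^{**}=i^*$ does the job. Otherwise $\Bag^c_{i^*}$ is a strict ancestor of that root bag, hence itself nonempty; I would then pick a coordinate $j^*$ on which $\PNode{u}$ is unspecified (such a $j^*$ exists precisely because $\PNode{u}$ is strictly partial), observe that $\Bag^c_{j^*}$ is nonempty, and let $u^c$ agree with $u^a$ except that $u^c_{j^*}$ is any node of $\Bag^c_{j^*}$. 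Since only a $\bot$-coordinate of $\PNode{u}$ was altered, $u^c$ still refines $\PNode{u}$, and $u^c\in\Bag^c$ with witness $j^*$. In all cases $\Bag^c\in\mathcal{S}(\PNode{u})$, establishing ancestor-closedness and hence C3.

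I expect this last case to be the main obstacle: when the coordinate witnessing membership of $u^a$ in the lower bag is one on which $\PNode{u}$ is \emph{fixed}, the same refinement need no longer lie in the ancestor bag, and one must reroute the witness through an unspecified coordinate — which is exactly where strict partiality of $\PNode{u}$ is essential. The one routine bookkeeping point is the nonemptiness of the component bags $\Bag^c_{i^*}$ and $\Bag^c_{j^*}$ used above; this follows from the convention (already in force for $\ProductTree(G)$) that empty bags occur only as leaves, since each of these bags has a nonempty descendant.
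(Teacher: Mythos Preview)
Your proof is correct and follows essentially the same strategy as the paper: both establish C3 for a strictly partial node by showing that the set of bags containing a refinement of $\PNode{u}$ forms an ancestor-closed subtree rooted at the root of $\PNode{T}$. The paper phrases this contrapositively (once $\PNode{u}^1$ disappears along a root-to-leaf path, some non-$\bot$ constituent has left the corresponding component subtree, so it never reappears), whereas you argue it directly by explicitly constructing a refinement in every ancestor bag via rerouting through a $\bot$-coordinate; your version is more detailed but the underlying idea is the same.
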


In \cref{lem:preprocess_correctness} we establish that the forward and backward maps computed
by $\Concpreprocessalgo$ store the distances between nodes.

\smallskip
\begin{lemma}\label{lem:preprocess_correctness}
At the end of $\Concpreprocessalgo$, the following assertions hold:
\begin{compactenum}
\item For all nodes $u,v\in V$ such that $\PNode{\Bag}_u$ appears in $\PNode{T}(\PNode{\Bag}_v)$, 
we have $\From_u(v)=\Distance(u,v)$ and $\To_u(v)=\Distance(v,u)$.
\item For all strictly partial nodes $\PNode{v}\in \PNode{V}$ and nodes $u\in V$
we have  $\From_u({\PNode{v}^2})=\Distance(u, \PNode{v})$ and  $\To_u({\PNode{v}^1})=\Distance(\PNode{v}, u)$.
\item For all strictly partial nodes $\PNode{u},\PNode{v}\in \PNode{V}$ we have
$\Fwd_{\PNode{u}^1}(\PNode{v}^2)=\Distance(\PNode{u}, \PNode{v})$ and
$\Bwd_{\PNode{u}^2}(\PNode{v}^1)=\Distance(\PNode{v},\PNode{u})$.
\end{compactenum}
\end{lemma}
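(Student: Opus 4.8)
The plan is to reduce all three assertions to two correctness invariants, one for each of the two computation phases of $\Concpreprocessalgo$ — the \emph{local distance} phase (\cref{item:ld}) and the \emph{ancestor distance} phase (\cref{item:ancestors}) — and then derive the statements about the strictly partial copies $\PNode{u}^1,\PNode{u}^2$ from \cref{prop:pnode_distances}. Concretely, I would first prove: (P1) after \cref{item:ld}, for every bag $\PNode{\Bag}$ of $\PNode{T}$ and all partial nodes $\PNode{u},\PNode{v}\in\PNode{\Bag}$ with $\Level(\PNode{v})\le\Level(\PNode{u})$ we have $\Fwd_{\PNode{u}}(\PNode{v})=\Distance_{\PNode{G}}(\PNode{u},\PNode{v})$ and $\Bwd_{\PNode{u}}(\PNode{v})=\Distance_{\PNode{G}}(\PNode{v},\PNode{u})$ (equivalently $\Weightmap_{\PNode{\Bag}}$ holds all-pairs distances inside $\PNode{\Bag}$); and (P2) after \cref{item:ancestors}, for every node $u$ and every partial node $\PNode{v}\in\AncestV{\PNode{V}}_{\PNode{T}}(\PNode{\Bag}_u)$ we have $\From_u(\PNode{v})=\Distance_{\PNode{G}}(u,\PNode{v})$ and $\To_u(\PNode{v})=\Distance_{\PNode{G}}(\PNode{v},u)$. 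Assertion~1 is then exactly (P2) specialized to an actual node $v$ (using that $\PNode{\Bag}_u\in\PNode{T}(\PNode{\Bag}_v)$ is equivalent to $v\in\AncestV{\PNode{V}}_{\PNode{T}}(\PNode{\Bag}_u)$, and that a path of $\PNode{G}$ between two actual nodes never passes through a strictly partial node — since $\PNode{w}^1$ has no incoming and $\PNode{w}^2$ no outgoing edges — so $\Distance_{\PNode{G}}(u,v)=\Distance_G(u,v)$). For assertions~2 and~3 I would use that every strictly partial copy is inserted into the root bag of $\PNode{T}$, hence its root bag \emph{is} the root of $\PNode{T}$: thus $\PNode{v}^2\in\AncestV{\PNode{V}}_{\PNode{T}}(\PNode{\Bag}_u)$ for every node $u$, so (P2) gives $\From_u(\PNode{v}^2)=\Distance_{\PNode{G}}(u,\PNode{v}^2)$ and $\To_u(\PNode{v}^1)=\Distance_{\PNode{G}}(\PNode{v}^1,u)$, and \cref{prop:pnode_distances} rewrites these as $\Distance(u,\PNode{v})$ and $\Distance(\PNode{v},u)$, which is assertion~2; for assertion~3, $\PNode{u}^1$ and $\PNode{v}^2$ share the root of $\PNode{T}$ as their root bag, so (P1) gives $\Fwd_{\PNode{u}^1}(\PNode{v}^2)=\Distance_{\PNode{G}}(\PNode{u}^1,\PNode{v}^2)$ and $\Bwd_{\PNode{u}^2}(\PNode{v}^1)=\Distance_{\PNode{G}}(\PNode{v}^1,\PNode{u}^2)$, and two applications of \cref{prop:pnode_distances} collapse these to $\Distance(\PNode{u},\PNode{v})$ and $\Distance(\PNode{v},\PNode{u})$.

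For (P1) I would run the familiar two-pass argument on the tree decomposition $\PNode{T}$ of $\PNode{G}$ (which is a tree decomposition by \cref{lem:partial_tree_decomp}). After the bottom-up pass (\cref{item:bottom_up}) I claim, by induction on the height of $\PNode{\Bag}$, that $\Weightmap_{\PNode{\Bag}}(\PNode{u},\PNode{v})$ equals the $\oplus$-combination of weights of all $\PNode{u}\Path\PNode{v}$ paths whose nodes all lie in $V_{\PNode{T}}(\PNode{\Bag})$: by condition (C3) any such path that descends into a child subtree enters and leaves it through the child's separator bag, so its contribution is already recorded as an updated edge weight by the inductive hypothesis, and \cref{lem:u_v_dist} is exactly the statement that lets one factor the path through these separators. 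The top-down pass (\cref{item:top_down}) then incorporates the contributions of paths that leave $V_{\PNode{T}}(\PNode{\Bag})$: by (C2) and (C3) every such path traverses the separator $\PNode{\Bag}\cap\PNode{\Bag}'$, where $\PNode{\Bag}'$ is the parent of $\PNode{\Bag}$, and \cref{lem:u_v_dist} again factors the distance through this separator; induction on $\Level(\PNode{\Bag})$ closes this step. At the root bag $V_{\PNode{T}}(\PNode{\Bag})=\PNode{V}$, so the value is already the full distance after the bottom-up pass, which is what assertion~3 needs.

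For (P2) I would induct on $\Level(\PNode{\Bag}_u)$, following the DFS order of \cref{item:ancestors}, which visits ancestor bags before descendant ones. The base case $\Level(\PNode{\Bag}_u)=0$ is immediate, since $\From_u,\To_u$ are then initialized from $\Fwd_u,\Bwd_u$, which are correct by (P1). For the step, put $\PNode{\Bag}=\PNode{\Bag}_u$ with parent $\PNode{\Bag}'$ and let $\PNode{v}\in\AncestV{\PNode{V}}_{\PNode{T}}(\PNode{\Bag})$. Applying \cref{lem:u_v_dist} to the $\PNode{T}$-path from $\PNode{\Bag}$ up to $\PNode{\Bag}_{\PNode{v}}$ and grouping over its first separator $\PNode{\Bag}\cap\PNode{\Bag}'$, infinite distributivity gives $\Distance_{\PNode{G}}(u,\PNode{v})=\bigoplus_{x\in\PNode{\Bag}\cap\PNode{\Bag}'}\Distance_{\PNode{G}}(u,x)\otimes\Distance_{\PNode{G}}(x,\PNode{v})$. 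Now $\Distance_{\PNode{G}}(u,x)=\Fwd_u(x)$ by (P1) since $x\in\PNode{\Bag}_u$, and $\Distance_{\PNode{G}}(x,\PNode{v})=\Weightmap^{+}(x,\PNode{v})$ by the induction hypothesis, because the root bags of both $x$ and $\PNode{v}$ are proper ancestors of $\PNode{\Bag}_u$ and were already processed in the DFS; comparing with \eqref{eq:from} yields $\From_u(\PNode{v})=\Distance_{\PNode{G}}(u,\PNode{v})$, and the $\To$ case is symmetric via \eqref{eq:to} and the reversed path.

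The step I expect to be the main obstacle is making (P2) fully rigorous — in particular arguing that a \emph{single} top-down pass suffices. This hinges on the separator property: every occurrence of $u$ lies in $\PNode{T}(\PNode{\Bag}_u)$ by definition of the root bag, so any walk from $u$ to a node $\PNode{v}$ outside that subtree must first reach the parent separator $\PNode{\Bag}_u\cap\PNode{\Bag}'$, and the prefix up to that point contributes precisely to $\Fwd_u$ as computed in (P1) — crucially the \emph{global} distance inside $\PNode{G}$, not merely the distance restricted to $\PNode{T}(\PNode{\Bag}_u)$, which is exactly why the prior two-pass phase is performed first and why it is correct to invoke it here. The remaining work is bookkeeping: checking that the index sets in the successive applications of \cref{lem:u_v_dist} line up, and that the regroupings are licensed by the infinite distributivity of $\otimes$ over $\oplus$ and by closedness of the semiring (which handles self-loops at separator nodes). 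These steps are routine but tedious, and I would relegate them to \cref{sec:concurrent_proofs}.
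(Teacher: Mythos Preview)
Your proposal is correct and follows essentially the same approach as the paper: you establish the same three invariants (for \cref{item:bottom_up}, \cref{item:top_down}, and \cref{item:ancestors}) by the same inductions, each leaning on \cref{lem:u_v_dist}, and then read off the three assertions using that all strictly partial copies sit in the root bag. Your write-up is in fact somewhat more careful than the paper's own proof---you make explicit the appeal to \cref{prop:pnode_distances} for assertions~2 and~3, and the observation that paths in $\PNode{G}$ between actual nodes never visit strictly partial copies (so $\Distance_{\PNode{G}}(u,v)=\Distance_G(u,v)$), both of which the paper leaves implicit.
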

\begin{proof}
We describe the key invariants that hold during the traversals of $\PNode{T}$ by $\Concpreprocessalgo$
in \cref{item:bottom_up}, \cref{item:top_down} and \cref{item:ancestors} after the algorithm processes a bag $\PNode{\Bag}$.
All cases depend on \cref{lem:u_v_dist}.
\begin{compactenum}
\item[\em \cref{item:bottom_up}] For every pair of partial nodes $\PNode{u}$, $\PNode{v}\in \PNode{\Bag}$
such that $\Level(\PNode{v})\leq \Level(\PNode{u})$ we have $\Fwd_{\PNode{u}}(\PNode{v})=\bigoplus_{P_1}\otimes(P_{1})$ 
and $\Bwd_{\PNode{u}}(\PNode{v})=\bigoplus_{P_2}\otimes(P_{2})$ where $P_1$ and $P_2$ are $\PNode{u}\Path \PNode{v}$
and $\PNode{v}\Path\PNode{u}$ paths respectively that only traverse nodes in $\PNode{V}_{\PNode{T}}(\PNode{\Bag})$.
The statement follows by a straightforward induction on the levels processed by the algorithm in the bottom-up pass.
Note that if $\PNode{u}$ and $\PNode{v}$ are partial nodes in the root of $\PNode{T}$, the statement yields
$\Fwd_{\PNode{u}}(\PNode{v})=\Distance(u,v)$ and $\Bwd_{\PNode{u}}(\PNode{v})=\Distance(v,u)$.
\item[\em \cref{item:top_down}] The invariant is similar to the previous, except that $P_1$ and $P_2$
range over all $\PNode{u}\Path \PNode{v}$ and $\PNode{v}\Path\PNode{u}$ paths in $\PNode{G}$ respectively.
Hence now $\Fwd_{\PNode{u}}(\PNode{v})=\Distance(\PNode{u}, \PNode{v})$ and 
$\Bwd_{\PNode{u}}(\PNode{v})=\Distance(\PNode{v}, \PNode{u})$.
The statement follows by a straightforward induction on the levels processed by the algorithm in the top-down pass.
Note that the base case on the root follows from the previous item, where the maps $\Bwd$ and $\Fwd$ store actual distances.
\item[\em \cref{item:ancestors}] For every node $u\in \PNode{\Bag}$ and partial node $\PNode{v}\in \AncestV{\PNode{V}}_{\PNode{T}}(\PNode{\Bag})$ we have
$\From_{u}(\PNode{v})=\Distance(u, \PNode{v})$ and 
$\To_{u}(\PNode{v})=\Distance(\PNode{v},u)$.
The statement follows from \cref{lem:u_v_dist} and a straightforward induction on the length of the path from the root of $\PNode{T}$
to the processed bag $\PNode{\Bag}$.
\end{compactenum}
Statement~1 of the lemma follows from \cref{item:ancestors}.
Similarly for statement~2, together with the observation that every strictly partial node $\PNode{v}$ 
appears in the root of $\PNode{T}$, and thus $\PNode{v}\in \AncestV{\PNode{V}}_{\PNode{T}}(\PNode{\Bag}_u)$.
Finally, statement~3 follows again from the fact that all strictly partial nodes appear in the root bag of $\PNode{T}$.
The desired result follows.
\end{proof}

We now consider the complexity analysis, and we start with a technical lemma on recurrence relations.

\smallskip
\begin{restatable}{lemma}{concrecurrence}\label{lem:conc_recurrence}
Consider the recurrences in \cref{eq:rec1} and \cref{eq:rec2}.
\begin{align*}
\Time_k(n) &\leq n^{3\cdot (k-1)} + 2^{\lambda\cdot k} \cdot \Time_k\left(n\cdot \left(\frac{1+\delta}{2}\right)^{\lambda-1}\right)
\numberthis \label{eq:rec1}
\end{align*}
\begin{align*}
\Space_k(n) &\leq n^{2\cdot (k-1)} + 2^{\lambda\cdot k} \cdot \Space_k\left(n\cdot \left(\frac{1+\delta}{2}\right)^{\lambda-1}\right)
\numberthis \label{eq:rec2}
\end{align*}
Then
\begin{compactenum}
\item $\Time_k(n)=O(n^{3\cdot (k-1)})$, and
\item (i)~$\Space_k(n)=O(n^{2\cdot(k-1)})$ if $k\geq 3$, and (ii)~$\Space_2(n)=O(n^{2+\eps})$.
\end{compactenum}
\end{restatable}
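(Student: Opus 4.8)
The plan is to recognise both \cref{eq:rec1} and \cref{eq:rec2} as instances of a single ``shrinking-argument'' recurrence and solve it Master-theorem style. Write $F(n)\le n^{d}+b\cdot F(c\cdot n)$, where $b=2^{\lambda\cdot k}$, $c=\bigl(\tfrac{1+\delta}{2}\bigr)^{\lambda-1}$, and $d=3\cdot(k-1)$ for \cref{eq:rec1} while $d=2\cdot(k-1)$ for \cref{eq:rec2} (with the convention $F(n)=O(1)$ for $n$ below a constant, at which the recursion of \cref{them:product} terminates). We will only use $0<\delta<1$, so $0<c<1$ and the argument reaches constant size after $L=O(\log n)$ levels. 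Unrolling gives
\[
F(n)\;\le\; n^{d}\cdot\sum_{i=0}^{L-1}\bigl(b\cdot c^{d}\bigr)^{i}\;+\;O\!\bigl(b^{L}\bigr).
\]
Everything now depends on how $b\cdot c^{d}$ compares to $1$, equivalently on how the exponent $\rho:=\log_{1/c}b=\frac{\lambda\cdot k\cdot\ln 2}{(\lambda-1)\cdot\ln\frac{2}{1+\delta}}$ compares to $d$ (since $b\cdot c^{d}<1$ iff $\rho<d$, and $b\cdot c^{d}>1$ iff $\rho>d$). When $\rho<d$ the geometric sum is $O(1)$ and $b^{L}=O(n^{\rho})=O(n^{d})$, hence $F(n)=O(n^{d})$. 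When $\rho>d$ the sum is dominated by its last term and a short computation (using $c^{L}=\Theta(1/n)$) collapses $n^{d}\cdot\bigl(b\cdot c^{d}\bigr)^{L}$ to $O(b^{L})=O(n^{\rho})$, hence $F(n)=O(n^{\rho})$; the borderline case $\rho=d$ is never needed and will be avoided by keeping all inequalities strict.

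It remains to pick the constants $\lambda\in\Nats$ (with $\lambda\ge 2$) and $\delta>0$. Since $\rho\to\frac{\lambda\cdot k}{\lambda-1}$ as $\delta\to 0$, for \cref{eq:rec1} (where $d=3(k-1)$) it suffices to ensure $\frac{\lambda\cdot k}{\lambda-1}<3(k-1)$; this rearranges to $\lambda\cdot(2k-3)>3(k-1)$, which — because $2k-3\ge 1$ for $k\ge 2$ — holds for every sufficiently large constant $\lambda$. Fixing such a $\lambda$ and then choosing $\delta>0$ small enough, continuity yields $\rho<d$ and so $\Time_k(n)=O(n^{3(k-1)})$. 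For \cref{eq:rec2} with $k\ge 3$ (where $d=2(k-1)$) the analogous requirement is $\lambda\cdot(k-2)>2(k-1)$, which holds for large constant $\lambda$ since $k-2\ge 1$; this gives $\Space_k(n)=O(n^{2(k-1)})$. Since $k$ is fixed, a single choice of $\lambda$ and a single small $\delta$ meet all of these requirements at once.

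The one genuinely different case — and the main point of the lemma — is $k=2$ in \cref{eq:rec2}, where $d=2$ but $\lambda\cdot(k-2)=0$, so $\rho$ can never be pushed below $2$; here we land in the $\rho>d$ branch and must instead bound $\rho$ itself. As $\delta\to 0$ we have $\rho\to\frac{2\lambda}{\lambda-1}$, which decreases to $2$ as $\lambda\to\infty$. Hence, given the target $\eps>0$, we first choose a constant $\lambda$ large enough that $\frac{2\lambda}{\lambda-1}<2+\tfrac{\eps}{2}$, and then choose $\delta>0$ small enough that $2<\rho<2+\eps$; the $\rho>d$ case of the analysis then yields $\Space_2(n)=O(n^{\rho})=O(n^{2+\eps})$. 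The main obstacle is thus not the recurrence-solving, which is routine, but checking that the two-parameter family $(\lambda,\delta)$ can simultaneously meet all the strict inequalities above — in particular that in the degenerate $k=2$ space case the bound degrades only by an arbitrarily small polynomial factor rather than failing outright.
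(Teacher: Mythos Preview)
Your proposal is correct and follows essentially the same approach as the paper: both recognise the recurrences as geometric in the shrinking argument and verify that the ratio $b\cdot c^{d}=2^{\lambda k}\cdot\bigl(\tfrac{1+\delta}{2}\bigr)^{d(\lambda-1)}$ is bounded below~$1$ for appropriate constants (your condition $\rho<d$ is exactly the paper's condition that the exponent in its expressions~\cref{eq:time_leq1} and~\cref{eq:space_leq1} is negative). The paper carries this out with explicit constants $\lambda\ge 4/\eps$ and $\delta\le \eps/18$ and the elementary bound $\log(1+\delta)<2\delta$, whereas you argue existence via the limit $\rho\to\lambda k/(\lambda-1)$; both are fine.

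The only genuine stylistic difference is in the $k=2$ space case: the paper first inflates the cost term from $n^{2}$ to $n^{2+\eps}$ and then shows the resulting series is geometrically decreasing (i.e.\ it forces the $\rho<d$ regime with $d=2+\eps$), while you keep $d=2$, land in the leaf-dominated $\rho>d$ branch of the Master-style analysis, and bound $\rho$ itself below $2+\eps$. These are equivalent; your route is a bit more uniform since it treats all three cases with the same dichotomy, while the paper's route avoids having to argue the $\rho>d$ branch at all.
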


The proof of \cref{lem:conc_recurrence} is technical, and presented in \cref{sec:concurrent_proofs}.
The following lemma analyzes the complexity of $\Concpreprocessalgo$, and makes use of the above recurrences.
Recall that $\Concpreprocessalgo$ takes as part of its input a desired constant $0<\eps\leq 1$.
We choose a $\lambda\in \Nats$ and $\delta\in \Reals$ such that $\lambda\geq 4/\eps$ and $\delta\leq \eps/18$.
Additionally, we set  $\alpha=\Bagfactor$, $\beta=\Balfactor$ and $\gamma=\Levelfactor$,
which are the constants used for constructing an $(\alpha, \beta, \gamma)$ tree-decomposition
$T_i=\Tree(G_i)$ in \cref{item:tree_decomp} of $\Concpreprocessalgo$.

\smallskip
\begin{lemma}\label{lem:preproces_complexity}
$\Concpreprocessalgo$ requires $O(n^{2\cdot k-1})$ space and
\begin{inparaenum}
\item $O(n^{3\cdot(k-1)})$ time if $k\geq 3$, and
\item $O(n^{3+\eps})$ time if $k=2$.
\end{inparaenum}

\end{lemma}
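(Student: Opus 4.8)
The plan is to bound separately the cost of each of the five steps of $\Concpreprocessalgo$ and then take the maximum. Throughout, $k$ is a fixed constant and every node of $G$ strictly refines at most $2^{k}-1=O(1)$ partial nodes; hence the partial expansion $\PNode{G}$ built in \cref{item:partial_exp} has $O(n^{k})$ nodes and $O(n^{k})$ edges, just like $G$, and passing from $T$ to $\PNode{T}$ in \cref{item:partial_tree_dec} enlarges every bag by at most a constant factor. Combining this with \cref{them:tree_decomp} and \cref{them:product}, $\PNode{T}$ is a $2^{k}$-ary tree-decomposition of $\PNode{G}$ (\cref{lem:partial_tree_decomp}) with $O(n^{k})$ bags, width $O(n^{k-1})$, and such that every bag $\PNode{\Bag}$ at level $\Level(\PNode{\Bag})\geq i\cdot\gamma$ has $|\PNode{\Bag}|=O(n^{k-1}\cdot\beta^{i})$, where $\beta=\Balfactor$ and $\gamma=\Levelfactor$; in particular the root bag, which contains all $O(n^{k-1})$ strictly partial nodes, has size $O(n^{k-1})$.

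Steps \cref{item:partial_exp} and \cref{item:partial_tree_dec} are cheap: the partial expansion and $T=\ProductTree(G)$ cost $O(n^{k})$ time and space (by \cref{them:product}), each component decomposition costs $O(n\log n)$ time and $O(n)$ space (by \cref{them:tree_decomp}), and the LCA preprocessing is $O(n^{k})$. For the local distance computation \cref{item:ld}, the bottom-up and top-down passes each run, per bag $\PNode{\Bag}$, a cubic transitive-closure computation on $\PNode{G}\restr{\PNode{\Bag}}$, costing $O(|\PNode{\Bag}|^{3})$, while the maps $\Fwd_{\PNode{u}},\Bwd_{\PNode{u}}$ use $O(|\PNode{\Bag}_{\PNode{u}}|)$ space each. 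Summing over a subtree and grouping levels into blocks of $\gamma$, the geometric bag-size decay shows that $\sum_{\PNode{\Bag}}|\PNode{\Bag}|^{3}$ satisfies \cref{eq:rec1} and $\sum_{\PNode{\Bag}}|\PNode{\Bag}|^{2}$ satisfies \cref{eq:rec2}: the $O(2^{\lambda k})$ bags in the $\gamma$ topmost levels of a block have width $O(n^{k-1})$ and are absorbed into the leading term, whereas the $\leq 2^{\lambda k}$ subtrees rooted $\gamma$ levels below have all component parameters at most $\beta\cdot n$. By \cref{lem:conc_recurrence}, Step \cref{item:ld} thus runs in $O(n^{3(k-1)})$ time and in $O(n^{2(k-1)})$ space for $k\geq 3$, and in $O(n^{2+\eps})$ space for $k=2$; the transient space for the largest transitive closure is $O(n^{2(k-1)})$, of the same order.

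For the ancestor distance computation \cref{item:ancestors}, the crucial sub-claim is that $|\AncestV{\PNode{V}}_{\PNode{T}}(\PNode{\Bag})|=O(n^{k-1})$ for every bag: the ancestors of $\PNode{\Bag}$ lie on a single root-to-$\PNode{\Bag}$ path and, grouped into blocks of $\gamma$ levels, contribute $\sum_{i\geq 0}O(\gamma\cdot n^{k-1}\beta^{i})=O(n^{k-1})$ nodes (the root bag alone already accounts for $O(n^{k-1})$ of them). Consequently each map $\From_{u},\To_{u}$ has size $O(n^{k-1})$, and over the $O(n^{k})$ actual nodes $u$ these maps use $O(n^{2k-1})$ space, which dominates the space of every other step. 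For the running time, evaluating \cref{eq:from} and \cref{eq:to} for a node $u$ with root bag $\PNode{\Bag}_u$ costs $O\big(|\AncestV{\PNode{V}}_{\PNode{T}}(\PNode{\Bag}_{u})|\cdot|\PNode{\Bag}_{u}|\big)=O(n^{k-1}\cdot|\PNode{\Bag}_{u}|)$, so the total time of Step \cref{item:ancestors} is $O\big(n^{k-1}\cdot\sum_{\PNode{\Bag}}|\PNode{\Bag}|^{2}\big)$, which by \cref{lem:conc_recurrence} equals $O(n^{k-1}\cdot n^{2(k-1)})=O(n^{3(k-1)})$ when $k\geq 3$ and $O(n^{1}\cdot n^{2+\eps})=O(n^{3+\eps})$ when $k=2$.

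Putting the pieces together, the total space is $O(n^{2k-1})$ (the $\From/\To$ maps), and the total time is the maximum of the above: $O(n^{3(k-1)})$ when $k\geq 3$ (which subsumes the $O(n^{k})$ of the first two steps since $3(k-1)\geq k$ there) and $O(n^{3+\eps})$ when $k=2$ (here the $n^{\eps}$ overhead stems solely from Step \cref{item:ancestors}, as Step \cref{item:ld} already costs only $O(n^{3})$). The main obstacle is the bookkeeping in the second and third paragraphs: one must see that the strongly-balanced property of \cref{them:tree_decomp}, propagated through \cref{them:product}, yields a geometric decay of bag sizes over blocks of $\gamma$ levels, and then match the per-bag costs exactly to the recurrences of \cref{lem:conc_recurrence}, whose solutions rely on the choices $\lambda\geq 4/\eps$ and $\delta\leq\eps/18$ that make the branching-versus-shrinking ratios $2^{\lambda k}\beta^{3(k-1)}$ and $2^{\lambda k}\beta^{2(k-1)}$ either smaller than $1$ or, in the $k=2$ space case, large enough only to incur the extra $n^{\eps}$ factor.
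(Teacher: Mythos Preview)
Your proposal is correct and follows essentially the same approach as the paper: bounding each of the five steps separately, using \cref{them:product} for the tree-decomposition construction, reducing the cubic and quadratic per-bag costs of \cref{item:ld} and \cref{item:ancestors} to the recurrences of \cref{lem:conc_recurrence} via the geometric bag-size decay, and obtaining the $O(n^{2k-1})$ space from the $O(n^{k-1})$ bound on $|\AncestV{\PNode{V}}_{\PNode{T}}(\PNode{\Bag})|$ times the $O(n^{k})$ actual nodes. Your additional remarks---that the transient space of the per-bag transitive closure is $O(n^{2(k-1)})$ and hence harmless, and that for $k=2$ the $n^{\eps}$ overhead arises solely from \cref{item:ancestors} while \cref{item:ld} is already $O(n^{3})$---are correct and in fact anticipate \cref{cor:ld}.
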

\begin{proof}
We examine each step of the algorithm separately.
\begin{compactenum}
\item The time and space required for this step is bounded by the number of nodes introduced in the partial expansion $\PNode{G}$,
which is $2\cdot \sum_{i<k} {n \choose i}=O(n^{k-1})$.
\item By \cref{them:product}, $\ProductTree(G)$ is constructed in $O(n^k)$ time and space.
In $\PNode{T}$, the size of each bag $\PNode{\Bag}$ is increased by constant factor, hence this step requires $O(n^k)$ time and space.
\item In each pass, $\Concpreprocessalgo$ spends $|\PNode{\Bag}|^3$ time to perform an all-pairs algebraic paths computation in each bag $\PNode{\Bag}$ of $\PNode{T}$~\cite{Lehmann77,Floyd62,Warshall62,Kleene56}.
The space usage for storing all maps $\Fwd_{\PNode{u}}$ and $\Bwd_{\PNode{u}}$ for every node $\PNode{u}$ whose root bag is $\PNode{\Bag}$
is $O(|\PNode{\Bag}|^2)$, since there are at most $|\PNode{\Bag}|$ such nodes $\PNode{u}$, and each map has size $|\PNode{\Bag}|$.
By the previous item, we have $|\PNode{\Bag}|=O(|\Bag|)$, where $\Bag$ is the corresponding bag of $T$ before the partial expansion of $G$.
By \cref{them:product}, we have $|\Bag|=O(n^{k-1}\cdot \beta^{i})$, where $\Level(\Bag)\geq i\cdot \gamma=i\cdot \Levelfactor$,
and $\beta=\Balfactor$. Then, since $\PNode{T}$ is a full $2^k$-ary tree,
the time and space required for preprocessing every $\gamma=\lambda$ levels of $\PNode{T}$
is given by the following recurrences respectively (ignoring constant factors for simplicity).

\[
\Time_k(n) \leq n^{3\cdot (k-1)} + 2^{\lambda\cdot k} \cdot \Time_k\left(n\cdot \left(\frac{1+\delta}{2}\right)^{\lambda-1}\right)
\]
\[
\Space_k(n) \leq n^{2\cdot (k-1)} + 2^{\lambda\cdot k} \cdot \Space_k\left(n\cdot \left(\frac{1+\delta}{2}\right)^{\lambda-1}\right)
\]

By the analysis of \cref{eq:rec1} and \cref{eq:rec2} of \cref{lem:conc_recurrence}, we have that
$\Time_k(n)=O(n^{3\cdot (k-1)})$  and (i)~$\Space_k(n)=O(n^{2\cdot(k-1)})$ if $k\geq 3$, and (ii)~$\Space_2(n)=O(n^{2+\eps})$.

\item
We first focus on the space usage. Let $\PNode{\Bag}_u^i$ denote the ancestor bag of $\PNode{\Bag}_u$ at level $i$.
We have 
\begin{align*}
|\AncestV{\PNode{V}}_{\PNode{T}}(\PNode{\Bag}_u)| & =
\sum_i |\PNode{\Bag}_u^i|  \leq c_1\cdot \sum_i |\PNode{\Bag}_u^{\lfloor i/\gamma \rfloor }|\leq 
c_2\cdot   \sum_i  |\Bag_u^{\lfloor i/\gamma \rfloor }|\\
&\leq c_3 \cdot \sum_i \left(n^{k-1} \cdot \beta^i\right)  = O(n^{k-1})
\end{align*}
for some constants $c_1, c_2, c_3$.
The first inequality comes from expressing the size of all (constantly many) ancestors $\PNode{\Bag}_u^i$
with $\lfloor i/\gamma \rfloor =j$ as a constant factor the size of $\PNode{\Bag}_u^{\lfloor i/\gamma \rfloor }$.
The second inequality comes from \cref{item:partial_exp} of this lemma, which states that $O(|\PNode{\Bag}|)=O(|\Bag|)$ for every bag $\PNode{\Bag}$. The third inequality comes from \cref{them:product}.
By \cref{item:partial_tree_dec}, there are $O(n^k)$ such nodes $u$ in $\PNode{T}$, hence the space required is $O(n^{2\cdot k-1})$.

We now turn our attention to the time requirement.
For every bag $\PNode{\Bag}$, the algorithm requires $O(|\PNode{\Bag}|^2)$ time to iterate over all pairs of nodes
$u$ and $x$ in \cref{eq:from} and \cref{eq:to} to compute the values $\From_u(\PNode{v})$ and $\To_u(\PNode{v})$
for every $\PNode{v}\in \AncestV{\PNode{V}}_{\PNode{T}}(\PNode{\Bag})$. 
Hence the time required for all nodes $u$ and one partial node $\PNode{v}\in \AncestV{\PNode{V}}_{\PNode{T}}(\PNode{\Bag})$ to store the maps values $\From_u(\PNode{v})$ and $\To(\PNode{v})$ is given by the recurrence
\[
\Time_k(n) \leq n^{2\cdot (k-1)} + 2^{\lambda\cdot k} \cdot \Time_k\left(n\cdot \left(\frac{1+\delta}{2}\right)^{\lambda-1}\right)
\]
The analysis of \cref{eq:rec1} and \cref{eq:rec2} of \cref{lem:conc_recurrence}
gives $\Time_k(n)=O(n^{2\cdot (k-1)})$ for $k\geq 3$ and $\Time_2(n)=O(n^{2+\eps})$
(i.e., the above time recurrence is analyzed as the recurrence for $\Space_k$ 
of \cref{lem:conc_recurrence}).
From the space analysis we have that there exist $O(n^{k-1})$ partial nodes $\PNode{v}\in \AncestV{\PNode{V}}_{\PNode{T}}(\PNode{\Bag})$
for every node $u$ whose root bag is $\PNode{\Bag}$.
Hence the total time for this step is $O(n^{3\cdot (k-1)})$ for $k\geq 3$, and  $O(n^{3+\eps})$ for $k=2$.
\item This step requires time linear in the size of $\PNode{T}$~\cite{Tarjan84}.
\end{compactenum}
The desired result follows.
\end{proof}

\smallskip\noindent{\bf Algorithm $\Concqueryalgo$.}
In the query phase, $\ConcurAP$ answers distance queries using the algorithm $\Concqueryalgo$.
We distinguish three cases, according to the type of the query. 
\begin{compactenum}
\item\label{item:ss_query} \emph{Single-source query.} Given a source node $u$,
initialize a map data-structure $A:V \rightarrow \Sigma$, and initially set $A(v)= \Fwd_u(v)$ for all $v\in \PNode{\Bag}_u$,
and $A(v)=\Zero$ for all other nodes $v\in V\setminus \PNode{\Bag}_u$.
Perform a BFS on $\PNode{T}$ starting from $\PNode{\Bag}_{u}$, 
and for every encountered bag $\PNode{\Bag}$ and nodes $x,v\in \PNode{\Bag}$ with $\Level(v)\leq \Level(x)$,
set $A(v)=A(v)\oplus (A(x) \otimes \Fwd_x(v))$.
Return the map $A$.
\item\label{item:p_query} \emph{Pair query.} Given two nodes $u,v\in V$, find the LCA $\PNode{\Bag}$ of bags $\PNode{\Bag}_u$ and $\PNode{\Bag}_v$.
Return $\bigoplus_{x\in \PNode{\Bag}\cap V} (\From_u(x) \otimes \To_v(x))$.
\item\label{item:pp_query} \emph{Partial pair query.} Given two partial nodes $\PNode{u}, \PNode{v}$, 
\begin{compactenum}
\item\label{item:spp_query} If both $\PNode{u}$ and $\PNode{v}$ are strictly partial, return $\Fwd_{\PNode{u}^1}(\PNode{v}^2)$, else
\item If $\PNode{u}$ is strictly partial, return $\To_{v}(\PNode{u}^1)$, else
\item Return $\From_{u}(\PNode{v}^2)$.
\end{compactenum}
\end{compactenum}

We thus establish the following theorem.

\smallskip
\begin{theorem}\label{them:concurrent}
Let $G=(V,E)$ be a concurrent graph of $k$ constant-treewidth graphs $(G_i)_{1\leq i\leq k}$,
and $\Weight:E\to \Sigma$ a weight function of $G$.
For any fixed $\eps>0$, 
the data-structure $\ConcurAP$ correctly answers single-source and pair queries and requires:
\begin{compactenum}
\item Preprocessing time\\ 
\begin{inparaenum}
\item $O(n^{3\cdot(k-1)})$ if $k\geq 3$, and~~
\item $O(n^{3+\eps})$ if $k=2$.
\end{inparaenum}
\item Preprocessing space $O(n^{2\cdot k-1})$.
\item Single-source query time\\
\begin{inparaenum}
\item $O(n^{2\cdot(k-1)})$ if $k\geq 3$, and~~
\item$O(n^{2+\eps})$ if $k=2$.
\end{inparaenum}
\item Pair query time $O(n^{k-1})$.
\item Partial pair query time $O(1)$.
\end{compactenum} 
\end{theorem}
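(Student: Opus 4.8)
The plan is to read the preprocessing bounds straight off \cref{lem:preproces_complexity} (preprocessing time $O(n^{3\cdot(k-1)})$ for $k\geq 3$ and $O(n^{3+\eps})$ for $k=2$, space $O(n^{2\cdot k-1})$), so that only the correctness and running time of the three branches of $\Concqueryalgo$ remain. A uniform observation used throughout: in the partial expansion $\PNode{G}$ every node $\PNode{w}^1$ has in-degree $0$ and every $\PNode{w}^2$ has out-degree $0$, so no path of $\PNode{G}$ passes \emph{through} a strictly partial node; hence distances between actual nodes coincide in $G$ and $\PNode{G}$, and by \cref{prop:pnode_distances} any strictly partial endpoint may be replaced by its matching copy. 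The \emph{partial pair} queries are then immediate: all strictly partial nodes lie in the root bag of $\PNode{T}$, so every accessed map entry is defined, and the returned value is correct by the third statement of \cref{lem:preprocess_correctness} when both endpoints are strictly partial (giving $\Fwd_{\PNode{u}^1}(\PNode{v}^2)=\Distance(\PNode{u},\PNode{v})$) and by its second statement in the mixed case (giving $\To_v(\PNode{u}^1)=\Distance(\PNode{u},v)$ or $\From_u(\PNode{v}^2)=\Distance(u,\PNode{v})$); each answer is a constant number of lookups, so $O(1)$ time.

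For a \emph{pair} query on actual nodes $u,v$, let $\PNode{\Bag}$ be the LCA of $\PNode{\Bag}_u$ and $\PNode{\Bag}_v$, obtained in $O(1)$ time. Since $\PNode{\Bag}$ lies on the simple path $\PNode{\Bag}_u\Path\PNode{\Bag}_v$ in $\PNode{T}$, \cref{lem:u_v_dist} gives $\Distance(u,v)=\bigoplus_{x\in\PNode{\Bag}\cap V}\Distance(u,x)\otimes\Distance(x,v)$; and as $\PNode{\Bag}$ is an ancestor of both $\PNode{\Bag}_u$ and $\PNode{\Bag}_v$, each $x\in\PNode{\Bag}$ lies in $\AncestV{\PNode{V}}_{\PNode{T}}(\PNode{\Bag}_u)\cap\AncestV{\PNode{V}}_{\PNode{T}}(\PNode{\Bag}_v)$, so the first statement of \cref{lem:preprocess_correctness} yields $\From_u(x)=\Distance(u,x)$ and $\To_v(x)=\Distance(x,v)$; the returned $\bigoplus_{x\in\PNode{\Bag}\cap V}\From_u(x)\otimes\To_v(x)$ is thus correct. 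Its cost is $O(|\PNode{\Bag}|)$, and by \cref{them:product} every bag of $\ProductTree(G)$---hence of $\PNode{T}$---has size $O(n^{k-1})$, giving $O(n^{k-1})$ pair-query time.

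For a \emph{single-source} query from $u$ I would prove correctness by induction over the BFS on $\PNode{T}$ started at $\PNode{\Bag}_u$, with invariant: after a bag has been processed, $A(w)$ equals the $\bigoplus$ of $\otimes(P)$ over all $u\Path w$ paths $P$ lying in the bags visited so far. The base case is the initialization $A(w)=\Fwd_u(w)$, which equals $\Distance(u,w)$ for $w\in\PNode{\Bag}_u$ after the two passes of \cref{item:ld}; the inductive step relaxes a newly reached bag $\PNode{\Bag}$ using the (global) co-bag distances already stored in $\Fwd$/$\Bwd$, and \cref{lem:u_v_dist} together with condition~(C3) shows this correctly extends distances across the tree edge joining $\PNode{\Bag}$ to the previously processed bag. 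Since every bag on the path $\PNode{\Bag}_u\Path\PNode{\Bag}_w$ is eventually visited, $A(w)=\Distance(u,w)$ for all $w\in V$ at termination. For the running time, the BFS touches each of the $O(n^k)$ bags once and spends $O(|\PNode{\Bag}|^2)$ per bag; grouping the levels of $\PNode{T}$ in blocks of $\gamma=\lambda$ and using $|\PNode{\Bag}|=O(n^{k-1}\cdot\beta^{i})$ at level $\geq i\cdot\gamma$ from \cref{them:product}, the total obeys $\Time_k(n)\leq n^{2\cdot(k-1)}+2^{\lambda\cdot k}\cdot\Time_k\!\left(n\cdot((1+\delta)/2)^{\lambda-1}\right)$, i.e.\ \cref{eq:rec2} of \cref{lem:conc_recurrence}; hence the single-source query time is $O(n^{2\cdot(k-1)})$ for $k\geq 3$ and $O(n^{2+\eps})$ for $k=2$, which absorbs the $O(n^k)$ initialization.

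I expect the main obstacle to be the single-source correctness invariant---verifying that relaxing \emph{only within bags}, using $\Fwd$/$\Bwd$ maps that by \cref{item:ld} already hold global distances between co-bag nodes after the bottom-up and top-down passes, reconstructs $\Distance(u,w)$ with neither missed nor double-counted contributions; this is exactly where the path decomposition of \cref{lem:u_v_dist} and condition~(C3) are used. Everything else is assembly: the preprocessing bounds from \cref{lem:preproces_complexity}, the $O(n^{k-1})$ bag-size bound from \cref{them:product} for the pair-query time, and the recurrence solution from \cref{lem:conc_recurrence} for the single-source-query time.
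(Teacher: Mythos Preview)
Your proposal is correct and follows essentially the same approach as the paper's own proof: preprocessing bounds from \cref{lem:preproces_complexity}, correctness from \cref{lem:u_v_dist} and \cref{lem:preprocess_correctness}, pair-query time from the $O(n^{k-1})$ bag-size bound, single-source time from the $\Space_k$ recurrence of \cref{lem:conc_recurrence}, and $O(1)$ lookups for partial pair queries. In fact you supply more detail than the paper does---the in/out-degree observation for strictly partial nodes, the explicit BFS invariant for single-source correctness, and the check that $\PNode{\Bag}_x$ is an ancestor of both $\PNode{\Bag}_u$ and $\PNode{\Bag}_v$ so that statement~1 of \cref{lem:preprocess_correctness} applies---whereas the paper simply cites the lemmas and leaves these verifications implicit.
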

\begin{proof}
The correctness of $\Concqueryalgo$ for handling all queries follows from \cref{lem:u_v_dist}
and the properties of the preprocessing established in \cref{lem:preprocess_correctness}.
The preprocessing complexity is stated in \cref{lem:preproces_complexity}.
The time complexity for the single-source query comes from the observation that $\Concqueryalgo$ spends
quadratic time in each encountered bag, and the result follows from the 
recurrence analysis of \cref{eq:rec2} in \cref{lem:conc_recurrence}.
The time complexity for the pair query follows from the $O(1)$ time to access the LCA bag $\PNode{\Bag}$ of $\PNode{\Bag}_u$ and $\PNode{\Bag}_v$,
and the $O(|\PNode{\Bag}|)=O(n^{k-1})$ time required to iterate over all nodes $x\in \PNode{\Bag}\cap V$.
Finally, the time complexity for the partial pair query follows from the $O(1)$ time lookup in the constructed maps $\Fwd$, $\From$ and $\To$.
\end{proof}

Note that a single-source query from a strictly partial node $\PNode{u}$
can be answered in $O(n^{k})$ time by breaking it down to $n^k$ partial pair queries.

The most common case in analysis of concurrent programs is that of two threads,
for which we obtain the following corollary.

\smallskip
\begin{corollary}\label{cor:concurrent}
Let $G=(V,E)$ be a concurrent graph of two constant-treewidth graphs $G_1, G_2$,
and $\Weight:E\to \Sigma$ a weight function of $G$.
For any fixed $\eps>0$,
the data-structure $\ConcurAP$ correctly answers single-source and pair queries and requires:
\begin{compactenum}
\item Preprocessing time $O(n^{3+\eps})$.
\item Preprocessing space $O(n^{3})$.
\item Single-source query time $O(n^{2+\eps})$.
\item Pair query time $O(n)$.
\item Partial pair query time $O(1)$.
\end{compactenum} 
\end{corollary}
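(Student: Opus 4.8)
The plan is to obtain this corollary as the immediate $k=2$ specialization of \cref{them:concurrent}; no new construction or analysis is required. First I would invoke \cref{them:concurrent} with the number of components fixed to $k=2$: it already guarantees that $\ConcurAP$ (built by $\Concpreprocessalgo$ and queried by $\Concqueryalgo$) correctly answers single-source, pair, and partial pair queries, and it states all resource bounds as explicit functions of $k$. The correctness assertion transfers verbatim, so only the resource bounds need to be instantiated.

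Next I would substitute $k=2$ into each bound. Preprocessing time: \cref{them:concurrent} already distinguishes the case $k=2$ and gives $O(n^{3+\eps})$, which is item~1. Preprocessing space: $O(n^{2\cdot k-1})$ becomes $O(n^{3})$, item~2. Single-source query time: the $k=2$ branch of \cref{them:concurrent} is $O(n^{2+\eps})$, item~3. Pair query time: $O(n^{k-1})$ becomes $O(n)$, item~4. Partial pair query time: $O(1)$ is independent of $k$, item~5. The only things to check are the arithmetic identities $2\cdot k-1=3$ and $k-1=1$ at $k=2$, together with the fact that the ``$k=2$'' branches in the statement of \cref{them:concurrent} (the $O(n^{3+\eps})$ preprocessing time and the $O(n^{2+\eps})$ single-source query time) are precisely the ones relevant here.

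I do not expect any genuine obstacle: the corollary is a pure restatement of a theorem that has already been established (via \cref{lem:preproces_complexity} for the preprocessing phase, the recurrences \cref{eq:rec1}--\cref{eq:rec2} analyzed in \cref{lem:conc_recurrence}, and the per-bag work bounds for $\Concqueryalgo$). If one instead wanted a self-contained proof, one would re-run exactly those arguments with $k\equiv 2$, using in particular part~(2)(ii) of \cref{lem:conc_recurrence} (the $\Space_2(n)=O(n^{2+\eps})$ bound, and the matching time recurrence for the ancestor-distance step) for the single-source query time and the preprocessing-space estimate; but this merely reproduces the proof of \cref{them:concurrent} and offers no simplification, so deriving the corollary by specialization is the natural route.
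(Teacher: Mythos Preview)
Your proposal is correct and matches the paper's approach exactly: the paper presents \cref{cor:concurrent} immediately after \cref{them:concurrent} as the direct $k=2$ specialization, with no separate proof, and your substitution of $k=2$ into each bound ($2k-1=3$, $k-1=1$, and the explicit $k=2$ branches for preprocessing and single-source time) is precisely what is intended.
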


\smallskip
\begin{remark}
In contrast to \cref{cor:concurrent}, the existing methods 
for handling even one pair query require hexic time and quartic space~\cite{Lehmann77,Floyd62,Warshall62,Kleene56}
by computing the transitive closure.
While our improvements are most significant for algebraic path queries, they imply improvements
also for special cases like reachability (expressed in Boolean semirings).
For reachability, the complete preprocessing requires quartic time, 
and without preprocessing every query requires quadratic time.
In contrast, with almost cubic preprocessing we can answer pair (resp., partial pair) queries
in linear (resp. constant) time.
\end{remark}

Note that \cref{item:ancestors} of $\Concpreprocessalgo$ is required for handling pair queries only.
By skipping this step, 
we can handle every (partial) pair query $\PNode{u},\PNode{v}$ similarly to the single source query from $\PNode{u}$, 
but restricting the BFS to the path $P:\PNode{\Bag}_{\PNode{u}}\Path \PNode{\Bag}_{\PNode{v}}$,
and spending $O(|\PNode{\Bag}|^2)$ time for each bag $\PNode{\Bag}$ of $P$.
Recall (\cref{them:product}) that the size of each bag $\Bag$ in $T$
(and thus the size of the corresponding bag $\PNode{\Bag}$ in $\PNode{T}$)
decreases geometrically every $\gamma$ levels. 
Then, the time required for this operation is $O(|\PNode{\Bag}'|^2)=O(n^2)$,
where $\PNode{\Bag}'$ is the bag of $P$ with the smallest level.
This leads to the following corollary.

\smallskip
\begin{corollary}\label{cor:ld}
Let $G=(V,E)$ be a concurrent graph of two constant-treewidth graphs $G_1, G_2$,
and $\Weight:E\to \Sigma$ a weight function of $G$.
For any fixed $\eps$,
the data-structure $\ConcurAP$ (by skipping \cref{item:ancestors} in $\Concpreprocessalgo$) 
correctly answers single-source and pair queries and requires:
\begin{compactenum}
\item Preprocessing time $O(n^{3})$.
\item Preprocessing space $O(n^{2+\eps})$.
\item Single-source query time $O(n^{2+\eps})$.
\item Pair and partial pair query time $O(n^2)$.
\end{compactenum} 
\end{corollary}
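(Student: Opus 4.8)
The plan is to instantiate $\ConcurAP$ for $k=2$ by running $\Concpreprocessalgo$ with the ancestor distance computation (\cref{item:ancestors}) omitted, so that only the partial expansion (\cref{item:partial_exp}), the construction of $\PNode{T}$ (\cref{item:partial_tree_dec}), the local distance computation (\cref{item:ld}), and the constant-time LCA preprocessing remain. Single-source queries are then answered by $\Concqueryalgo$ exactly as in case~\cref{item:ss_query}, which only uses the $\Fwd$ maps produced by \cref{item:ld}. For a (partial) pair query $\PNode{u},\PNode{v}$ we cannot use case~\cref{item:p_query} or the subcases of~\cref{item:pp_query} that reference the $\From$/$\To$ maps, since those are no longer computed; instead we proceed as follows. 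Using \cref{prop:pnode_distances} we replace a strictly partial endpoint by its copy $\PNode{u}^1$ or $\PNode{v}^2$, we locate the unique simple path $P:\PNode{\Bag}_{\PNode{u}}\Path\PNode{\Bag}_{\PNode{v}}$ in $\PNode{T}$ via the LCA query, and we run the single-source-style propagation of case~\cref{item:ss_query} but restricted to the bags of $P$, finally returning the resulting value of $A$ at $\PNode{v}$. (When both endpoints are strictly partial, $P$ is the root bag and this degenerates to the $O(1)$ lookup $\Fwd_{\PNode{u}^1}(\PNode{v}^2)$ of case~\cref{item:spp_query}.)

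For correctness, observe that \cref{item:ancestors} only feeds the maps $\From$, $\To$, and these are used solely in pair queries; hence omitting it leaves the output of \cref{item:ld} unchanged. In particular the invariants established in the proof of \cref{lem:preprocess_correctness} for \cref{item:bottom_up} and \cref{item:top_down} still hold, so after \cref{item:ld} we have $\Fwd_{\PNode{x}}(\PNode{y})=\Distance(\PNode{x},\PNode{y})$ for every two partial nodes sharing a bag, and the single-source query is correct by the argument of \cref{them:concurrent}. For the pair query, \cref{lem:u_v_dist} writes $\Distance(\PNode{u},\PNode{v})$ as a semiring sum, over tuples of separator nodes lying in the bags along $P$, of semiring products of in-bag distances; since the in-bag distances are exactly the stored $\Fwd$ values, the restricted propagation along $P$ evaluates this expression, and any $\PNode{u}\Path\PNode{v}$ path routed through a bag off $P$ is already subsumed by $P$. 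The reductions via \cref{prop:pnode_distances} cover strictly partial endpoints.

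For the complexity, the preprocessing cost is that of \cref{lem:preproces_complexity} at $k=2$ with \cref{item:ancestors} deleted: \cref{item:partial_exp} takes $O(n)$, \cref{item:partial_tree_dec} takes $O(n^2)$ by \cref{them:product}, \cref{item:ld} takes $O(n^{3})$ time and $O(n^{2+\eps})$ space by the solutions of \cref{eq:rec1} and \cref{eq:rec2} in \cref{lem:conc_recurrence} at $k=2$, and the LCA preprocessing is linear in $|\PNode{T}|=O(n^2)$; hence preprocessing uses $O(n^3)$ time and $O(n^{2+\eps})$ space — indeed the $O(n^3)$ space of the full algorithm is precisely the contribution of the omitted \cref{item:ancestors}. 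The single-source query visits every bag $\PNode{\Bag}$ of $\PNode{T}$ and spends $O(|\PNode{\Bag}|^2)$ in each, which by \cref{them:product} obeys recurrence \cref{eq:rec2} and thus totals $O(n^{2+\eps})$. The (partial) pair query spends $O(|\PNode{\Bag}|^2)$ on each bag $\PNode{\Bag}$ of $P$; by \cref{them:product} the bag of smallest level on $P$ (the LCA bag $\PNode{\Bag}'$) is asymptotically the largest, the sizes along $P$ shrink geometrically every $\gamma$ levels away from $\PNode{\Bag}'$, and a tree path meets at most two bags per level, so $\sum_{\PNode{\Bag}\in P}|\PNode{\Bag}|^2=O(|\PNode{\Bag}'|^2)=O(n^{2})$ because the width of $\ProductTree(G)$ is $O(n)$ for two components.

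The delicate points, and the ones on which a full write-up should dwell, are the correctness of the path-restricted propagation — namely that discarding all bags off $P$ loses no $\PNode{u}\Path\PNode{v}$ path, which rests entirely on \cref{lem:u_v_dist} — and the $O(n^2)$ pair-query bound: collapsing the per-bag quadratic costs along $P$ into a single geometric series requires the \emph{strong} balance guaranteed by \cref{them:product} (geometric decay of bag sizes), and would fail under the weaker, merely logarithmic-height notion of balance.
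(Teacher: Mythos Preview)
Your proposal is correct and follows essentially the same route as the paper: skip \cref{item:ancestors}, answer (partial) pair queries by running the single-source propagation restricted to the bag-path $P:\PNode{\Bag}_{\PNode{u}}\Path\PNode{\Bag}_{\PNode{v}}$, and bound the $\sum_{\PNode{\Bag}\in P}|\PNode{\Bag}|^2$ cost by $O(|\PNode{\Bag}'|^2)=O(n^2)$ via the geometric bag-size decay of \cref{them:product}. You supply more detail than the paper's one-paragraph justification (notably the explicit appeal to \cref{lem:u_v_dist} for correctness of the path restriction and to \cref{prop:pnode_distances} for strictly partial endpoints), but the argument is the same.
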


Finally, we can use $\ConcurAP$ to obtain the transitive closure of $G$
by performing $n^2$ single-source queries.
The preprocessing space is $O(n^{2+\eps})$ by \cref{cor:ld},
and the space of the output is $O(n^4)$, since there are $n^4$ pairs for the computed distances.
Hence the total space requirement is $O(n^4)$.
The time requirement is $O(n^{4+\eps})$, since by \cref{cor:ld}, every single-source query requires $O(n^{2+\eps})$ time.
We obtain the following corollary.

\smallskip
\begin{corollary}\label{cor:closure}
Let $G=(V,E)$ be a concurrent graph of two constant-treewidth graphs $G_1, G_2$,
and $\Weight:E\to \Sigma$ a weight function of $G$.
For any fixed $\eps>0$, the transitive closure of $G$ wrt $\Weight$ can be computed in $O(n^{4+\eps})$ time and $O(n^4)$ space.
\end{corollary}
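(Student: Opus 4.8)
The plan is to treat \cref{cor:ld} as a black box and reduce the transitive-closure computation to a batch of single-source queries, one per node. First I would instantiate $\ConcurAP$ with the desired $\eps$ and run its preprocessing (with \cref{item:ancestors} of $\Concpreprocessalgo$ skipped, exactly as in \cref{cor:ld}); by \cref{cor:ld} this costs $O(n^3)$ time and $O(n^{2+\eps})$ space. Recall that $|V|=n^2$, so the transitive closure of $G$ is precisely the collection of $n^4$ semiring distances $\Distance(u,v)$ over all ordered pairs $u,v\in V$.

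Next I would issue a single-source query from every node $u\in V$ --- there are $n^2$ of them. By \cref{cor:ld} each such query runs in $O(n^{2+\eps})$ time and, by the correctness of $\Concqueryalgo$ guaranteed there, returns the map $v\mapsto\Distance(u,v)$ for all $v\in V$. Recording these $n^2$ maps yields $\Distance(u,v)$ for every pair, i.e., the full transitive closure. The total query time is $n^2\cdot O(n^{2+\eps})=O(n^{4+\eps})$, which dominates the $O(n^3)$ preprocessing, so the overall running time is $O(n^{4+\eps})$.

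For the space bound, the preprocessed structure occupies $O(n^{2+\eps})$ words, and each single-source query needs only $O(n^2)$ additional scratch space (for the map $A$ of \cref{item:ss_query}), which can be recycled across the $n^2$ invocations since the maps $\Fwd$, $\Bwd$ left behind by $\Concpreprocessalgo$ are read-only during queries. The dominant --- and unavoidable --- term is the output itself: writing down all $n^4$ distances takes $\Theta(n^4)$ space, which absorbs everything else, giving $O(n^4)$ in total.

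There is no serious obstacle here; the only points to check are bookkeeping. One must confirm that the single-source primitive of \cref{cor:ld} ranges over \emph{all} targets $v\in V$, so that the $n^2$ queries genuinely cover all $n^4$ pairs, and that reusing the query workspace across the $n^2$ calls leaves the preprocessed data structures intact --- both immediate from the description of $\Concqueryalgo$ in \cref{item:ss_query}. If one additionally wishes to stress optimality, one would remark that $\Omega(n^4)$ time and space are forced merely because the output has $n^4$ entries, matching our bounds exactly in space and up to the $n^{\eps}$ factor in time.
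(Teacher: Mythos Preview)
Your proposal is correct and follows essentially the same approach as the paper: invoke \cref{cor:ld} for preprocessing and then answer $n^2$ single-source queries, one per node of $V$, giving $O(n^{4+\eps})$ time and $O(n^4)$ space dominated by the output. The paper's justification is terser but identical in substance; your additional remarks about reusing the query workspace and the $\Omega(n^4)$ lower bound from output size are consistent with the paper's optimality discussion.
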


\section{Conditional Optimality for Two Graphs}\label{sec:arbitrary_product}

In the current section we establish the optimality of \cref{cor:ld}
in handling algebraic path queries in a concurrent graph that consists of two 
constant-treewidth components.
The key idea is to show that for any arbitrary graph (i.e., without the constant-treewidth restriction) 
$G$ of $n$ nodes, we can construct a concurrent graph $G'$ as a 
$2$-self-concurrent asynchronous composition of a constant-treewidth 
graph $G''$ of $2\cdot n$ nodes, such that semiring queries in $G$ coincide 
with semiring queries in $G'$.

\smallskip\noindent{\bf Arbitrary graphs as composition of two constant-treewidth graphs.}
We fix an arbitrary graph $G=(V, E)$ of $n$ nodes, and a weight function $\Weight: E\to \Sigma$.
Let $x_i$, $1\leq i\leq n$ range over the nodes $V$ of $G$, and construct a graph $G''=(V'', E'')$
such that $V''=\{x_i, y_i:~1\leq i\leq n\}$ and $E''=\{(x_i, y_i), (y_i,x_i):~1\leq i\leq n\}\cup \{(y_i, y_{i+1}), (y_{i+1}, y_i):~1\leq i <n\}$.

\smallskip
\begin{restatable}{claim}{twclaim}\label{claim:tw}
The treewidth of $G''$ is $1$.
\end{restatable}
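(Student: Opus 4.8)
The plan is to recognize that, up to edge orientation, $G''$ is a tree, so that its treewidth is $1$ by the classical characterization, and to note separately that $G''$ has an edge, which rules out treewidth $0$.

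First I would describe the structure of $G''$ explicitly. Since the edge set $E''$ is symmetric, we may reason about the underlying undirected graph. The vertices $y_1,\dots,y_n$ form a simple path $y_1 - y_2 - \cdots - y_n$, and each $x_i$ is a pendant vertex adjacent only to $y_i$; hence the undirected graph underlying $G''$ is connected and acyclic, i.e.\ a tree (in fact a caterpillar). A graph has treewidth $1$ precisely when it is a tree, and since $n\ge 1$ the edge $(x_1,y_1)$ witnesses that $G''$ is not edgeless, so its treewidth is at least $1$. Therefore the treewidth of $G''$ equals $1$.

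If a self-contained argument is preferred, I would instead exhibit a tree-decomposition of $G''$ of width $1$ directly. For $n\ge 2$, take bags $\Bag_i=\{y_i,y_{i+1}\}$ for $1\le i< n$ arranged in a path $\Bag_1 - \Bag_2 - \cdots - \Bag_{n-1}$, and for each $1\le i\le n$ attach a leaf bag $\{x_i,y_i\}$ to a path bag containing $y_i$ (to $\Bag_i$ for $i<n$, and to $\Bag_{n-1}$ for $i=n$); for $n=1$ take the single bag $\{x_1,y_1\}$. Verifying the tree-decomposition axioms is routine: C1 holds since every vertex appears in some bag; C2 holds since every edge $(y_i,y_{i+1})$ lies in $\Bag_i$ and every edge $(x_i,y_i)$ lies in its leaf bag; and C3 holds because each $x_i$ appears in exactly one bag, while each $y_i$ appears only in $\Bag_{i-1}$, $\Bag_i$, and the leaf bag $\{x_i,y_i\}$ (or the appropriate subset at the endpoints $i\in\{1,n\}$), and these bags are mutually adjacent in the decomposition. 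Every bag has size $2$, so the width is $1$; combined with the lower bound above, the treewidth is exactly $1$.

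There is essentially no obstacle here; the only care needed is in the degenerate cases $i\in\{1,n\}$ and $n=1$, where some path bags are absent and the pendant bags must be attached to the nearest available bag, and in noting that orientation is irrelevant because $E''$ is symmetric.
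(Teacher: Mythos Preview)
Your proposal is correct and follows the same approach as the paper: observe that after ignoring edge directions and collapsing parallel edges the graph $G''$ is a tree, and then invoke the standard fact that trees have treewidth~$1$. You are in fact more careful than the paper, since you also justify the lower bound (at least one edge) and offer an explicit width-$1$ decomposition; the paper's proof is just the one-line observation.
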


Given $G''$, we construct a graph $G'$ as a $2$-self-concurrent asynchronous composition of $G''$.
Informally, a node $x_i$ of $G$ corresponds to the node $\Prod{x_i, x_i}$ of $G'$.
An edge $(x_i, x_j)$ in $G$ is simulated by two paths in $G'$.
\begin{compactenum}
\item The first path has the form $P_1:\Prod{x_i, x_i}\Path \Prod{x_i, x_j}$,
and is used to witness the weight of the edge in $G$, i.e., $\Weight(x_i, x_j)=\otimes(P_1)$.
It traverses a sequence of nodes, where the first constituent is fixed to $x_i$,
and the second constituent forms the path $x_i\to y_i\to y_{i'}\to \dots \to y_j \to x_j$.
The last transition will have weight equal to $\Weight(x_i, x_j)$, and the other transitions have weight $\One$.
\item The second path has the form $P_2:\Prod{x_i, x_j}\Path \Prod{x_j, x_j}$,
it has no weight (i.e., $\otimes(P_2)=\One$), and is used to reach the node $\Prod{x_j, x_j}$.
It traverses a sequence of nodes, where the second constituent is fixed to $x_j$,
and the first constituent forms the path $x_i\to y_i\to y_{i'}\to \dots \to y_j \to x_j$.
\end{compactenum}
Then the concatenation of $P_1$ and $P_2$ creates a path $P:\Prod{x_i, x_i}\Path\Prod{x_j, x_j}$ with $\otimes(P)=\otimes(P_1)\otimes \otimes(P_2)=\Weight(x_i, x_j)\otimes \One = \Weight(x_i, x_j)$.

\smallskip\noindent{\bf Formal construction.}
We construct a graph $G'=(V', E')$ as a $2$-self-concurrent asynchronous composition of $G''$,
by including the following edges.
\begin{compactenum}
\item {\em Black edges.} For all $1\leq i\leq n$ and $1\leq j < n$ we have $(\Prod{x_i,y_j}, \Prod{x_i,y_{j+1}}), (\Prod{x_i,y_{j+1}}, \Prod{x_i,y_{j}})\in E'$ , and
for all $1\leq i < n$ and $1\leq j \leq n$ we have $(\Prod{y_i,x_j}, \Prod{y_{i+1},x_j}), (\Prod{y_{i+1},x_{j}}, \Prod{y_i,x_{j}})\in E'$.
\item {\em Blue edges.} For all $1\leq i \leq n$ we have $(\Prod{x_i, x_i}, \Prod{x_i, y_i}), (\Prod{y_i, x_i}, \Prod{x_i, x_i})\in E'$ .
\item {\em Red edges.} For all $(x_i, x_j)\in E$ we have $(\Prod{x_i,y_j}, \Prod{x_i,x_j})\in E'$.
\item {\em Green edges.} For all $1\leq i, j \leq n$ with $i\neq j$ we have $(\Prod{x_i,x_j},\Prod{ y_i,x_j})\in E'$.
\end{compactenum}
Additionally, we construct a weight function such that $\Weight'(\Prod{x_i,y_j}, \Prod{x_i,x_j})=\Weight(x_i, x_j)$ for every 
red edge $(\Prod{x_i,y_j}, \Prod{x_i,x_j})$, and $\Weight'(u,v)=\One$ for every other edge $(u,v)$.
\cref{fig:product} provides an illustration of the construction.
\begin{figure}
\newcommand{\distone}{5cm*0.3}
\def \darkgreen {black!20!green}
\def \darkred {black!10!red}
\def \darkblue {blue!80}
\centering
\begin{tikzpicture}[->,>=stealth',shorten >=1pt,auto,node distance=\distone,
                    thick,scale=1 ]
      
\tikzstyle{every state}=[fill=white,draw=black,text=black,font=\small , inner sep=0.05cm, minimum size=0.5cm]
\tikzstyle{invis}=[fill=white,draw=white,text=white,font=\small , inner sep=-0.05cm]

\def\bend{10}
\def\disttwo{0.73}

\node[state] (x1)	at	(-2,3.2)	{$x_1$};
\node[state, below left of=x1] (x2)	{$x_2$};
\node[state, below right of=x1] (x3)	{$x_3$};
\draw[->] (x1) to (x2);
\draw[->, bend right=\bend] (x1) to (x3);
\draw[->, bend right=\bend] (x2) to (x3);
\draw[->, bend right=\bend] (x3) to (x1);
\draw[->, bend right=\bend] (x3) to (x2);
\draw[->, loop above, looseness=15] (x1) to (x1);

\def\xbias{0}
\foreach \x in {1,...,6}
\foreach \y in {1,...,6} 
\node[minimum size=5, inner sep=0, circle, draw=black, fill=black]  (\x\y) at (\xbias + \disttwo*\x,\disttwo*\y) {};

\foreach \x in {1,3,5}
\node[]	at	(\xbias + \disttwo*\x, 7*\disttwo)	{\large $x_{\pgfmathparse{int((\x+1)/2)}\pgfmathresult}$};
\foreach \x in {2,4,6}
\node[]	at	(\xbias + \disttwo*\x, 7*\disttwo)	{\large $y_{\pgfmathparse{int((\x)/2)}\pgfmathresult}$};

\foreach \x in {6,4,2}
\node[]	at	(\xbias, \disttwo*\x)	{\large $x_{\pgfmathparse{int(4-(\x)/2)}\pgfmathresult}$};
\foreach \x in {5,3,1}
\node[]	at	(\xbias, \disttwo*\x)	{\large $y_{\pgfmathparse{int(4-(\x+1)/2)}\pgfmathresult}$};

\def\bendtwo{30}
\foreach \y in {2,4,6}
\foreach \x in {2,4}{
\draw[->, bend left=\bendtwo, very thick] ($({\xbias + \disttwo*(\x)},\disttwo*\y)$) to ($({\xbias + \disttwo*(\x+2)},\disttwo*\y)$);
\draw[->, bend left=\bendtwo, very thick] ($({\xbias + \disttwo*(\x+2)},\disttwo*\y)$) to ($({\xbias + \disttwo*(\x)},\disttwo*\y)$);
}
\foreach \x in {1,3,5}
\foreach \y in {1,3}{
\draw[->, bend left=\bendtwo, very thick] ($({\xbias + \disttwo*(\x)},\disttwo*\y)$) to ($({\xbias + \disttwo*(\x)},{\disttwo*(\y+2)})$);
\draw[->, bend left=\bendtwo, very thick] ($({\xbias + \disttwo*(\x)},{\disttwo*(\y+2)})$) to ($({\xbias + \disttwo*(\x)},\disttwo*\y)$);
}

\def\x{6}
\draw[->, draw=\darkblue, very thick, bend left=\bendtwo] ($({\xbias + \disttwo*(7-\x)},{\disttwo*(\x)})$) to ($({\xbias + \disttwo*(7-\x+1)},{\disttwo*(\x)})$);
\foreach \x in {4,2}
\draw[->, draw=\darkblue, very thick] ($({\xbias + \disttwo*(7-\x)},{\disttwo*(\x)})$) to ($({\xbias + \disttwo*(7-\x+1)},{\disttwo*(\x)})$);

\foreach \x in {6,4,2}
\draw[->, draw=\darkblue, very thick] ($({\xbias + \disttwo*(7-\x)},{\disttwo*(\x-1)})$)  to ($({\xbias + \disttwo*(7-\x)},{\disttwo*(\x)})$);

\foreach \y in {1,3,5}{
\foreach \x in {1,3,5}{
\ifthenelse{\NOT{\x=\y}}{
\draw[->, draw=\darkgreen, very thick] ($({\xbias + \disttwo*(\x)},{\disttwo*(7-\y)})$)  to ($({\xbias + \disttwo*(\x)},{\disttwo*(7-\y-1)})$);
}
}
}

\draw[->, draw=\darkred, very thick] (46) to (36);
\draw[->, draw=\darkred, very thick] (66) to (56);
\draw[->, draw=\darkred, very thick] (64) to (54);
\draw[->, draw=\darkred, very thick] (22) to (12);
\draw[->, draw=\darkred, very thick] (42) to (32);
\draw[->, draw=\darkred, very thick, bend left=\bendtwo] (26) to (16);

\end{tikzpicture}
\caption{A graph $G$ (left), and $G'$ that is a 2-self-product of a graph $G''$ of treewidth $1$ (right).
The weighted edges of $G$ correspond to weighted red edges on $G'$.
The distance $\Distance(x_i, x_j)$ in $G$ equals the distance $\Distance(\Prod{x_i, x_i}, \Prod{x_j, x_j})=\Distance(\Prod{\bot, x_i}, \Prod{\bot, x_j})$ in $G'$.
 }\label{fig:product}
\end{figure}
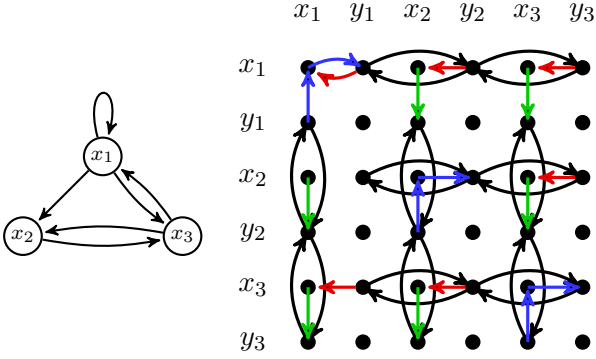

\smallskip
\begin{restatable}{lemma}{pathlemma}\label{lem:paths}
For every $x_i, x_j\in V$, there exists a path $P:x_i\Path x_j$ with $\otimes(P)=z$ in $G$ iff
there exists a path $P':\Prod{x_i,x_i}\Path\Prod{x_j, x_j}$ with $\otimes(P')=z$ in $G'$.
\end{restatable}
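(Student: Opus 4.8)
The plan is to prove the two implications separately, which together amount to the stronger assertion that the set of path-weights from $x_i$ to $x_j$ in $G$ equals the set of path-weights from $\Prod{x_i,x_i}$ to $\Prod{x_j,x_j}$ in $G'$. Throughout, recall that the only edges of $G'$ carrying a weight other than $\One$ are the red edges, and that a red edge $(\Prod{x_a,y_b},\Prod{x_a,x_b})$ exists exactly when $(x_a,x_b)\in E$ and has weight $\Weight(x_a,x_b)$.

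For the ($\Rightarrow$) direction (the routine one), given a path $P=(x_{c_0},\dots,x_{c_m})$ in $G$ with $x_{c_0}=x_i$ and $x_{c_m}=x_j$, I construct $P'$ by concatenating, for each edge $(x_{c_\ell},x_{c_{\ell+1}})\in E$, a gadget sub-path from $\Prod{x_{c_\ell},x_{c_\ell}}$ to $\Prod{x_{c_{\ell+1}},x_{c_{\ell+1}}}$. When $c_\ell\neq c_{\ell+1}$ this gadget is: a blue edge to $\Prod{x_{c_\ell},y_{c_\ell}}$; a run of black edges sliding the second coordinate along the $y$-chain from $y_{c_\ell}$ to $y_{c_{\ell+1}}$; the red edge $(\Prod{x_{c_\ell},y_{c_{\ell+1}}},\Prod{x_{c_\ell},x_{c_{\ell+1}}})$ of weight $\Weight(x_{c_\ell},x_{c_{\ell+1}})$; a green edge to $\Prod{y_{c_\ell},x_{c_{\ell+1}}}$ (valid since $c_\ell\neq c_{\ell+1}$); a run of black edges sliding the first coordinate from $y_{c_\ell}$ to $y_{c_{\ell+1}}$; and a blue edge to $\Prod{x_{c_{\ell+1}},x_{c_{\ell+1}}}$. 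When $c_\ell=c_{\ell+1}$ (a self-loop) the gadget degenerates to a blue edge followed by the self-loop red edge. Every edge of a gadget other than its red edge has weight $\One$, so each gadget contributes $\Weight(x_{c_\ell},x_{c_{\ell+1}})$ and $\otimes(P')=\bigotimes_{\ell}\Weight(x_{c_\ell},x_{c_{\ell+1}})=\otimes(P)$; the length-$0$ path is mapped to the length-$0$ path at $\Prod{x_i,x_i}$, both of weight $\One$.

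For the ($\Leftarrow$) direction — where the real work lies — I first prove a structural lemma describing the outgoing edges of every node of $G'$ by a short exhaustive check over the four edge families: (i) from a diagonal node $\Prod{x_a,x_a}$ the only outgoing edge is the blue edge to $\Prod{x_a,y_a}$; (ii) from $\Prod{x_a,y_b}$ the outgoing edges are the black edges to $\Prod{x_a,y_{b\pm1}}$ and, when $(x_a,x_b)\in E$, the red edge to $\Prod{x_a,x_b}$; (iii) from $\Prod{x_a,x_b}$ with $a\neq b$ the only outgoing edge is the green edge to $\Prod{y_a,x_b}$; (iv) from $\Prod{y_a,x_b}$ the outgoing edges are the black edges to $\Prod{y_{a\pm1},x_b}$ and, when $a=b$, the blue edge to $\Prod{x_a,x_a}$; and (v) nodes of the form $\Prod{y_a,y_b}$ have no outgoing edges. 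Given a path $P'$ from $\Prod{x_i,x_i}$ to $\Prod{x_j,x_j}$, I split it at its successive visits to diagonal nodes, say $\Prod{x_{c_0},x_{c_0}},\dots,\Prod{x_{c_m},x_{c_m}}$ with $c_0=i$ and $c_m=j$. By the structural lemma each segment between consecutive diagonal visits is forced into gadget shape: a blue edge out of $\Prod{x_{c_\ell},x_{c_\ell}}$, then an arbitrary walk among $\Prod{x_{c_\ell},y_{\cdot}}$ nodes (all weight $\One$), then a red edge $(\Prod{x_{c_\ell},y_{c_{\ell+1}}},\Prod{x_{c_\ell},x_{c_{\ell+1}}})$ which both witnesses $(x_{c_\ell},x_{c_{\ell+1}})\in E$ and contributes $\Weight(x_{c_\ell},x_{c_{\ell+1}})$, followed (when $c_\ell\neq c_{\ell+1}$) by a green edge and an arbitrary walk among $\Prod{y_{\cdot},x_{c_{\ell+1}}}$ nodes ending with the blue edge into $\Prod{x_{c_{\ell+1}},x_{c_{\ell+1}}}$. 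Hence each segment has weight $\Weight(x_{c_\ell},x_{c_{\ell+1}})$, so $(x_{c_0},\dots,x_{c_m})$ is a path in $G$ with weight $\otimes(P')$; the single-node path at $\Prod{x_i,x_i}$ corresponds to the length-$0$ path at $x_i$.

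The main obstacle is making the ($\Leftarrow$) structural argument airtight: one must verify that a path cannot escape the intended gadget shape — in particular that after a red edge enters $\Prod{x_a,x_b}$ with $b\neq a$ there is no route back into the $\Prod{x_a,y_{\cdot}}$ layer, and that the $\Prod{y_{\cdot},x_b}$ layer can be left only through the diagonal node $\Prod{x_b,x_b}$ — so that the index $b$ carried by the red edge is exactly the index of the next diagonal visit. This is a finite, routine inspection of the four edge families, resting on the facts that each family only moves a coordinate of the "expected" type (an $x$-node or a $y$-node), that no product edge of $G''$ outside these four families appears in $G'$, and that every non-red edge has weight $\One$, so detours within a layer are harmless and do not affect $\otimes(P')$.
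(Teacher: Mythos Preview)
Your proof is correct and follows essentially the same approach as the paper's own proof: both establish the edge-gadget correspondence (one edge of $G$ corresponds to one segment in $G'$ that traverses exactly one red edge), and then assemble paths from these gadgets. The paper's proof is terser---it argues the single-edge/single-red-edge bijection and then states that the full result follows easily---whereas you spell out the structural lemma on outgoing edges and the diagonal-visit decomposition explicitly; your treatment of the self-loop case is also more careful than the paper's.
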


\cref{lem:paths} implies that for every $x_i, x_j\in V$, we have 
$\Distance(x_i, x_j) = \Distance(\Prod{x_i,x_i}, \Prod{x_j,x_j})$,
i.e., pair queries in $G$ for nodes $x_i,x_j$ coincide with pair queries 
$(\Prod{x_i,x_i}, \Prod{x_j,x_j})$ in $G'$. 
Observe that in $G'$ we have $\Distance(\Prod{x_i,x_i}, \Prod{x_j,x_j}) = 
\Distance(\Prod{\bot,x_i}, \Prod{\bot,x_j})$, and hence pair queries 
in $G$ also coincide with partial pair queries in $G'$.

\smallskip
\begin{theorem}\label{them:arbitrary_product}
For every graph $G=(V,E)$ and weight function $\Weight:E\to \Sigma$
there exists a graph $G'=(V\times V, E')$ that is a $2$-self-concurrent asynchronous composition 
of a constant-treewidth graph, together with a weight function $\Weight':E'\to\Sigma$, such that for all 
$u,v\in V$, and $\Prod{u,u}, \Prod{v,v}\in V'$ we have 
$\Distance(u,v) = \Distance(\Prod{u,u},\Prod{v,v})=\Distance(\Prod{\bot,u},\Prod{\bot,v})$.
Moreover, the graph $G'$ can be constructed in quadratic time in the size of $G$.
\end{theorem}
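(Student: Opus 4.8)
The plan is to assemble the theorem from the pieces already set up in this section. First I would fix the explicit graph $G''$ built above from $G$ (with $|V''|=2n$) and invoke \cref{claim:tw} to record that $G''$ has treewidth $1$, hence is a constant-treewidth graph; the obvious tree-decomposition with bags $\{y_i,y_{i+1}\}$ and $\{x_i,y_i\}$ is moreover produced in $O(n)$ time. Next I would take the graph $G'$ together with $\Weight'$ exactly as in the formal construction (black, blue, red, green edges), and check edge-by-edge that $G'$ is a $2$-self-concurrent \emph{asynchronous} composition of $G''$: each of the four edge classes modifies exactly one coordinate along an edge of $G''$ -- black and green edges walk one coordinate along the $y$-path $y_1\!-\!\cdots\!-\!y_n$ or along a leg $y_i\!-\!x_i$ while fixing the other coordinate, blue edges use a leg $y_i\!-\!x_i$, and red edges use the edge $(y_j,x_j)\in E''$. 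Hence every edge of $E'$ lies in the product edge set with $|\mathcal{I}_{u,v}|=1$, as required by the definition of a concurrent asynchronous composition.

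For the distance identities I would first handle $\Distance(x_i,x_j)=\Distance(\Prod{x_i,x_i},\Prod{x_j,x_j})$: \cref{lem:paths} gives a weight-preserving correspondence between $x_i\Path x_j$ paths in $G$ and $\Prod{x_i,x_i}\Path\Prod{x_j,x_j}$ paths in $G'$, and since $\oplus$ is infinitely associative and commutative the semiring sum over paths transports through this correspondence, yielding equality of the two semiring distances. For the remaining identity $\Distance(\Prod{x_i,x_i},\Prod{x_j,x_j})=\Distance(\Prod{\bot,x_i},\Prod{\bot,x_j})$ I would expand the right-hand side as $\bigoplus_{w,w'\in V''}\Distance(\Prod{w,x_i},\Prod{w',x_j})$ and analyze the structure of $G'$. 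The key observation is that every node $\Prod{w,x_i}$ sits inside a ``funnel'' $\{\Prod{x_k,x_i},\Prod{y_k,x_i}\}_k$ all of whose internal edges have weight $\One$ and whose only exit is the blue edge into the diagonal node $\Prod{x_i,x_i}$; symmetrically, the funnel $\{\Prod{x_k,x_j},\Prod{y_k,x_j}\}_k$ around $\Prod{x_j,x_j}$ can be entered only through a red edge, which realizes one step $(x_a,x_j)$ of a $G$-walk. Consequently a path from $\Prod{w,x_i}$ to $\Prod{w',x_j}$ either contributes $\Zero$ (when the target funnel node has no incoming red edge) or its weight is the product of the $\Weight$-values of a walk $x_i\Path x_j$ in $G$, possibly multiplied by weight-$\One$ funnel wanderings; so each summand equals a partial semiring sum -- over a subset of the $x_i\Path x_j$ walks, with repetitions that a complete semiring absorbs -- of exactly the weights summed in $\Distance(\Prod{x_i,x_i},\Prod{x_j,x_j})$, while the $(w,w')=(x_i,x_j)$ summand recovers all of them. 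Adding the partial sums back therefore leaves $\Distance(\Prod{x_i,x_i},\Prod{x_j,x_j})$ unchanged.

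I expect this last step to be the main obstacle, since ``factors through a diagonal node along weight-$\One$ edges'' must be made precise in a complete, closed, not-necessarily-idempotent semiring: one needs the cut-vertex decomposition of semiring distances together with the fact that an infinite sum of copies of a fixed element stays fixed under adding one more copy -- concretely a reindexing using $\N\sqcup\N\cong\N$ for the unbounded weight-$\One$ wanderings along the $y$-chains, and that $\One^*\otimes a$ behaves as the absorptive closure of $a$. This is exactly the bookkeeping that \cref{lem:paths} is designed to carry, so in the write-up I would lean on it and confine the new reasoning to the funnel/cut-vertex structure of $G'$. Finally, for the complexity claim I would just count: $G''$ and its tree-decomposition are built in $O(n)$ time, $G'$ has $4n^2$ nodes, and its black ($O(n^2)$), blue ($O(n)$), red ($O(|E|)=O(n^2)$) and green ($O(n^2)$) edges together with $\Weight'$ are listed directly, so the whole construction takes $O(n^2)$ time, quadratic in the number of nodes of $G$.
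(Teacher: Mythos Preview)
Your proposal is correct and follows exactly the paper's route: the theorem is assembled from \cref{claim:tw}, the explicit edge-class verification that $G'$ is an asynchronous $2$-self-composition of $G''$, \cref{lem:paths} for the first distance identity, and the funnel/cut-vertex observation for the partial-pair identity, with the quadratic construction time read off from the edge count. If anything you are more thorough than the paper, which merely states $\Distance(\Prod{x_i,x_i},\Prod{x_j,x_j})=\Distance(\Prod{\bot,x_i},\Prod{\bot,x_j})$ as a one-line observation and does not spell out why the existence-level correspondence of \cref{lem:paths} yields equality of semiring sums; your reindexing argument (each achievable weight is witnessed by countably infinitely many paths thanks to the weight-$\One$ back-and-forth on the $y$-chain, so the infinite sums agree in any complete semiring) is exactly the missing bookkeeping for the non-idempotent case.
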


This leads to the following corollary.
\smallskip
\begin{corollary}\label{cor:lowerbound}
Let $\Time_S(n)=\Omega(n^2)$ be a lower bound on the time required to answer a single algebraic paths query wrt to a semiring $S$
on arbitrary graphs of $n$ nodes. 
Consider any concurrent graph $G$ which is an asynchronous self-composition of two constant-treewidth graphs of $n$ nodes each.
For any data-structure $\mathsf{DS}$,
let $\Time_{\mathsf{DS}}(G,r)$ be the time required by $\mathsf{DS}$ to preprocess $G$ and answer $r$ pair queries.
We have $\Time_{\mathsf{DS}}(G,1)=\Omega(\Time_S(n))$.
\end{corollary}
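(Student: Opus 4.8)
The statement is essentially a repackaging of the reduction in \cref{them:arbitrary_product}, so the plan is to turn any purported data-structure $\mathsf{DS}$ for the concurrent problem into an algorithm for a \emph{single} algebraic path query on an arbitrary graph, and then invoke the hypothesized lower bound $\Time_S$. Concretely, let $H=(V_H,E_H)$ be an arbitrary (not necessarily constant-treewidth) graph on $m$ nodes with weight function $\Weight_H$, and fix a pair $x,x'\in V_H$. By \cref{them:arbitrary_product} (which in turn rests on \cref{claim:tw} and \cref{lem:paths}) we can construct, in $O(m^2)$ time, a graph $G'$ that is a $2$-self-concurrent asynchronous composition of a constant-treewidth graph on $2m$ nodes, together with a weight function $\Weight'$, such that $\Distance(x,x')=\Distance(\Prod{x,x},\Prod{x',x'})$.

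\textbf{The reduction.} Given $\mathsf{DS}$, answer the pair query $(x,x')$ in $H$ as follows: build $(G',\Weight')$ as above, run the preprocessing of $\mathsf{DS}$ on $G'$, issue to $\mathsf{DS}$ the pair query $(\Prod{x,x},\Prod{x',x'})$, and return its answer; correctness is exactly the distance identity of \cref{them:arbitrary_product}. Writing $n=2m$ for the size of each component of $G'$, the total running time is $O(m^2)+\Time_{\mathsf{DS}}(G',1)$. Since by hypothesis every algorithm needs $\Omega(\Time_S(m))$ time for a single query on some $m$-node graph, the worst-case choice of $H$ forces $O(m^2)+\Time_{\mathsf{DS}}(G',1)=\Omega(\Time_S(m))$. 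As $\Time_S$ is polynomially bounded (it is $\Theta(n^3)$ in the case of interest), $\Time_S(m)=\Theta(\Time_S(n))$, and because $\Time_S(n)=\Omega(n^2)$ the additive construction cost $O(n^2)$ is subsumed; hence $\Time_{\mathsf{DS}}(G',1)=\Omega(\Time_S(n))$. Taking $G=G'$, which is precisely a concurrent graph that is an asynchronous self-composition of two constant-treewidth graphs of $n$ nodes each, gives $\Time_{\mathsf{DS}}(G,1)=\Omega(\Time_S(n))$.

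\textbf{Where the work is.} All the genuine content lives in \cref{them:arbitrary_product}: the design of the four edge classes of $G'$ so that every edge of $H$ is faithfully simulated by the $P_1$-then-$P_2$ detour through the $y$-vertices, the verification that the underlying component graph $G''$ has treewidth $1$ (\cref{claim:tw}), and the path bijection of \cref{lem:paths} that yields the distance identity; for the corollary itself those are taken as given. The only delicate point at the level of the corollary is the bookkeeping: the gadget doubles the vertex count and costs $\Theta(n^2)$ to write down, so what one literally derives is that \emph{preprocessing plus one query plus $O(n^2)$} is $\Omega(\Time_S(n))$, and the hypothesis $\Time_S(n)=\Omega(n^2)$ is exactly what lets this $O(n^2)$ term be absorbed (for the superlinear-in-$n^2$ lower bounds of interest, such as cubic). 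I also note that the reduction is oblivious to the number of queries: $r$ pair queries in $H$ are answered by $r$ pair queries to $\mathsf{DS}$ on the \emph{same} $G'$, which gives the $r$-query strengthening mentioned in the introduction; here we only make the $r=1$ case explicit, as stated.
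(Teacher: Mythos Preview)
Your proposal is correct and follows exactly the approach the paper intends: the paper gives no explicit proof of \cref{cor:lowerbound} beyond the phrase ``This leads to the following corollary'' after \cref{them:arbitrary_product}, and your argument spells out precisely the reduction that sentence gestures at. Your bookkeeping remarks about the vertex-count doubling and the need to absorb the $O(n^2)$ construction cost into $\Time_S(n)=\Omega(n^2)$ are more careful than the paper itself, which leaves these constant-factor issues implicit.
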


\smallskip\noindent{\bf Conditional optimality of \cref{cor:ld}.}
Note that for $r=O(n)$ pair queries, \cref{cor:ld} yields that
the time spent by our data-structure $\ConcurAP$ for preprocessing $G$ and answering $r$ queries 
is $\Time_{\ConcurAP}(G,r)=O(n^3)$.
The long-standing (over five decades) upper bound for answering even one pair query for 
algebraic path properties in arbitrary graphs of $n$ nodes is $O(n^3)$.
\cref{them:arbitrary_product} implies that any improvement upon our results would yield the same improvement
for the long-standing upper bound, which would be a major breakthrough.

\smallskip\noindent{\bf Almost-optimality of \cref{them:concurrent} and \cref{cor:closure}.}
Finally, we highlight some almost-optimality results obtained by variants of $\ConcurAP$ for the 
case of two graphs.
By almost-optimality we mean that the obtained bounds are $O(n^{\eps})$ factor worse that optimal,
for any fixed $\eps>0$ arbitrarily close to $0$.
\begin{compactenum}
\item According to \cref{them:concurrent}, after $O(n^{3+\eps})$ preprocessing time,
single-source queries are handled in $O(n^{2+\eps})$ time, and partial pair queries in $O(1)$ time.
The former (resp. later) query time is almost linear (resp. exactly linear) in the size of the output.
Hence the former queries are handled almost-optimally, and the latter indeed optimally.
Moreover, this is achieved using $O(n^{3+\eps})$ preprocessing time, which is far less than
the $\Omega(n^4)$ time required for the transitive closure computation 
(which computes the distance between all $n^4$ pairs of nodes).

\item According to \cref{cor:closure}, the transitive closure can be computed 
in $O(n^{4+\eps})$ time, for any fixed $\eps>0$, and $O(n^4)$ space. 
Since the size of the output is $\Theta(n^4)$, the transitive closure is computed in 
almost-optimal time and optimal space.
\end{compactenum}

\section{Experimental Results}\label{sec:experiments}

In the current section we report on experimental evaluation of our algorithms, in particular 
of the algorithms of \cref{cor:closure}.
We test their performance for obtaining the transitive closure on various 
concurrent graphs. We focus on the transitive closure for a fair comparison with the
existing algorithmic methods, which compute the transitive closure even for a single query.
Since the contributions of this work are algorithmic improvements for algebraic path properties, 
we consider the most fundamental representative of this framework, namely, the shortest path problem.
Our comparison is done against the standard Bellman-Ford algorithm,
which (i)~ has the best worst-case complexity for the problem, and (ii)~allows for practical improvements, such as early termination.

\smallskip\noindent{\bf Basic setup.}
We outline the basic setup used in all experiments. 
We use two different sets of benchmarks, and obtain the controlflow graphs 
of Java programs using Soot~\cite{Soot}, and use LibTW~\cite{LibTW} to obtain 
the tree decompositions of the corresponding graphs.
For every obtained  graph $G'$, we construct a concurrent graph $G$ as a 
$2$-self asynchronous composition of $G'$, and then assign random integer weights in the 
range $[-10^3,10^3]$, without negative cycles.
Although this last restriction does not affect the running time of our algorithms,
it allows for early termination of the Bellman-Ford algorithm
(and thus only benefits the latter).
 The $2$-self composition is a natural situation arising in practice, e.g. in concurrent data-structures
 where two threads of the same method access the data-structure.
 We note that the $2$-self composition is no simpler than
 the composition of any two constant-treewidth graphs,
 (recall that the lower-bound of \cref{sec:arbitrary_product} is established on a $2$-self composition).

\smallskip\noindent{\bf DaCapo benchmarks.}
In our first setup, we extract controlflow graphs of methods from the DaCapo suit~\cite{Blackburn06}.
The average treewidth of the input graphs is around $6$.
This supplies a large pool of $120$ concurrent graphs, for which we use
\cref{cor:closure} to compute the transitive closure. 
This allows us to test the scalability of our algorithms, as well as their practical 
dependence on input parameters.
Recall that our transitive closure time complexity is $O(n^{4+\eps})$, for any fixed $\eps>0$,
which is achieved by choosing a sufficiently large $\lambda\in \Nats$ and a sufficiently small $\delta\in \Reals$
when running the algorithm of \cref{them:tree_decomp}.
We compute the transitive closure for various $\lambda$.
In practice, $\delta$ has effects only for very large input graphs.
For this, we fix it to a large value ($\delta=1/3$) which can be proved to have no effect on the obtained running times.
\cref{tab:lambdas} shows for each value of $\lambda$, the percentage of cases for which that value 
is at most $5\%$ slower than the smallest time (among all tested $\lambda$) for each examined case.
We find that $\lambda=7$ works best most of the time.

\begin{table}
\centering
\small
\begin{tabular}{|c||c|c|c|c|c|c|c|c|}
\hline
$\mathbf{\lambda}$ & $2$ & $3$ & $4$ & $5$ & $6$ & $7$ & $8$ \\
\hline
$\mathbf{\%}$ &$6$ & $7$ & $16$ & $22$ & $25$ & $57$ & $17$  \\
\hline
\end{tabular}
\caption{ Percentage of cases for which the transitive closure of the graph $G$ for the given value of $\lambda$
is at most $5\%$ slower than the time required to obtain the transitive closure of $G$ for the best $\lambda$. }
\label{tab:lambdas}
\end{table}

\cref{fig:closure_times} shows the time required to compute the transitive closure on each concurrent graph $G$
by our algorithm (for $\lambda=7$) and the baseline Bellman-Ford algorithm.
We see that our algorithm significantly outperforms the baseline method.
Note that our algorithm seems to scale much better than its theoretical 
worst-case bound of $O(n^{4+\eps})$ of \cref{cor:closure}.

\begin{figure}
\centering
\includegraphics[scale=0.5]{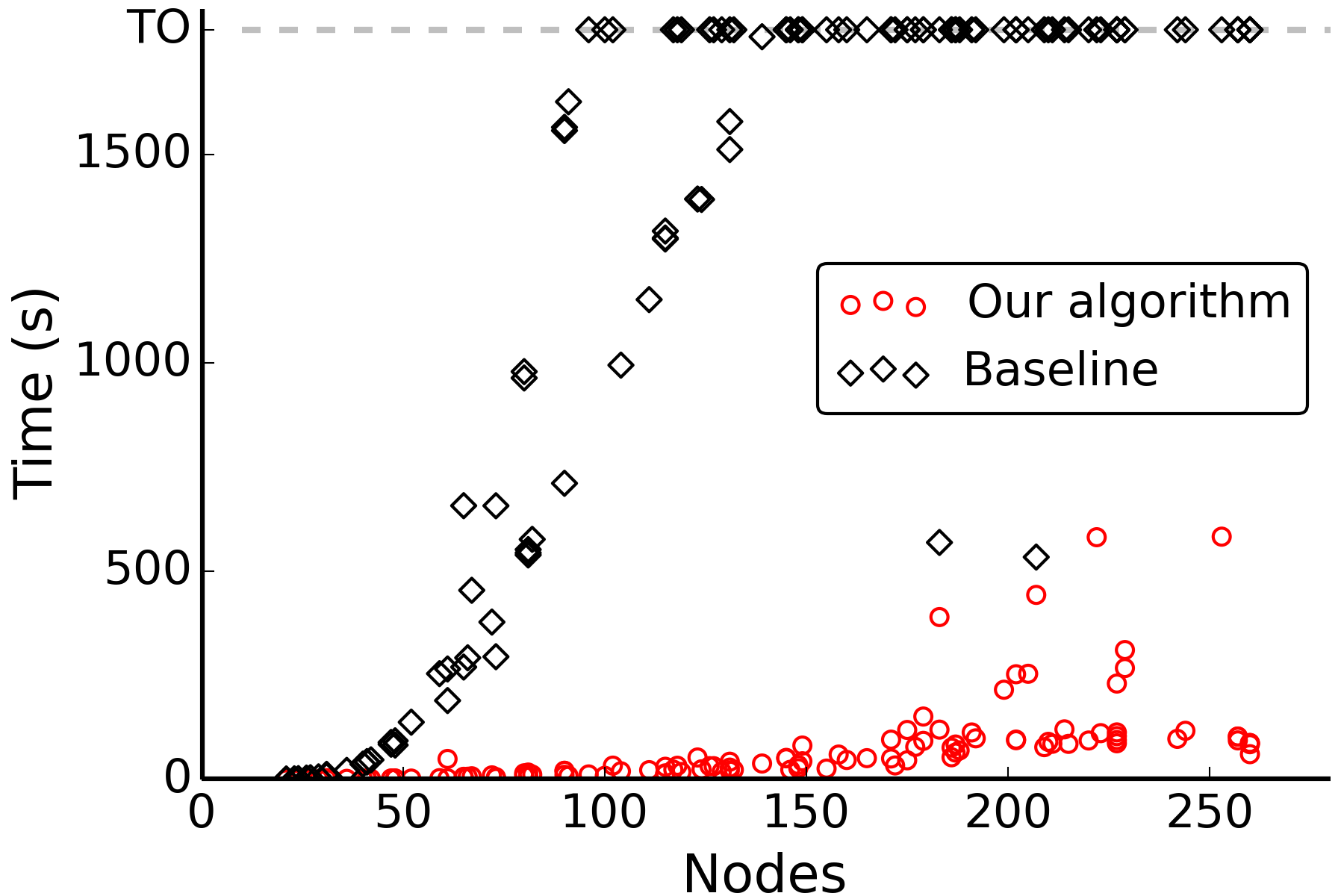}
\caption{Time required to compute the transitive closure on concurrent graphs of various sizes.
Our algorithm is run for $\lambda=7$.
$\mathsf{TO}$ denotes that the computation timed out after 30 minutes.}
\label{fig:closure_times}
\end{figure}

\smallskip\noindent{\bf Concurrency with locks.}
Our second set of experiments is on methods from containers of the $\mathsf{java.util.concurrent}$ library
that use locks as their synchronization mechanism.
The average treewidth of the input graphs is around $8$.
In this case, we expand the node set of the concurrent graph $G$ with the lock set $[3]^{\ell}$, 
where $\ell$ is the number of locks used by $G'$.
Intuitively, the $i$-th value of the lock set denotes which of the two components owns the $i$-th lock
(the value is $3$ if the lock is free).
Transitions to nodes that perform lock operations are only allowed wrt the lock semantics.
That is, a transition to a node of $G$ where the value of the $i$-th lock is
\begin{compactenum}
\item \emph{(Lock aquire):}
 $j\in [2]$, is only allowed from nodes where the value of that lock is $3$, and the respective graph $G_j$ is performing a lock operation
 on that edge.
\item \emph{(Lock release):}
$3$, is only allowed from nodes where the value of that lock is $j\in [2]$, and the respective graph $G_j$ is performing an unlock operation
on that edge.
\end{compactenum}

Similarly as before, we compare our transitive closure time with the standard Bellman-Ford algorithm.
\cref{tab:locks} shows a time comparison between our algorithms and the baseline method.
We observe that our transitive closure algorithm is significantly faster, and also scales better.

\begin{table}
\centering
\setlength\tabcolsep{5pt} 
\begin{tabular}{|l|c|c|c|}
\hline
\multicolumn{1}{|c|}{\textbf{Java method}}& $\mathbf{n}$ & \textbf{$\mathbf{\mathsf{\mathbf{T_o}} (s)}$ } & \textbf{ $\mathbf{\mathsf{\mathbf{T_b}} (s)}$}\\
\hline
\hline
{ \footnotesize ArrayBlockingQueue: poll} & 19 & 19 & 60 \\
\hline
{ \footnotesize ArrayBlockingQueue: peek} & 20 & 20 & 81 \\
\hline
{ \footnotesize LinkedBlockingDeque: advance} & 25 & 29 & 195 \\
\hline
{ \footnotesize PriorityBlockingQueue: removeEQ} & 25 & 32 & 176 \\
\hline
{ \footnotesize ArrayBlockingQueue: init} & 26 & 47 & 249 \\
\hline
{ \footnotesize LinkedBlockingDeque: remove} & 26 & 49 & 290 \\
\hline
{ \footnotesize ArrayBlockingQueue: offer} & 26 & 56 & 304 \\
\hline
{ \footnotesize ArrayBlockingQueue: clear} & 28 & 33 & 389 \\
\hline
{ \footnotesize ArrayBlockingQueue: contains} & 32 & 205 & 881 \\
\hline
{ \footnotesize DelayQueue: remove} & 42 & 267 & 3792 \\
\hline
{ \footnotesize ConcurrentHashMap: scanAndLockForPut} & 46 & 375 & 2176 \\
\hline
{ \footnotesize ArrayBlockingQueue: next} & 46 & 407 & 3915 \\
\hline
{ \footnotesize ConcurrentHashMap: put} & 72 & 1895  & $>8$ h  \\
\hline
\end{tabular}
\caption{Time required for the transitive closure on $2$-self concurrent graphs extracted from methods of the
$\mathsf{java.util.concurrent}$ library.
Each constituent graph has $\mathbf{n}$ nodes.
 \textbf{$\mathbf{\mathsf{\mathbf{T_o}} (s)}$ } and \textbf{ $\mathbf{\mathsf{\mathbf{T_b}} (s)}$}
correspond to our method and the baseline method respectively.
}
\label{tab:locks}
\end{table}

\section{Conclusions}
We have considered the fundamental algorithmic problem
of computing algebraic path properties in a concurrent 
intraprocedural setting, where component graphs have constant treewidth.
We have presented algorithms that significantly improve the existing theoretical complexity of the problem,
and provide a variety of tradeoffs between preprocessing and query times for on-demand analyses.
Moreover, we have proved that further theoretical improvements over our algorithms must achieve major breakthroughs.
An interesting direction of future work is to extend our algorithms to the interprocedural setting.
However, in that case even the basic problem of reachability is undecidable, 
and other techniques and formulations are required to make the analysis tractable,
such as context-bounded formulations and regular approximations of interprocedural paths
~\cite{Qadeer05,Bouajjani05,Lal09}.
The effect of constant-treewidth components in such formulations is an interesting theoretical direction to pursue,
with potential for practical use.

\acks The research was partly supported by Austrian Science Fund (FWF) Grant No P23499- N23, 
FWF NFN Grant No S11407-N23 (RiSE/SHiNE), and ERC Start grant (279307: Graph Games).

\clearpage

{
\bibliographystyle{abbrvnat}
\bibliography{bibliography}
}
\clearpage

\renewcommand{\baselinestretch}{1}

\appendix

\section*{\centering \Large \textsc{APPENDIX}}
\section{Modeling power}\label{sec:example}

The algebraic paths framework considered in this work has a rich expressive power,
as it can model a wide range of path problems arising in the static analysis of programs.

\smallskip\noindent{\bf Reachability.}
The simplest path problem asks whether there exists a path between two locations of a concurrent system.
The problem can be formulated on the boolean semiring $(\{\True, \False\}, \lor, \land,  \False, \True)$.

\smallskip\noindent{\bf Dataflow problems.}
A wide range of dataflow problems has an algebraic paths formulation, expressed as a ``meet-over-all-paths'' analysis~\cite{Kildall73}.
Perhaps the most well-known case is that of distributive flow functions considered in the IFDS framework~\cite{Reps95, Sagiv96}.
Given a finite domain $D$ and a universe $F$ of distributive dataflow functions $f:2^D\rightarrow 2^D$,
a weight function $\Weight:E\rightarrow F$ associates each edge of the controlflow graph with a flow function. 
The weight of a path is then defined as the composition of the flow functions along its edges,
and the dataflow distance between two nodes $u$, $v$ is the meet $\sqcap$ (union or intersection) of the
weights of all $u\Path v$ paths. 
The problem can be formulated on the meet-composition semiring $(F, \sqcap, \circ, \emptyset,  I)$, where $\circ$ is function composition and $I$ is the identity function. 
We note, however, that the IFDS/IDE framework considers interprocedural paths in sequential programs.
In contrast, the current work focuses on intraprocedural analysis of concurrent programs.
The dataflow analysis of concurrent programs has been a problem of intensive study (e.g.~\cite{Grunwald93, Knoop96, Farzan07, Chugh08, Kahlon09, De11}), where (part of) the underlying analysis is based on an algebraic, ``meet-over-all-paths'' approach.

\smallskip\noindent{\bf Weighted problems.}
The algebraic paths framework subsumes several problems on weighted graphs.
The most well-known such problem is the shortest path problem~\cite{Floyd62, Warshall62, Bellman58, Ford56, Johnson77},
phrased on the tropical semiring $(\Reals\cup \{-\infty, \infty\}, \inf, +, \infty,  0)$.
A number of other fundamental problems on weighted graphs can be reduced to various instances of the shortest-path problem,
e.g. the most probable path, the mean-payoff and the minimum initial credit problem~\cite{Viterbi67, L76, Karp78, Bouyer08, Chatterjee15C}.
Lately, path problems in weighted systems are becoming increasingly important in quantitative verification
and quantitative program analysis~\cite{Bouyer08, W08, CDH10, CHR13, Chatterjee15B}.

\smallskip\noindent{\bf Kleene algebras.}
Finally, a well-known family of closed semirings used in program analysis is that of Kleene algebras~\cite{Fernandes07}.
A common instance is when edges of the controlflow graph are annotated with observations or actions.
In such case the set of observations/actions the system makes in a path between two nodes of the controlflow graph forms a regular language \cite{Dwyer04, Farzan13, Cerny15}. Kleene algebras have also been used as algebraic relaxations of interprocedurally valid paths in sequential and concurrent systems~\cite{Yan11, Bouajjani03}.

\subsection*{Modeling example}
\begin{figure*}
\renewcommand{\thealgocf}{}
\removelatexerror
\centering
\begin{tikzpicture}[thick, >=latex,
pre/.style={<-,shorten >= 1pt, shorten <=1pt, thick},
post/.style={->,shorten >= 1pt, shorten <=1pt,  thick},
und/.style={very thick, draw=gray},
bag/.style={ellipse, minimum height=7mm,minimum width=14mm,draw=gray!80, line width=1pt, inner sep=1},
internal/.style={circle,draw=black!80, inner sep=1, minimum size=4.5mm, very thick},
entry/.style={isosceles triangle, shape border rotate=-90, isosceles triangle stretches, minimum width=8mm, minimum height=3.6mm, draw=black!80, inner sep=0},
exit/.style={isosceles triangle, shape border rotate=90, isosceles triangle stretches, minimum width=8mm, minimum height=3.6mm, draw=black!80, inner sep=0},
call/.style={isosceles triangle, shape border rotate=0, isosceles triangle stretches, minimum width=8mm, minimum height=3.6mm, draw=black!80, inner sep=0},
return/.style={isosceles triangle, shape border rotate=180, isosceles triangle stretches, minimum width=8mm, minimum height=3.6mm, draw=black!80, inner sep=0},
virt/.style={circle,draw=black!50,fill=black!20, opacity=0}]

\node	[text width = 5.3cm] at (0,0)	{%

\begin{algorithm}[H]
\small
\small
\SetInd{0.6em}{0.0em}
\SetAlgoNoLine
\SetAlgorithmName{Method}{method}
\DontPrintSemicolon
\caption{$\Philosophers$}
\While{$\True$}{
\While{$\Fork$ not $\Mine$ or $\Knife$ not $\Mine$}{
\If{$\Fork$ is free}{
$\Lock(\ell)$\\
$\Attain(\Fork)$\\
$\Unlock(\ell)$\\
}
\If{$\Knife$ is free}{
$\Lock(\ell)$\\
$\Attain(\Knife)$\\
$\Unlock(\ell)$\\
}
}
$\Dine(\Fork, \Knife)$\tcp{for some time}
$\Lock(\ell)$\\
$\Release(\Fork)$\\
$\Release(\Knife)$\\
$\Unlock(\ell)$\\
$\Sleep()$\tcp{for some time}
}

\end{algorithm}
};

\newcommand{\globalxdisposition}{4.75}
\newcommand{\globalydisposition}{3.4}
\newcommand{\ynodestep}{-0.8}
\newcommand{\xnodestep}{0.5}

\node	[internal]		(x1)	at	(\globalxdisposition+0*\xnodestep,\globalydisposition+0*\ynodestep)		{$1$};
\node	[internal]		(x20)	at	(\globalxdisposition+3*\xnodestep,\globalydisposition+0*\ynodestep)		{$20$};
\node	[internal]		(x2)	at	(\globalxdisposition+1*\xnodestep,\globalydisposition+1*\ynodestep)		{$2$};
\node	[internal]		(x13)	at	(\globalxdisposition-1*\xnodestep,\globalydisposition+2*\ynodestep)		{$13$};
\node	[internal]		(x14)	at	(\globalxdisposition-1*\xnodestep,\globalydisposition+3*\ynodestep)		{$14$};
\node	[internal]		(x15)	at	(\globalxdisposition-1*\xnodestep,\globalydisposition+4*\ynodestep)		{$15$};
\node	[internal]		(x16)	at	(\globalxdisposition-1*\xnodestep,\globalydisposition+5*\ynodestep)		{$16$};
\node	[internal]		(x17)	at	(\globalxdisposition-1*\xnodestep,\globalydisposition+6*\ynodestep)		{$17$};
\node	[internal]		(x18)	at	(\globalxdisposition-1*\xnodestep,\globalydisposition+7*\ynodestep)		{$18$};
\node	[internal]		(x19)	at	(\globalxdisposition-1*\xnodestep,\globalydisposition+8*\ynodestep)		{$19$};

\node	[internal]		(x3)	at	(\globalxdisposition+2*\xnodestep,\globalydisposition+2*\ynodestep)		{$3$};
\node	[internal]		(x4)	at	(\globalxdisposition+2*\xnodestep,\globalydisposition+3*\ynodestep)		{$4$};
\node	[internal]		(x5)	at	(\globalxdisposition+2*\xnodestep,\globalydisposition+4*\ynodestep)		{$5$};
\node	[internal]		(x6)	at	(\globalxdisposition+2*\xnodestep,\globalydisposition+5*\ynodestep)		{$6$};
\node	[internal]		(x7)	at	(\globalxdisposition+2*\xnodestep,\globalydisposition+6*\ynodestep)		{$7$};

\node	[internal]		(x8)	at	(\globalxdisposition+5*\xnodestep,\globalydisposition+2*\ynodestep)		{$8$};
\node	[internal]		(x9)	at	(\globalxdisposition+5*\xnodestep,\globalydisposition+3*\ynodestep)		{$9$};
\node	[internal]		(x10)	at	(\globalxdisposition+5*\xnodestep,\globalydisposition+4*\ynodestep)		{$10$};
\node	[internal]		(x11)	at	(\globalxdisposition+5*\xnodestep,\globalydisposition+5*\ynodestep)		{$11$};
\node	[internal]		(x12)	at	(\globalxdisposition+5*\xnodestep,\globalydisposition+6*\ynodestep)		{$12$};

\draw [->, thick] (x1) to 	(x2);

\draw [->, thick] (x2) to 	(x13);
\draw [->, thick] (x2) to 	(x3);

\draw [->, thick] (x13) to 	(x14);
\draw [->, thick] (x14) to 	(x15);
\draw [->, thick] (x15) to 	(x16);
\draw [->, thick] (x16) to 	(x17);
\draw [->, thick] (x17) to 	(x18);
\draw [->, thick] (x18) to 	(x19);
\draw [->, thick] (x1) to 	(x20);

\draw [->, thick] (x3) to 	(x4);
\draw [->, thick] (x4) to 	(x5);
\draw [->, thick] (x5) to 	(x6);
\draw [->, thick] (x6) to 	(x7);

\draw [->, thick] (x8) to 	(x9);
\draw [->, thick] (x9) to 	(x10);
\draw [->, thick] (x10) to 	(x11);
\draw [->, thick] (x11) to 	(x12);

\draw [->, thick] (x7) .. controls ([xshift=1cm] x7) and ([xshift=-1cm] x8) .. (x8);
\draw [->, thick] (x12) .. controls ([xshift=-1cm] x12) and ([xshift=1.5cm] x2) .. (x2);
\draw [->, thick] (x19) .. controls ([xshift=-1cm] x19) and ([xshift=-1.5cm] x1) .. (x1);
\draw [->, thick] (x3) .. controls ([xshift=-0.7cm] x3) and ([xshift=-0.7cm] x7) .. (x7);
\draw [->, thick] (x8) .. controls ([xshift=0.7cm] x8) and ([xshift=0.7cm] x12) .. (x12);

\renewcommand{\globalxdisposition}{10.5}
\renewcommand{\globalydisposition}{3.3}
\renewcommand{\ynodestep}{-0.9}
\renewcommand{\xnodestep}{0.6}

\node	[bag]		(b1)	at	(\globalxdisposition + 1*\xnodestep,\globalydisposition+0*\ynodestep)		{$1, 19, 20$};
\node	[bag]		(b2)	at	(\globalxdisposition + 1*\xnodestep,\globalydisposition+1*\ynodestep)		{$1, 2, 19$};
\node	[bag]		(b13)	at	(\globalxdisposition + -1*\xnodestep,\globalydisposition+2*\ynodestep)		{$2, 13, 19$};
\node	[bag]		(b14)	at	(\globalxdisposition + -1*\xnodestep,\globalydisposition+3*\ynodestep)		{$13, 14, 19$};
\node	[bag]		(b15)	at	(\globalxdisposition + -1*\xnodestep,\globalydisposition+4*\ynodestep)		{$14, 15, 19$};
\node	[bag]		(b16)	at	(\globalxdisposition + -1*\xnodestep,\globalydisposition+5*\ynodestep)		{$15, 16, 19$};
\node	[bag]		(b17)	at	(\globalxdisposition + -1*\xnodestep,\globalydisposition+6*\ynodestep)		{$16, 17, 19$};
\node	[bag]		(b18)	at	(\globalxdisposition + -1*\xnodestep,\globalydisposition+7*\ynodestep)		{$17, 18, 19$};

\node	[bag]		(b7)	at	(\globalxdisposition + 3*\xnodestep,\globalydisposition+2*\ynodestep)		{$2, 7, 12$};

\node	[bag]		(b3)	at	(\globalxdisposition + 2*\xnodestep,\globalydisposition+3*\ynodestep)		{$2, 3, 7$};
\node	[bag]		(b4)	at	(\globalxdisposition + 2*\xnodestep,\globalydisposition+4*\ynodestep)		{$3, 4, 7$};
\node	[bag]		(b5)	at	(\globalxdisposition + 2*\xnodestep,\globalydisposition+5*\ynodestep)		{$4, 5, 7$};
\node	[bag]		(b6)	at	(\globalxdisposition + 2*\xnodestep,\globalydisposition+6*\ynodestep)		{$5, 6, 7$};

\node	[bag]		(b8)	at	(\globalxdisposition + 5*\xnodestep,\globalydisposition+3*\ynodestep)		{$7, 8, 12$};
\node	[bag]		(b9)	at	(\globalxdisposition + 5*\xnodestep,\globalydisposition+4*\ynodestep)		{$8, 9, 12$};
\node	[bag]		(b10)	at	(\globalxdisposition + 5*\xnodestep,\globalydisposition+5*\ynodestep)	   {$9, 10, 12$};
\node	[bag]		(b11)	at	(\globalxdisposition + 5*\xnodestep,\globalydisposition+6*\ynodestep)	   {$10, 11, 12$};

\draw [-, very thick] (b1) to 	(b2);
\draw [-, very thick] (b2) to 	(b13);
\draw [-, very thick] (b13) to 	(b14);
\draw [-, very thick] (b14) to 	(b15);
\draw [-, very thick] (b15) to 	(b16);
\draw [-, very thick] (b16) to 	(b17);
\draw [-, very thick] (b17) to 	(b18);

\draw [-, very thick] (b2) to 	(b7);
\draw [-, very thick] (b7) to 	(b8);
\draw [-, very thick] (b8) to 	(b9);
\draw [-, very thick] (b9) to 	(b10);
\draw [-, very thick] (b10) to 	(b11);

\draw [-, very thick] (b7) to 	(b3);
\draw [-, very thick] (b3) to 	(b4);
\draw [-, very thick] (b4) to 	(b5);
\draw [-, very thick] (b5) to 	(b6);

\end{tikzpicture}
\caption{A concurrent program (left), its controlflow graph (middle), and a tree decomposition of the controlflow graph (right).}\label{fig:fig_exampleAp}
\end{figure*}

\cref{fig:fig_exampleAp} illustrates the introduced notions in a small example of the well-known $k$ dining philosophers problem.
For the purpose of the example, $\Lock$ is considered a blocking operation.
Consider the case of $k=2$ threads being executed in parallel. 
The graphs $G_1$ and $G_2$ that correspond to the two threads have nodes of the form $(i,\ell)$,
where $i\in[20]$ is a node of the controlflow graph, and $\ell\in [3]$ denotes the thread that controls the lock
($\ell=3$ denotes that $\ell$ is free, whereas $\ell=i\in [2]$ denotes that it is acquired by thread $i$).
The concurrent graph $G$ is taken to be the asynchronous composition of $G_1$ and $G_2$,
and consists of nodes $\Prod{x,y}$, where $x$ and $y$ is a node of $G_1$ and $G_2$ respectively,
such that $x$ and $y$ agree on the value of $\ell$ (all other nodes can be discarded).
For brevity, we represent nodes of $G$ as triplets $\Prod{x,y,\ell}$ where now $x$ and $y$ are nodes in the controlflow graphs 
$G_1$ and $G_2$ (i.e., without carrying the value of the lock), and $\ell$ is the value of the lock. 
A transition to a node $\Prod{x,y, \ell}$ in which one component $G_i$ performs a $\Lock$ is allowed only from a node where $\ell=3$,
and sets $\ell=i$ in the target node (i.e., $\Prod{x,y,i}$).
Similarly, a transition to a node $\Prod{x,y, \ell}$ in which one component $G_i$ performs an $\Unlock$ is allowed from a node where $\ell=i$, and sets $\ell=3$ in the target node(i.e., $\Prod{x,y,3}$).

Suppose that we are interested in determining (1)~whether the first thread can execute $\Dine(\Fork, \Knife)$
without owning $\Fork$ or $\Knife$, and (2)~whether a deadlock can be reached in which each thread owns one resource.
These questions naturally correspond to partial pair and pair queries respectively, 
as in case (1)~we are interested in a local property of $G_1$, whereas in case (2)~we are interested in a global property of $G$.
We note, however, that case (1)~still requires an analysis on the concurrent graph $G$.
In each case, the analysis requires a set of datafacts $D$, along with dataflow functions $f:2^D\to 2^D$
that mark each edge. These functions are distributive, in the sense that $f(A)=\bigcup_{a\in A}f(a)$.

\smallskip\noindent{\bf Local property as a partial pair query.}
Assume that we are interested in determining whether the first thread can execute $\Dine(\Fork, \Knife)$
without owning $\Fork$ or $\Knife$.
A typical datafact set is $D=\{\Fork, \Knife,\Null \}$, where each datafact denotes that the corresponding resource must be owned by the first thread.
The concurrent graph $G$ is associated with a weight function $\Weight$ of dataflow functions $f:2^D\to 2^D$.
The dataflow function $\Weight(e)$ along an edge $e$ behaves as follows on input datafact $F$
(we only describe the case where $F=\Fork$, as the other case is symmetric).
\begin{compactenum}
\item If $e$ transitions to a node in which the second thread acquires $\Fork$ or the first thread releases $\Fork$, then $\Weight(e)(\Fork)\to \Null$
(i.e., $\Fork$ is removed from the datafacts).
\item Else, if $e$ transitions to a node in which the first thread acquires $\Fork$, then $\Weight(e)(\Null)\to \Fork$
(i.e., $\Fork$ is inserted to the datafacts).
\end{compactenum}
Similarly for the $F=\Knife$ datafact. The ``meet-over-all-paths'' operation is set intersection.
Then the question is answered by testing whether $\Distance(\Prod{1,1,3}, \Prod{14,\bot,3})=\{\{\Fork, \Knife\}\}P$, i.e., by performing a \emph{partial pair query}, in which the node of the second thread is unspecified.

\smallskip\noindent{\bf Global property as a pair query.}
Assume that we are interested in determining whether the two threads can cause a deadlock.
Because of symmetry, we look for a deadlock in which the first thread may hold the $\Fork$,
and the second thread may hold the $\Knife$.
A typical datafact set is $D=2^{\{ \Fork, \Knife \}}$. For a datafact $F\in D$ we have 
\begin{compactenum}
\item $\Fork\in F$ if $\Fork$ may be acquired by the \emph{first} thread.
\item $\Knife\in F$ if $\Knife$ may be acquired by the \emph{second} thread.
\end{compactenum}
The concurrent graph $G$ is associated with a weight function $\Weight$ of dataflow functions $f:2^D\to 2^D$.
The dataflow function $\Weight(e)$ along an edge $e$ behaves as follows on input datafact $F$.
\begin{compactenum}
\item If $e$ transitions to a node in which the second thread acquires $\Fork$ or the first thread releases $\Fork$,
then $\Weight(e)(F)\to F\setminus\{\Fork\}$ (i.e., the first thread no longer owns $\Fork$).
\item If $e$ transitions to a node in which the first thread acquires $\Fork$, then $\Weight(e)(F)\to F\cup \{\Fork\}$
(i.e., the first thread now owns $\Fork$).
\item If $e$ transitions to a node in which the first thread acquires $\Knife$ or the second thread releases $\Knife$,
then $\Weight(e)(F)\to F\setminus\{\Knife\}$ (i.e., the second thread no longer owns $\Knife$).
\item If $e$ transitions to a node in which the second thread acquires $\Knife$, then $\Weight(e)(F)\to F\cup \{\Knife\}$
(i.e., the second thread now owns $\Knife$).
\end{compactenum}

The ``meet-over-all paths'' operation is set union.
Then the question is answered by testing whether $\{\Fork, \Knife\}\in \Distance(\Prod{1,1,3}, \Prod{2,2,3})$, i.e., by performing a \emph{pair query},
and finding out whether the two threads can start the $while$ loop with each one holding one resource.
Alternatively, we can answer the question by performing a single-source query from $\Prod{1,1,3}$ and finding out whether 
there exists any node in the concurrent graph $G$ in which every thread owns one resource
(i.e., its distance contains $\{\Fork, \Knife\}$).

\section{Details of \cref{sec:prelim_tree_dec}}\label{sec:tree_decomp}

Given constants $0<\delta\leq 1$ and $\lambda \geq 2$, throughout this section we fix
\[
\alpha=\Bagfactor; \quad \beta=\Balfactor;\quad  \gamma=\Levelfactor
\]

We show how given a graph $G$ of treewidth $t$
and a tree-decomposition $\Tree'(G)$ of $b$ bags and width $t$,
we can construct in $O(b\cdot \log b)$ time and $O(b)$ space a $(\alpha, \beta, \gamma)$ tree-decomposition with $b$ bags.
That is, the resulting tree-decomposition has width at most $\alpha \cdot (t+1)$, 
and for every bag $\Bag$ and descendant $\Bag'$ of $\Bag$ that appears $\gamma$ levels below,
we have that $|T(\Bag')|\leq \beta\cdot |T(\Bag)|$ (i.e., the number of bags in $T(\Bag')$ is at most $\beta$ times
as large as that in $T(\Bag)$).
The result is established in two steps.

\smallskip\noindent{\bf Tree components and operations $\Splitalgo$ and $\Mergealgo$.}
Given a tree-decomposition  $T=(V_T, E_T)$, a  \emph{component} of $T$ is a subset of bags of $T$.
The \emph{neighborhood} $\Nhood(\Comp)$ of $\Comp$
is the set of bags in $V_T\setminus \Comp$ that have a neighbor in $\Comp$, i.e.
\[
\Nhood(\Comp) = \{\Bag\in V_T\setminus \Comp:~(\{\Bag\}\times \Comp)\cap E_T\neq \emptyset\}
\]

Given a component $\Comp$, we define the operation $\Splitalgo$ as 
$\Splitalgo(\Comp)=(\mathcal{\Seplist}, \mathcal{\Complist})$, where
$\mathcal{\Seplist}\subseteq \Comp$ is a list of bags $(\Bag_1, \dots \Bag_{\Nsep})$
and $\mathcal{\Complist}$ is a list of sub-components $(\Comp_1,\dots \Comp_r )$
such that removing each bag $\Bag_i$ from $\Comp$ splits $\Comp$ into the subcomponents
$\mathcal{\Complist}$, and for every $i$ we have $|\Comp_i|\leq \frac{\delta}{2}\cdot |\Comp|$.
Note that since $\Comp$ is a component of a tree, we can find a single separator bag
that splits $\Comp$ into sub-components of size at most $\frac{|\Comp|}{2}$.
Applying this step recursively for $\log(2/\delta)$ levels yields the desired separator set $\mathcal{X}$.
For technical convenience, if this process yields less than $\Nsep$ bags, we repeat some of these bags
until we have $\Nsep$ many.

Consider a list of components $\mathcal{\Complist}=(\Comp_1,\dots \Comp_r)$, and let $z=\sum_i |\Comp_i|$.
Let $j$ be the largest integer such that $\sum_{i=1}^j|\Comp_i|\leq \frac{z}{2}$.
We define the operation $\Mergealgo(\mathcal{\Complist})=(\ov{\Comp}_1, \ov{\Comp}_2)$,
where $\ov{\Comp}_1=\bigcup_{i=1}^j\Comp_i$ and $\ov{\Comp}_2=\bigcup_{i=j+1}^r\Comp_i$.
The following claim is trivially obtained.

\smallskip
\begin{claim}\label{claim:merge}
If $|\Comp_i|<\frac{\delta}{2}\cdot z$ for all $i$, then $|\ov{\Comp}_1|\leq |\ov{\Comp}_2|\leq \frac{1+\delta}{2}\cdot z$.
\end{claim}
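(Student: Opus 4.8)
The plan is to establish the two inequalities $|\ov{\Comp}_1| \leq |\ov{\Comp}_2|$ and $|\ov{\Comp}_2| \leq \frac{1+\delta}{2}\cdot z$ separately, each following directly from unwinding the definition of $j$ as the largest index with $\sum_{i=1}^{j} |\Comp_i| \leq z/2$. First I would record the immediate consequences of this choice: $|\ov{\Comp}_1| = \sum_{i=1}^{j} |\Comp_i| \leq z/2$ by definition, hence $|\ov{\Comp}_2| = z - |\ov{\Comp}_1| \geq z - z/2 = z/2 \geq |\ov{\Comp}_1|$, which gives the first inequality.

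For the upper bound I would first note that the hypothesis $|\Comp_i| < \frac{\delta}{2}\cdot z$ together with $\delta \leq 1$ implies $|\Comp_i| < z/2$ for every $i$; in particular no single part already exceeds $z/2$, so $j \geq 1$, and since the full sum equals $z > z/2$ not all parts can fit below the threshold, so $j < r$ and the part $\Comp_{j+1}$ exists. By maximality of $j$ we then have $\sum_{i=1}^{j+1} |\Comp_i| > z/2$, i.e. $|\ov{\Comp}_1| + |\Comp_{j+1}| > z/2$; combining with $|\Comp_{j+1}| < \frac{\delta}{2}\cdot z$ yields $|\ov{\Comp}_1| > \frac{1-\delta}{2}\cdot z$, and therefore $|\ov{\Comp}_2| = z - |\ov{\Comp}_1| < \frac{1+\delta}{2}\cdot z$, which is the second inequality.

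There is essentially no obstacle here: the statement is purely arithmetic once the index $j$ is made explicit, which is precisely why it is labeled as trivially obtained. The only subtlety is making sure $\ov{\Comp}_2$ is nonempty (the case $j = r$), but as observed this is precluded by the hypothesis, so no separate degenerate case analysis is needed.
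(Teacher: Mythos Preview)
Your proof is correct and follows essentially the same approach as the paper: both arguments establish $\frac{1-\delta}{2}\cdot z < |\ov{\Comp}_1| \leq \frac{z}{2}$ (you derive this explicitly from the maximality of $j$ and the hypothesis on $|\Comp_{j+1}|$, the paper simply asserts it ``by construction''), and then use $|\ov{\Comp}_1| + |\ov{\Comp}_2| = z$ to transfer these bounds to $|\ov{\Comp}_2|$. Your version is more explicit about why $j<r$ (so that $\Comp_{j+1}$ exists), but otherwise the two proofs are identical in substance.
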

\begin{proof}
By construction, $\frac{1-\delta}{2}\cdot z< |\ov{\Comp}_1|\leq \frac{1}{2}\cdot z$,
and since $\ov{\Comp}_1$ and $\ov{\Comp}_2$ partition $\Complist$, we have
$\ov{\Comp}_1+\ov{\Comp}_2=1$. The result follows.
\end{proof}

\smallskip\noindent{\bf Construction of a $(\beta, \gamma)$-balanced rank tree}.
In the following, we consider that $T_G=\Tree'(G)=(V_T, E_T)$ is a tree-decomposition of $G$
and has $|V_T|=b$ bags. 
Given the parameters $\lambda \in \Nats$ with $\lambda \geq 2$ and $0<\delta<1$,
we use the following algorithm $\Rankalgo$ to construct a tree of bags $\Ranktree$.
$\Rankalgo$ operates recursively on inputs $(\Comp, \ell)$ where $\Comp$ is a component of $T_G$
and $\ell\in \{0\}\cup [\Levelfactor-1]$, as follows.
\begin{compactdesc}
\item[1. If $|\Comp|\cdot \frac{\delta}{2}\leq 1$], construct a bag $\RBag=\bigcup_{\Bag\in\Comp}\Bag$, and return $\RBag$. 
\item[2. Else, if $\ell> 0$]\label{item:spliC1}, let $(\Seplist, \Complist)=\Splitalgo(\Comp)$.
Construct a bag $\RBag=\bigcup_{\Bag_i\in \Seplist}\Bag_i$, and let $(\ov{\Comp}_1, \ov{\Comp}_2)=\Mergealgo(\mathcal{\Complist})$.
Call $\Rankalgo$ recursively on input $(\ov{\mathcal{\Comp}}_1, (\ell+1)\mod \Levelfactor)$ and $(\ov{\mathcal{\Comp}}_2, (\ell+1)\mod \Levelfactor)$, and let $\RBag_1$, $\RBag_2$ be the returned bags. Make $\RBag_1$ and $\RBag_2$ the left and right child of $\RBag$.
\item[3. Else,]\label{item:spliC2} if $\ell=0$ , if $|\Nhood(\Comp)| > 1$,
find a bag $\Bag$ whose removal splits $\Comp$ into connected components $\ov{\Comp}_1, \ov{\Comp}_2$
with $|\Nhood(\ov{\Comp}_i)\cap \Nhood(\Comp)|\leq \frac{|\Nhood(\Comp)|}{2}$.
Call $\Rankalgo$ recursively on input $(\ov{\Comp}_1, (\ell+1)\mod \Levelfactor)$ and $(\ov{\Comp}_2, (\ell+1)\mod \Levelfactor)$, and let $\RBag_1$, $\RBag_2$ be the returned bags. Make $\RBag_1$ and $\RBag_2$ the left and right child of $\RBag$.
Finally, if $|\Nhood(\Comp)| \leq 1$, call $\Rankalgo$ recursively on input $(\Comp, (\ell-1)\mod \Levelfactor)$.
\end{compactdesc}

In the following we use the symbols $\Bag$ and $\RBag$ to refer to bags of $T_G$ and $\Ranktree$ respectively.
Given a bag $\RBag$, we denote by $\Comp(\RBag)$ the input component of $\Rankalgo$ when $\RBag$ was constructed,
and define the \emph{neighborhood} of $\RBag$ as $\Nhood(\RBag)=\Nhood(\Comp(\RBag))$.
Additionally, we denote by $\Baghood(\RBag)$ the set of separator bags $\Bag_1,\dots \Bag_r$ of $\Comp$ that were used to construct $\RBag$.
It is straightforward that $\Baghood(\RBag_1)\cap \Baghood(\RBag_2) = \emptyset$ for every distinct
$\RBag_1$ and $\RBag_2$.

\smallskip
\begin{claim}\label{prop:nhood_parent}
Let $\RBag$ and $\RBag'$ be respectively a bag and its parent in $\Ranktree$. 
Then $\Nhood(\RBag)\subseteq \Nhood(\RBag')\cup \Baghood(\RBag')$,
and thus $|\Nhood(\RBag)|\leq |\Nhood(\RBag')|+\Nsep$.
\end{claim}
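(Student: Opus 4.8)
The plan is to relate the component $\Comp(\RBag)$ attached to a child bag to the component $\Comp(\RBag')$ of its parent together with the separator set $\Baghood(\RBag')$, and then reason about tree‑neighborhoods purely combinatorially in $T_G$. The core of the argument is a single structural fact about $\Rankalgo$; once it is in place, the neighborhood inclusion is a short case distinction.

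\textbf{Structural fact.} First I would prove that whenever $\RBag$ is a child of $\RBag'$ in $\Ranktree$, the component $\Comp(\RBag)$ is a union of connected components of the subforest of $T_G$ induced on $\Comp(\RBag') \setminus \Baghood(\RBag')$; in particular $\Comp(\RBag) \subseteq \Comp(\RBag') \setminus \Baghood(\RBag')$ and $\Comp(\RBag)\cap\Baghood(\RBag')=\emptyset$. This is verified by inspecting the branch of $\Rankalgo$ that produced $\RBag'$. If $\RBag'$ came from the $\ell>0$ branch, then $\Baghood(\RBag')$ is the set underlying the separator list $\Seplist$ returned by $\Splitalgo$, the list $\Complist$ is exactly the set of connected components of $\Comp(\RBag')\setminus\Seplist$, and $\Comp(\RBag)$ is one of the two unions $\ov{\Comp}_1,\ov{\Comp}_2$ output by $\Mergealgo(\Complist)$. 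If $\RBag'$ came from the $\ell=0$ branch with $|\Nhood|>1$, then $\Baghood(\RBag')$ is the single separator bag $\Bag$ and $\Comp(\RBag)$ is one side of the split of $\Comp(\RBag')$ by $\Bag$. The remaining $|\Nhood|\le 1$ sub‑case of the $\ell=0$ branch creates no bag and only decrements $\ell$, so it is invisible to $\Ranktree$; since after it $\ell>0$, at most one such pass‑through can intervene before the next bag is created, and it leaves the passed component unchanged, so it does not affect the claimed parent/child relation.

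\textbf{Neighborhood inclusion.} Write $\Comp=\Comp(\RBag)$, $\Comp'=\Comp(\RBag')$, $X=\Baghood(\RBag')$. Take any $\Bag\in\Nhood(\RBag)=\Nhood(\Comp)$; then $\Bag\notin\Comp$ and $\Bag$ has a $T_G$‑neighbor $\Bag^*\in\Comp$. If $\Bag\notin\Comp'$, then $\Bag$ lies outside $\Comp'$ with a $T_G$‑neighbor $\Bag^*\in\Comp\subseteq\Comp'$, so $\Bag\in\Nhood(\Comp')=\Nhood(\RBag')$. Otherwise $\Bag\in\Comp'$; if moreover $\Bag\notin X$, then $\Bag$ and $\Bag^*$ both lie in $\Comp'\setminus X$ and are $T_G$‑adjacent, hence lie in the same connected component of the subforest of $T_G$ induced on $\Comp'\setminus X$, and since $\Comp$ is a union of such components and contains $\Bag^*$ it also contains $\Bag$, contradicting $\Bag\notin\Comp$. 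Hence $\Bag\in X=\Baghood(\RBag')$. In every case $\Bag\in\Nhood(\RBag')\cup\Baghood(\RBag')$, which is the desired inclusion.

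Finally, the cardinality bound follows because $|\Baghood(\RBag')|\le\Nsep$ in every branch ($\Seplist$ contains at most $\Nsep$ distinct bags, the $\ell=0$ separator is a singleton, and a leaf bag — which cannot be a parent anyway — is built from a component of size at most $\Nsep$ by the base‑case size condition), so $|\Nhood(\RBag)|\le|\Nhood(\RBag')\cup\Baghood(\RBag')|\le|\Nhood(\RBag')|+\Nsep$. I expect the only delicate point to be the structural fact: one must correctly bookkeep the $|\Nhood|\le1$ pass‑through, and observe that although $\Mergealgo$ may return a \emph{disconnected} union of tree‑components, the neighborhood argument only needs that $\Comp(\RBag)$ is closed under connected components of the induced subforest, which the $\Mergealgo$ output still satisfies — so disconnectedness of the pieces causes no trouble.
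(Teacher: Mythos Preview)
Your proof is correct and follows essentially the same approach as the paper. The paper's own proof is a one-line assertion that every bag in $\Nhood(\Comp(\RBag))$ is either in $\Nhood(\Comp(\RBag'))$ or is a separator bag of $\Comp(\RBag')$; you have unpacked this into the structural fact about $\Comp(\RBag)$ being a union of connected components of $\Comp(\RBag')\setminus\Baghood(\RBag')$ and the ensuing case distinction, and you have also carefully handled the $|\Nhood|\le 1$ pass-through step and the disconnectedness introduced by $\Mergealgo$, both of which the paper leaves implicit.
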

\begin{proof}
Every bag in $\Nhood(\Comp(\RBag))$ is either a bag in $\Nhood(\Comp(\RBag'))$,
or a separator bag of $\Comp(\RBag')$, and thus a bag of $\Baghood(\Bag')$.
\end{proof}

\begin{figure}
\newcommand\irregularcircle[2]{
  \pgfextra {\pgfmathsetmacro\len{(#1)+rand*(#2)}}
  +(0:\len pt)
  \foreach \a in {10,25,...,350}{
    \pgfextra {\pgfmathsetmacro\len{(#1)+rand*(#2)}}
    -- +(\a:\len pt)
  } -- cycle
}

\newcommand\irregularellipse[3]{
  \pgfextra {\pgfmathsetmacro\b{(#1)+rand*(#3)}}
  \pgfextra {\pgfmathsetmacro\c{(#2)+rand*(#3)}}
  \pgfextra {\pgfmathsetmacro\len{\b*\c/ ( (\c * cos(0))^2 + (\b * sin(0 ))^2)^(1/2)}}
  +(0:\len pt)
  \foreach \a in {10,20,...,350}{
    \pgfextra {\pgfmathsetmacro\b{(#1)+rand*(#3)}}
    \pgfextra {\pgfmathsetmacro\c{(#2)+rand*(#3)}}
    \pgfextra {\pgfmathsetmacro\len{\b*\c/ ( (\c * cos(\a))^2 + (\b * sin(\a ))^2)^(1/2)}}
    -- +(\a:\len  pt)
  } -- cycle
}

\centering
\begin{tikzpicture}
\pgfmathsetseed{1234}
\coordinate (c) at (0,0);
\draw[very thick,rounded corners=1mm, fill=red!30] (c) \irregularellipse{38mm}{13mm}{1.2mm};
\draw[very thick,rounded corners=1mm, fill=gray!30] (c) \irregularellipse{23mm}{9mm}{1.2mm};
\node[] at (-3,0) {$\Nhood(\Comp)$};
\node[circle, draw=black, thick, fill=white] at (0,0) (bag)	{$\Bag$};
\node[] at (1,0.1) {$\ov{\Comp}_2$};
\node[] at (-1.2,-0.1) {$\ov{\Comp}_1$};
\draw[thick, decoration={random steps,segment length=2mm,amplitude=0.5mm}, decorate,] (bag) -- (1.1,-1.2);
\draw[thick, decoration={random steps,segment length=2mm,amplitude=0.5mm}, decorate,] (bag) -- (-1.8,1.15);
\end{tikzpicture}
\caption{Illustration of one recursive step of $\Rankalgo$ on a component $\Comp$ (gray).
$\Comp$ is split into two sub-components $\ov{\Comp}_1$ and $\ov{\Comp}_2$
by removing a list of bags $\Seplist=(\Bag_i)_i$.
Once every $\lambda$ recursive calls, $\Seplist$ contains one bag, 
such that the neighborhood $\Nhood(\ov{\Comp}_i)$ of each $\ov{\Comp}_i$ is at most half the size of $\Nhood(\Comp)$
(i.e., the red area is split in half).
In the remaining $\lambda-1$ recursive calls, $\Seplist$ contains $\Nsep$ bags,
such that the size of each $\ov{\Comp}_i,$
is at most $\frac{1+\delta}{2}$ fraction the size of $\Comp$.
(i.e., the gray area is split in almost half).
}\label{fig:nhoodAp}
\end{figure}

Note that every bag $\Bag$ of $T_G$ belongs in $\Baghood(\RBag)$ of some bag $\RBag$ of $\Ranktree$,
and thus the bags of $\Ranktree$ already cover all nodes and
edges of $G$ (i.e., properties C1 and C2 of a tree decomposition).
In the following we show how $\Ranktree$ can be modified to also satisfy condition C3,
i.e., that every node $u$ appears in a contiguous subtree of $\Ranktree$.
Given a bag $\RBag$, we denote by $\Nhoodnodes(\RBag)=\RBag\cup \bigcup_{\Bag\in \Nhood(\RBag)}\Bag$,
i.e., $\Nhoodnodes(\RBag)$ is the set of nodes of $G$ that appear in $\RBag$ and its neighborhood.
In the sequel, to distinguish between paths in different trees,
given a tree of bags $T$ (e.g. $T_G$ or $\Ranktree$) and bags $B_1$, $B_2$ of $T$,
we write $B_1\Path_T B_2$ to denote the unique simple path from $B_1$ to $B_2$ in $T$.

We say that a pair of bags $(\RBag_1$, $\RBag_2)$ \emph{form a gap} of some node $u$ in a tree of bags $T$ (e.g., $\Ranktree$)
if $u\in \RBag_1\cap \RBag_2$
and for the unique simple path $P:\RBag_1\Path_T \Bag_2$ we have that $|P|\geq 2$ (i.e., there is at least one intermediate bag in $P$)
and for all intermediate bags $\RBag$ in $P$ we have $u\not\in \RBag$.
The following crucial lemma shows that if $\RBag_1$ and $\RBag_2$ form a gap of $u$ in $\Rankhtree$,
then for every intermediate bag $\RBag$ in the path $P:\RBag_1\Path_{\Ranktree} \RBag_2$,
$u$ must appear in some bag of $\Nhood(\RBag)$.

\smallskip
\begin{lemma}\label{lem:gap}
For every node $u$, and pair of bags $(\RBag_1$, $\RBag_2)$ that form a gap of $u$ in $\Ranktree$, such that $\RBag_1$ is an ancestor of $\RBag_2$,
for every intermediate bag $\RBag$ in $P:\RBag_1\Path_{\Ranktree} \RBag_2$ in $\Ranktree$, 
we have that $u\in \Nhoodnodes(\RBag)$.
\end{lemma}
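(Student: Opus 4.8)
The plan is to trace the structure of the algorithm $\Rankalgo$ and the key combinatorial invariant linking gaps in $\Ranktree$ to neighborhoods in $T_G$. The central observation is that the bags of $\Ranktree$ are built from \emph{separator bags} of components of $T_G$, and the recursive splitting means that a node $u$ which appears in a bag $\RBag_1$ and in a descendant $\RBag_2$ but in no intermediate bag of $\Ranktree$ must be "trapped" in the boundary of the components being processed. More precisely, recall that $\Comp(\RBag)$ denotes the input component of $T_G$ when $\RBag$ was constructed, and $\RBag = \bigcup_{\Bag \in \Baghood(\RBag)} \Bag$ where $\Baghood(\RBag)$ is the separator set used. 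As the recursion proceeds from $\RBag_1$ towards $\RBag_2$, the components $\Comp(\RBag)$ are nested and shrinking. I would show that $u \in \Nhoodnodes(\RBag)$ for each intermediate $\RBag$ by distinguishing whether $u$ lies inside $\Comp(\RBag)$ or not, and using the contiguity of $u$'s occurrences in the original tree-decomposition $T_G$ (condition C3 of $\Tree'(G)$).

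\textbf{Step 1 (Setup).} Fix $u$ and a gap $(\RBag_1, \RBag_2)$ with $\RBag_1$ an ancestor of $\RBag_2$, and let $P : \RBag_1 \Path_{\Ranktree} \RBag_2$. Observe that since $\RBag_2$ is a descendant of $\RBag_1$ in $\Ranktree$, the component $\Comp(\RBag_2)$ is contained in one of the sub-components produced from $\Comp(\RBag_1)$ by repeated $\Splitalgo$/$\Mergealgo$ (or by the single separator in the $\ell = 0$ case), and similarly $\Comp(\RBag)$ for any intermediate $\RBag$ on $P$ satisfies $\Comp(\RBag_2) \subseteq \Comp(\RBag) \subseteq \Comp(\RBag_1)$. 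This nesting is the backbone of the argument.

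\textbf{Step 2 (Locating $u$ relative to $\Comp(\RBag)$).} Since $u \in \RBag_2$, there is a bag $\Bag^* \in \Baghood(\RBag_2) \subseteq \Comp(\RBag_2)$ of $T_G$ with $u \in \Bag^*$; similarly $u$ appears in some bag $\Bag_1 \in \Baghood(\RBag_1)$. Now fix an intermediate bag $\RBag$ on $P$ and consider $\Comp(\RBag)$. If $u$ appears in no bag of $\Comp(\RBag)$, then since $u \in \Bag^* \in \Comp(\RBag_2) \subseteq \Comp(\RBag)$ — contradiction. So $u$ appears in some bag of $\Comp(\RBag)$. The key point is that $u$ cannot appear in any bag of $\Baghood(\RBag)$ (otherwise $u \in \RBag$, contradicting that $\RBag$ is an intermediate bag of the gap). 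So $u$ appears in $\Comp(\RBag) \setminus \Baghood(\RBag)$, hence in one of the sub-components that $\Comp(\RBag)$ is split into — and since the occurrences of $u$ in $T_G$ form a contiguous subtree (C3), and this subtree meets $\Comp(\RBag_2)$ (which lies in the sub-component containing $\RBag_2$'s ancestral chain), all occurrences of $u$ inside $\Comp(\RBag)$ lie in that same sub-component. I would then argue that the bag of $T_G$ realizing $u$ that is \emph{closest} (in $T_G$) to the separator bags of $\Baghood(\RBag)$ is adjacent in $T_G$ to some bag of the separator — i.e., $u$ witnesses a bag in $\Nhood(\Comp(\RBag)) = \Nhood(\RBag)$ or $u \in \RBag$ — giving $u \in \Nhoodnodes(\RBag)$.

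\textbf{The hard part will be} making precise the claim that $u$, appearing on "both sides" of the separator $\Baghood(\RBag)$ in $T_G$ (once near $\RBag_1$, once near $\RBag_2$) but never \emph{in} the separator, must show up in a bag of $T_G$ directly neighboring the separated component containing $\RBag_2$'s chain. This uses both the contiguity condition C3 applied in $T_G$ and the fact that $\Baghood(\RBag)$ genuinely disconnects $\Comp(\RBag)$ (so any $T_G$-path witnessing $u$'s contiguous appearance that leaves $\Comp(\RBag_2)$'s side must pass through a separator bag, and the last bag before entering the separator is a neighbor). One subtlety is handling the $\ell = 0$ case of $\Rankalgo$ where the split is by a single bag chosen for neighborhood balance rather than component size, and the recursive call $(\Comp, (\ell-1) \bmod \lambda)$ when $|\Nhood(\Comp)| \leq 1$, which does not change the component; these should be routine once the main invariant is set up, since they only reuse or shrink the current separator structure. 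I would also need to verify the direction: because $\RBag_1$ is an ancestor of $\RBag_2$, the sub-component containing $\RBag_2$'s subtree shrinks monotonically, so the witness for $u$ stays consistent along $P$ and the neighborhood bound applies uniformly to each intermediate $\RBag$.
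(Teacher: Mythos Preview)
Your overall strategy is the paper's: use C3 in $T_G$ along a path between a bag containing $u$ outside $\Comp(\RBag)$ and one inside, and locate where this path meets the boundary. But your Step~2 conflates two different objects, and the argument as written does not go through. The separator $\Baghood(\RBag)$ consists of bags removed \emph{from} $\Comp(\RBag)$ to form the children of $\RBag$; it sits \emph{inside} $\Comp(\RBag)$. By contrast, $\Nhood(\RBag)=\Nhood(\Comp(\RBag))$ consists of bags \emph{outside} $\Comp(\RBag)$ removed at earlier (ancestor) steps. Your claim that a $u$-bag adjacent to $\Baghood(\RBag)$ ``witnesses a bag in $\Nhood(\Comp(\RBag))$'' is false: such a bag is typically just a bag in one of the sub-components of $\Comp(\RBag)$, which contributes nothing to $\Nhoodnodes(\RBag)$. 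Likewise, the framing that $u$ appears ``on both sides of $\Baghood(\RBag)$'' is off: the bag $\Bag_1\in\Baghood(\RBag_1)$ lies outside $\Comp(\RBag)$ altogether, not on one side of $\Baghood(\RBag)$ within it.

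The paper bypasses $\Baghood(\RBag)$ entirely. With $\Bag_1\in\Baghood(\RBag_1)$ and $\Bag_2\in\Baghood(\RBag_2)$ both containing $u$, it walks the $T_G$-path $P_1:\Bag_1\Path_{T_G}\Bag_2$ and takes $\Bag^r$ to be the rightmost bag on $P_1$ that has already been removed as a separator by the time $\RBag$ is constructed (such a bag exists since $\Bag_1$ qualifies). Then $\Bag^r\notin\Comp(\RBag)$, but the next bag on $P_1$ toward $\Bag_2$ is still in $\Comp(\RBag)$, so $\Bag^r\in\Nhood(\Comp(\RBag))$. By C3 in $T_G$, $u\in\Bag^r$, hence $u\in\Nhoodnodes(\RBag)$. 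The right target is the last bag on $P_1$ \emph{outside} $\Comp(\RBag)$, not a bag inside $\Comp(\RBag)$ closest to $\Baghood(\RBag)$.
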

\begin{proof}
Fix any such a bag $\RBag$, and since $\RBag_1$ and $\RBag_2$ form a gap of $u$,
there exist bags $\Bag_1\in \Baghood(\RBag_1)$ and $\Bag_2\in \Baghood(\RBag_2)$
with $u\in \Bag_1\cap \Bag_2$.
Let $\Bag^r$ be the rightmost bag  of the path $P_1:\Bag_1\Path_{T_G}\Bag_2$ that has been chosen as a separator
when $\RBag$ was constructed. Note that $\Bag_1$ has been chosen as such a separator, therefore $\Bag^r$ is well defined.
We argue that $\Bag^r\in \Nhood(\RBag)$, which implies that $u\in \Nhoodnodes(\RBag)$.
This is done in two steps.

\begin{compactenum}
\item Since $\RBag_2$ is a descendant of $\RBag$, we have that $\Bag_2\in \Comp(\RBag)$,
i.e., $\Bag_2$ is a bag of the component when $\RBag$ was constructed.
\item By the choice of $\Bag^r$, for every intermediate bag $\Bag^i$ in the path $\Bag^r\Path_{T_G}\Bag_2$
we have $\Bag^i\in \Comp(\RBag)$. Hence $\Bag^r$ is incident to the component where $\Bag_2$ belonged
at the time $\RBag$ was constructed.
\end{compactenum}
These two points imply that $\Bag^r\in \Nhood(\RBag)$, as desired.
\cref{fig:gap} provides an illustration of the argument.
\end{proof}

\begin{figure}
\newcommand\irregularcircle[2]{
  \pgfextra {\pgfmathsetmacro\len{(#1)+rand*(#2)}}
  +(0:\len pt)
  \foreach \a in {10,25,...,350}{
    \pgfextra {\pgfmathsetmacro\len{(#1)+rand*(#2)}}
    -- +(\a:\len pt)
  } -- cycle
}

\newcommand\irregularellipse[3]{
  \pgfextra {\pgfmathsetmacro\b{(#1)+rand*(#3)}}
  \pgfextra {\pgfmathsetmacro\c{(#2)+rand*(#3)}}
  \pgfextra {\pgfmathsetmacro\len{\b*\c/ ( (\c * cos(0))^2 + (\b * sin(0 ))^2)^(1/2)}}
  +(0:\len pt)
  \foreach \a in {10,20,...,350}{
    \pgfextra {\pgfmathsetmacro\b{(#1)+rand*(#3)}}
    \pgfextra {\pgfmathsetmacro\c{(#2)+rand*(#3)}}
    \pgfextra {\pgfmathsetmacro\len{\b*\c/ ( (\c * cos(\a))^2 + (\b * sin(\a ))^2)^(1/2)}}
    -- +(\a:\len  pt)
  } -- cycle
}

\centering
\begin{tikzpicture}
\pgfmathsetseed{3456}
\coordinate (c) at (0,0);
\coordinate (c2) at (0,0);
\coordinate (c3) at (-1,0);
\draw[very thick,rounded corners=1mm, fill=red!30] (c2) \irregularellipse{40mm}{16mm}{1.2mm};
\draw[very thick,rounded corners=1mm, fill=gray!30] (c) \irregularellipse{28mm}{14mm}{1.2mm};
\draw[very thick,rounded corners=1mm, fill=blue!30] (c3) \irregularellipse{15mm}{10mm}{1.2mm};
\node[circle, draw=black, thick, fill=white, minimum size=22] at (-3.3,0) (bagr)	{$\Bag^r$};
\node[circle, draw=black, thick, fill=white, minimum size=22] at (-1,0) (bag)	{$\Bag_2$};
\node[] at (1.7,0) {$\Comp(\RBag)$};
\node[] at (3.3,0) {$\Nhood(\RBag)$};
\draw[very thick] (bagr) -- (bag);



\end{tikzpicture}
\caption{
Illustration of \cref{lem:gap}. Since $\Bag_2$ belongs to $\Comp(\RBag)$ and the blue sub-component has not been split yet,
the bag $\Bag^r$ is in the neighborhood of the blue sub-component, and thus in the neighborhood of $\Comp(\RBag)$.
}\label{fig:gap}
\end{figure}

\smallskip\noindent{\bf Turning the rank tree to a tree decomposition.}
\cref{lem:gap} suggests a way to turn the rank tree $\Ranktree$ to a tree-decomposition.
Let $\Rankhtree=\Rep(\Ranktree)$ be the tree obtained by replacing each bag $\Bag$ of $\Ranktree$ 
with $\Nhoodnodes(\Bag)$. 
For a bag $\RBag$ in $\Ranktree$ let $\RHBag$ be the corresponding bag in $\Rankhtree$  and vice versa.

\smallskip
\begin{claim}\label{claim:gap_ancestor}
If there is a pair of bags $\RHBag_1$, $\RHBag_2$ that form a gap of some node $u$ in $\Rankhtree$,
then there is a pair of bags $\RHBag'_1$, $\RHBag'_2$ that also form a gap of $u$,
and $\RHBag'_1$ is ancestor of $\RHBag'_2$.
\end{claim}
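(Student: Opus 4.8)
The plan is to prove the slightly stronger statement that \emph{every} gap in $\Rankhtree$ is already ``vertical'': if $(\RHBag_1,\RHBag_2)$ form a gap of $u$ then the LCA of $\RHBag_1$ and $\RHBag_2$ in $\Rankhtree$ is one of $\RHBag_1,\RHBag_2$. Granting this, \cref{claim:gap_ancestor} is immediate by taking $(\RHBag'_1,\RHBag'_2)=(\RHBag_1,\RHBag_2)$, since then one endpoint is an ancestor of the other. Throughout I would identify a bag of $\Rankhtree$ with the corresponding bag of $\Ranktree$ (they share the same underlying tree), so that ancestry, LCA and subtrees are unambiguous. So fix a gap $(\RHBag_1,\RHBag_2)$ of $u$, let $\RHBag$ be its LCA, and suppose for contradiction that $\RHBag\notin\{\RHBag_1,\RHBag_2\}$; then $\RHBag_1$ and $\RHBag_2$ lie in the subtrees of two distinct children of $\RHBag$ (the case where one of those children is an empty padding leaf is vacuous, since such a leaf contains no node and has no further subtree), and since $\RHBag$ is an intermediate bag on the path $\RHBag_1\Path_{\Rankhtree}\RHBag_2$ we get $u\notin\RHBag$.

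Next I would pass to $\Ranktree$: let $\RBag,\RBag_1,\RBag_2$ be the bags of $\Ranktree$ corresponding to $\RHBag,\RHBag_1,\RHBag_2$. By construction, $\RBag$ was produced by $\Rankalgo$ from a component $\Comp=\Comp(\RBag)$ by removing the separator set $\Seplist=\Baghood(\RBag)$ and then using $\Mergealgo$ to group the resulting pieces into two components $\ov{\Comp}_1,\ov{\Comp}_2$, which are exactly the components processed by the subtrees of the two children of $\RBag$; say $\RBag_1$ sits in the subtree over $\ov{\Comp}_1$ and $\RBag_2$ over $\ov{\Comp}_2$. Iterating \cref{prop:nhood_parent} from $\RBag_1$ up to the child of $\RBag$ above it, and using that every separator set chosen inside the $\ov{\Comp}_1$-subtree consists of bags of $\ov{\Comp}_1$, I would derive the containment
\[
\RHBag_1=\Nhoodnodes(\RBag_1)\ \subseteq\ \bigcup_{\Bag\in\ov{\Comp}_1}\Bag\ \cup\ \bigcup_{\Bag\in\Seplist}\Bag\ \cup\ \bigcup_{\Bag\in\Nhood(\Comp)}\Bag,
\]
and symmetrically for $\RHBag_2$ with $\ov{\Comp}_2$. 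Since $\RHBag=\Nhoodnodes(\RBag)=\bigcup_{\Bag\in\Seplist}\Bag\cup\bigcup_{\Bag\in\Nhood(\Comp)}\Bag$ and $u\notin\RHBag$, the node $u$ lies in no bag of $\Seplist$ and no bag of $\Nhood(\Comp)$; combined with the two containments, $u$ must occur in some bag $\Bag^{(1)}\in\ov{\Comp}_1$ and some bag $\Bag^{(2)}\in\ov{\Comp}_2$. Finally I would invoke condition C3 for the original tree decomposition $T_G$: every bag on the unique $T_G$-path $\pi$ from $\Bag^{(1)}$ to $\Bag^{(2)}$ contains $u$, so $\pi$ avoids $\Seplist$ and $\Nhood(\Comp)$ entirely. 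But any path in the tree $T_G$ from a bag of $\Comp$ to a bag of $\Comp$ that avoids $\Nhood(\Comp)$ stays inside $\Comp$, and avoiding $\Seplist$ as well it stays inside $\Comp\setminus\Seplist$; being a connected path joining a connected component of $\Comp\setminus\Seplist$ assigned to $\ov{\Comp}_1$ to one assigned to $\ov{\Comp}_2$, it would merge two distinct connected components of $\Comp\setminus\Seplist$, which is impossible. This contradiction excludes $\RHBag\notin\{\RHBag_1,\RHBag_2\}$ and finishes the argument.

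The step I expect to be the main obstacle is this last one. Because $\Mergealgo$ glues together several connected components of $\Comp\setminus\Seplist$, the sets $\ov{\Comp}_1$ and $\ov{\Comp}_2$ need not be connected subtrees of $T_G$, so one cannot simply claim that the $T_G$-path between $\Bag^{(1)}$ and $\Bag^{(2)}$ crosses $\Seplist$. The fix is to use, in tandem, the ``boundary'' property that any path leaving $\Comp$ in $T_G$ passes through $\Nhood(\Comp)$, and the $\cref{prop:nhood_parent}$-derived fact that $\Nhoodnodes(\RBag_i)\subseteq\ov{\Comp}_i\cup\Seplist\cup\Nhood(\Comp)$; these together pin $u$ into $\Seplist\cup\Nhood(\Comp)$, contradicting $u\notin\RHBag$. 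The iterated application of \cref{prop:nhood_parent} (keeping track that every separator set below the child of $\RBag$ is drawn from the appropriate $\ov{\Comp}_i$, and handling the base case where a child subtree is a single leaf) is routine bookkeeping but must be carried out carefully.
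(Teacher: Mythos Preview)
Your argument is correct, and in fact proves the stronger statement that every gap in $\Rankhtree$ is already vertical. The paper takes a different, weaker route: assuming neither $\RHBag_1$ nor $\RHBag_2$ is an ancestor of the other, it distinguishes whether $u$ actually lies in some $\Bag_i\in\Baghood(\RBag_i)$; if not, $u$ is in $\Nhoodnodes(\RBag_i)\setminus\RBag_i$, which (via \cref{prop:nhood_parent}) forces $u$ into some strict ancestor of $\RHBag_i$; if yes for both $i$, the paper takes the \emph{first} separator on the $T_G$-path $\Bag_1\Path_{T_G}\Bag_2$ and shows it belongs to $\Baghood(\RBag)$ for some common ancestor $\RBag$, so $u\in\RHBag$. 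Either way a vertical gap is produced, but the paper does not exclude the non-vertical configuration itself.

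Your approach buys a cleaner statement and avoids the case split: one containment (your displayed inclusion for $\RHBag_i$) plus C3 of $T_G$ and the connectivity of connected components of $\Comp\setminus\Seplist$ does all the work. The paper's approach is more local and reuses exactly the ``first separator on the $T_G$-path'' idea that also drives \cref{lem:gap}, so it fits the surrounding narrative more tightly. Both proofs rely on \cref{prop:nhood_parent}; you iterate it to bound $\Nhood(\RBag_1)$, while the paper uses it only once to push $u$ from $\Nhood(\RBag_i)$ into the parent. A minor remark: your description of $\RBag$ (``removing $\Seplist$ then $\Mergealgo$'') matches Item~2 of $\Rankalgo$; for the $\ell=0$ branch (Item~3) the same reasoning goes through with $\Seplist=\{\Bag\}$ and $\ov{\Comp}_1,\ov{\Comp}_2$ genuine connected components, so no additional case is needed.
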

\begin{proof}
Assume that neither  of $\RHBag_1$, $\RHBag_2$ is ancestor of the other.
\begin{compactenum}
\item If for some $i\in\{1,2\}$ there is no bag $\Bag_i\in \Baghood(\RBag_i)$ such that $u\in \Bag_i$,
then $u\in \Nhoodnodes(\RBag_i)\setminus \RBag_i$ and hence there is an ancestor $\RBag'_i$ of $\RBag_i$ such that $u\in \Nhoodnodes(\RBag'_i)$.
Thus $\RHBag'_i$ and $\RHBag_i$ form a gap of $u$ in $\Rankhtree$.
\item Else, there exists a $\Bag_1\in \Baghood(\RBag_1)$ and $\Bag_2\in \Baghood(\RBag_2)$ such that $u\in \Bag_1\cap \Bag_2$.
Let $\Bag$ be first bag in the path $\Bag_1\Path_{T_G}\Bag_2$ that was chosen as a separator we have $\Bag\in \Baghood(\RBag)$
for some ancestor $\RBag$ of $\RBag_1$ and $\RBag_2$, therefore $u\in \Nhoodnodes(\RBag)$.
\end{compactenum}
It follows that there exists an ancestor $\RHBag'_i$ of some $\RHBag_i$ so that the two form a gap of $u$ in $\Rankhtree$.
\end{proof}

The following lemma states that $\Rankhtree$ is a tree decomposition of $G$. 

\smallskip
\begin{lemma}\label{lem:balance_correctness}
$\Rankhtree=\Rep(\Ranktree)$ is a tree-decomposition of $G$.
\end{lemma}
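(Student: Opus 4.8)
The plan is to verify that $\Rankhtree = \Rep(\Ranktree)$ satisfies the three conditions C1--C3 of a tree decomposition of $G$. First I would dispense with C1 and C2. Each bag $\Bag$ of $T_G = \Tree'(G)$ is a separator bag in $\Baghood(\RBag)$ for exactly one bag $\RBag$ of $\Ranktree$ (the partitioning into $\Baghood$ sets is disjoint and exhaustive, as noted after the construction), so $\Bag \subseteq \RBag$, and hence $\Bag \subseteq \RHBag$ since $\Rankhtree$ is obtained by replacing $\RBag$ with the superset $\Nhoodnodes(\RBag) \supseteq \RBag$. Because $T_G$ is a tree decomposition of $G$, its bags cover all nodes (C1) and all edges (C2) of $G$; since every bag of $T_G$ is contained in some bag of $\Rankhtree$, the bags of $\Rankhtree$ inherit C1 and C2 immediately.

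The substance of the proof is C3: every node $u$ of $G$ appears in a contiguous subtree of $\Rankhtree$. I would argue by contradiction: suppose some node $u$ violates C3, i.e.\ $u$ appears in two bags of $\Rankhtree$ but not in some bag on the path between them. Then there is a pair of bags $\RHBag_1, \RHBag_2$ that form a gap of $u$ in $\Rankhtree$ (choosing them so that the path is a minimal witnessing gap). By Claim~\ref{claim:gap_ancestor}, we may assume without loss of generality that $\RHBag_1$ is an ancestor of $\RHBag_2$. Now let $\RBag_1, \RBag_2$ be the corresponding bags in $\Ranktree$; since $u \in \RHBag_1 = \Nhoodnodes(\RBag_1)$ and $u \in \RHBag_2 = \Nhoodnodes(\RBag_2)$, and using that $u$ appears in the bags of $\Ranktree$ that cover it, we can locate a pair of bags in $\Ranktree$ (possibly $\RBag_1,\RBag_2$ themselves, or ancestors thereof obtained exactly as in the proof of Claim~\ref{claim:gap_ancestor}) that form a gap of $u$ in $\Ranktree$ with the ancestor relation preserved. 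Applying Lemma~\ref{lem:gap} to this gap in $\Ranktree$ yields that \emph{every} intermediate bag $\RBag$ on the path $\RBag_1 \Path_{\Ranktree} \RBag_2$ satisfies $u \in \Nhoodnodes(\RBag)$, i.e.\ $u \in \RHBag$ for every intermediate $\RHBag$ on the corresponding path in $\Rankhtree$. This contradicts the assumption that $\RHBag_1, \RHBag_2$ form a gap of $u$ in $\Rankhtree$ (which requires $u$ to be absent from all intermediate bags). Hence no node violates C3.

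The main obstacle I expect is the bookkeeping at the seam between $\Ranktree$ and $\Rankhtree$: a gap in $\Rankhtree$ arises from bags that contain $u$ \emph{via the neighborhood expansion} $\Nhoodnodes$, not necessarily via $\Baghood$, so one must be careful when pulling the gap back to $\Ranktree$ to land on bags where Lemma~\ref{lem:gap}'s hypotheses genuinely hold (namely bags $\RBag_1, \RBag_2$ with $u \in \Baghood(\RBag_i)$ or with $\RBag_i$ chosen as a suitable ancestor). This is precisely the content already distilled in Claim~\ref{claim:gap_ancestor}, so the proof reduces to invoking that claim to normalize the gap to the ancestor case and then invoking Lemma~\ref{lem:gap} to propagate membership of $u$ along the path; once those two are in hand the contradiction is immediate and the remaining arguments (C1, C2) are routine.
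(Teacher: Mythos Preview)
Your proposal is correct and follows essentially the same route as the paper: C1 and C2 are immediate from $\RBag \subseteq \RHBag$, and C3 is argued by contradiction by normalizing a gap in $\Rankhtree$ to the ancestor case via Claim~\ref{claim:gap_ancestor}, lifting it to a gap in $\Ranktree$, and then invoking Lemma~\ref{lem:gap} to force $u \in \Nhoodnodes(\RBag)$ for every intermediate bag. The one piece of bookkeeping the paper makes explicit that you leave implicit is that the lower endpoint $\RBag_2$ of the lifted gap must coincide with the original $\RBag_2$ (so that the intermediate bags of the $\Rankhtree$ gap are covered); the paper secures this by using Claim~\ref{prop:nhood_parent} to show $u \in \RBag_2$, which you should spell out when filling in the details.
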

\begin{proof}
It is straightforward to see that the bags of $\Rankhtree$ cover all nodes and edges of $G$
(properties C1 and C2 of the definition of tree-decomposition),
because for each bag $\RBag$, we have that $\RBag\subseteq \widehat{\RBag}$.
It remains to show that every node $u$ appears in a contiguous subtree of $\Rankhtree$
(i.e., that property C3 is satisfied).

Assume towards contradiction otherwise, and by \cref{claim:gap_ancestor} it follows that there exist bags
$\RHBag_1$ and $\RHBag_2$ in $\Rankhtree$ that form a gap of some node $u$ such that $\RHBag_1$ is an ancestor of $\RHBag_2$.
Let $\widehat{P}:\RHBag_1\Path_{\Rankhtree} \RHBag_2$ be the path between them,
and $P:\RBag_1\Path_{\Ranktree} \RBag_2$ the corresponding path in $\Ranktree$.
By \cref{lem:gap} we have $u\not \in \RBag_1\cap \RBag_2$, otherwise for every intermediate bag $\RBag\in \widehat{P}$
we would have $u\in \Nhoodnodes(\RBag)$ and thus $u\in \RHBag$.
Additionally, we have $u\in \RBag_2$, otherwise by \cref{prop:nhood_parent}, we would have $u\in \Nhoodnodes(\RBag'_2)$,
where $\RBag'_2$ is the parent of $\RBag_2$, and thus $u\in \RHBag'_2$, contradicting the assumption that $\RHBag_1$ and $\RHBag_2$ form a gap of $u$. Hence $u\not \in \RBag_1$.
A similar argument as that of \cref{claim:gap_ancestor} shows that there exists an ancestor $\RBag'_1$ of $\RBag_1$
such that $u\in \RBag'_1$, and WLOG, take $\RBag'_1$ to be the lowest ancestor of $\RBag_1$ with this property.
Then $\RBag'_1$ is also an ancestor of $\RBag_2$, and $\RBag'_1$ and $\RBag_2$ form a gap of $u$ in $\Ranktree$.
Since $\RBag_1$ is an intermediate bag in $\RBag'_1\Path_{\Ranktree} \RBag_2$, by \cref{lem:gap} we have that $u\in \Nhoodnodes(\RBag)$,
thus $u\in \RHBag$. We have thus arrived at a contradiction, and the desired result follows.
\end{proof}

\smallskip\noindent{\bf Properties of the tree-decomposition $\Rankhtree$.}
\cref{lem:balance_correctness} states that $\Rankhtree$ obtained by replacing each bag of $\Ranktree$
with $\Nhoodnodes(\Bag)$ is a tree-decomposition of $G$. 
The remaining of the section focuses on showing that $\Rankhtree$ is a $(\alpha, \beta, \gamma)$
tree-decomposition of $G$, and that it can be constructed in $O(b\cdot \log  b)$ time and $O(b)$ space.

\smallskip
\begin{lemma}\label{lem:rank_assertions}
The following assertions hold:
\begin{compactenum}
\item \label{item:rank} Every bag $\RHBag$ of $\Rankhtree$ is $(\beta, \gamma)$-balanced.
\item \label{item:nhood} For every bag $\RHBag$ of $\Rankhtree$, we have $|\RHBag|\leq \alpha\cdot (t+1)$.
\end{compactenum}
\end{lemma}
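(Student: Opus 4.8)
The plan is to prove both assertions by induction along the root-to-leaf paths of $\Ranktree$, exploiting the cyclic behaviour of the parameter $\ell$. Recall that along any such path $\ell$ cycles through $0,1,\dots,\lambda-1$: when $\ell=0$ the algorithm performs a \emph{neighborhood-split} (it removes a single bag so that each child's neighborhood has size at most half of $\Nhood(\Comp)$), and when $\ell>0$ it performs a \emph{size-split} via $\Splitalgo$ and $\Mergealgo$ (it removes at most $\Nsep$ bags, and by \cref{claim:merge} each child component has size at most $\frac{1+\delta}{2}|\Comp|$). The lone exception is the degenerate branch $|\Nhood(\Comp)|\le 1$, which I will handle separately, since it merely replaces a neighborhood-split by an extra size-split and never increases any of the quantities being bounded.

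For \cref{item:nhood} (the width bound), write $\widehat{\RBag}=\RBag\cup\bigcup_{\Bag\in\Nhood(\RBag)}\Bag=\Nhoodnodes(\RBag)$, so that $|\widehat{\RBag}|\le |\RBag|+|\Nhood(\RBag)|\cdot(t+1)$; since $\RBag$ is a union of at most $\Nsep$ bags of $T_G$, each of width at most $t$, we have $|\RBag|\le \Nsep\cdot(t+1)$, and it remains to bound $|\Nhood(\RBag)|$ by $O(\lambda/\delta)$. For this I would prove, by induction on $\Level(\RBag)$, that $|\Nhood(\RBag)|\le C$ where $C$ is the fixed point of $N\mapsto N/2+1+(\lambda-1)\Nsep$: inside a window of $\gamma=\lambda$ levels the neighborhood grows by at most $\Nsep$ at each of the $\lambda-1$ size-splits (by \cref{prop:nhood_parent}) and is at least halved, up to an additive $1$ coming from the removed bag, at the one neighborhood-split. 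Solving gives $C=2+2(\lambda-1)\Nsep=O(\lambda/\delta)$, hence $|\widehat{\RBag}|\le (\Nsep+C)(t+1)$, and an elementary check (using $\delta\le 1$, $\lambda\ge 2$) shows $\Nsep+C\le 4\lambda/\delta=\alpha$, which is exactly the claimed bound.

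For \cref{item:rank} (the balance), fix a bag $\RBag$ of $\Ranktree$ and a descendant $\RBag_i$ with $\Level(\RBag_i)-\Level(\RBag)=\gamma=\lambda$. Along the path $\RBag\Path_{\Ranktree}\RBag_i$, at least $\lambda-1$ of the $\lambda$ steps are size-splits, each multiplying the component size by at most $\frac{1+\delta}{2}$ (\cref{claim:merge}), while the remaining step (a neighborhood-split, or a size-split produced by the degenerate branch) never increases it; hence $|\Comp(\RBag_i)|\le \big(\tfrac{1+\delta}{2}\big)^{\lambda-1}|\Comp(\RBag)|=\beta\cdot|\Comp(\RBag)|$. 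I would then transfer this estimate to subtree sizes using that the $\Baghood$-sets partition the bags of $T_G$ among the bags of $\Ranktree$, with each $\Baghood$ nonempty and of size at most $\Nsep$, so that $|\Ranktree(\RBag)|$ is pinned between $|\Comp(\RBag)|/\Nsep$ and $|\Comp(\RBag)|$; combining this with the component decay yields $|\Ranktree(\RBag_i)|\le \beta\cdot|\Ranktree(\RBag)|$, and since $\Rep$ preserves the tree shape we get $|\Rankhtree(\RHBag_i)|=|\Ranktree(\RBag_i)|\le\beta\,|\Ranktree(\RBag)|=\beta\,|\Rankhtree(\RHBag)|$, so every bag of $\Rankhtree$ is $(\beta,\gamma)$-balanced.

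The hard part will be the careful bookkeeping in two places. First, the degenerate branch $|\Nhood(\Comp)|\le 1$ desynchronises $\ell$ from the level, so one must argue it cannot occur at two consecutive path-nodes (a size-split inserts $\Nsep\ge 1$ separator bags into the neighborhood) and that whenever it occurs it only strengthens both estimates; this is routine but needs to be spelled out. Second, and more delicate, is turning the clean component-size decay into the subtree-size inequality with exactly the factor $\beta$, since the crude bounds $|\Comp(\RBag)|/\Nsep\le|\Ranktree(\RBag)|\le|\Comp(\RBag)|$ leak a factor $\Nsep$; closing this gap requires either a sharper relation between $\Baghood$ sizes and subtree sizes, or recasting the induction directly on subtree sizes, and the parameter choices $\lambda\ge 4/\eps$, $\delta\le\eps/18$ together with $\alpha=4\lambda/\delta$ are calibrated precisely so these losses are absorbed.
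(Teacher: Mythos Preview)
Your overall strategy matches the paper's: both arguments track how the $\ell$-cycle alternates between size-splits (which shrink $|\Comp|$ by the factor $\tfrac{1+\delta}{2}$ via \cref{claim:merge}) and neighborhood-splits (which halve $|\Nhood(\Comp)|$), and both use \cref{prop:nhood_parent} to control the neighborhood growth between halvings. For \cref{item:nhood} you phrase the bound as a fixed-point induction whereas the paper argues by a minimal counterexample; these are equivalent, and your arithmetic check $\Nsep+2+2(\lambda-1)\Nsep\le 4\lambda/\delta$ is correct. Your treatment of the degenerate branch is also fine, though slightly overblown: that branch is a tail call that creates no bag, so it does not actually desynchronise $\ell$ from the level of $\Ranktree$; it merely replaces one would-be neighborhood-split by an extra size-split at that level, which (as you note) only helps both estimates.

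There is, however, a genuine problem with your plan for \cref{item:rank}. You are right that the paper passes silently from the decay of $|\Comp(\RBag)|$ to the $(\beta,\gamma)$-balance of $|\Ranktree(\RBag)|$, and you are right that the naive sandwich $|\Comp(\RBag)|/\Nsep\le|\Ranktree(\RBag)|\le|\Comp(\RBag)|$ leaks a factor $\Nsep$. But your proposed fix --- that ``the parameter choices $\lambda\ge 4/\eps$, $\delta\le\eps/18$ \dots\ are calibrated precisely so these losses are absorbed'' --- is incorrect. Those specific choices are made only later, in \cref{lem:conc_recurrence}, to make the recurrences for $\Time_k$ and $\Space_k$ geometric; \cref{them:tree_decomp} (and hence this lemma) is stated for \emph{arbitrary} $\delta>0$ and $\lambda\ge 2$, with $\beta=\Balfactor$ fixed by those parameters alone. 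No later calibration can retroactively tighten the constant in the balance inequality $|\Rankhtree(\RHBag_i)|\le\beta\,|\Rankhtree(\RHBag)|$. If you want to close the gap honestly, you must either (i) prove a sharper relation between $|\Ranktree(\RBag)|$ and $|\Comp(\RBag)|$ that is uniform in the position of $\RBag$ (exploiting that the $\ell$-cycle, and hence the average $|\Baghood|$ per level, is the same in every subtree), or (ii) accept that what is actually provable is $(\Nsep\cdot\beta,\gamma)$-balance and observe that this still suffices for every downstream use in \cref{them:product} and \cref{lem:conc_recurrence}, where only $\beta=O(1)<1$ matters and all bounds carry an $O(\cdot)$. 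The paper effectively takes option~(ii) implicitly; you should make whichever route you choose explicit rather than attributing the repair to parameter choices that serve a different purpose.
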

\begin{proof}
We prove each item separately.
\begin{compactenum}
\item For every bag $\RBag$ constructed by $\Rankalgo$,in at least $\gamma-1$ out of every $\gamma$ levels,
Item~2 of the algorithm applies, and by \cref{claim:merge}, the recursion proceeds on components $\ov{\Comp}_1$ and $\ov{\Comp}_2$
that are at most $\frac{1+\delta}{2}$ factor as large as the input component $\Comp$ in that recursion step.
Thus $\RBag$ is $(\beta, \gamma)$-balanced in $\Ranktree$, and hence $\RHBag$ is 
$(\beta, \gamma)$-balanced in $\Rankhtree$.
\item 
It suffices to show that for every bag $\RBag$, we have $|\Nhood(\RBag)|\leq \alpha-1 = 2\cdot (\Nsep)\cdot \lambda-1$.
Assume towards contradiction otherwise.
Let $\RBag$ be the first bag that $\Rankalgo$ assigned a rank such that $|\Nhood(\RBag)|\geq 2\cdot (\Nsep)\cdot \lambda$.
Let $\RBag'$ be the lowest ancestor of $\RBag$ in $\Ranktree$ that was constructed by $\Rankalgo$ on some input $(\Comp, \ell)$
with $\ell=1$, and let $\RBag''$ be the parent of $\RBag'$ in $\Ranktree$
(note that $\RBag'$ can be $\RBag$ itself).
By Item~3 of $\Rankalgo$, it follows that $|\Nhood(\RBag')|\leq\lfloor\frac{|\Nhood(\RBag'')|}{2}\rfloor+1$.
Note that $\RBag'$ is at most $\lambda-1$ levels above $\RBag$ (as we allow $\RBag'$ to be $\RBag$).
By \cref{prop:nhood_parent}, the neighborhood of a bag can increase by at most $(\Nsep)$ from the neighborhood of its parents,
hence $\Nhood(\RBag')\geq (\Nsep)\cdot (\lambda+1)$.
The last two inequalities lead to $|\Nhood(\RBag'')|\geq 2\cdot (\Nsep) \cdot  \lambda$, which contradicts our choice of $\Bag$.
\end{compactenum}
The desired result follows.
\end{proof}

\smallskip\noindent{\bf A minimal example.}
\cref{fig:TreeG2Rank2BalTreeG} illustrates an example of $\Rankhtree$ constructed out of a tree-decomposition $\Tree'(G)$.
First, $\Tree'(G)$ is turned into a  binary and balanced tree $\Ranktree$ and then into a binary and balanced tree $\Rankhtree$. 
If the numbers are pointers to bags, such that $\Tree'(G)$ is a tree-decomposition for $G$, then $\Rankhtree$ is a binary and balanced tree-decomposition of $G$.
The values of $\lambda$ and $\delta$ are immaterial for this example, as $\Rankhtree$ becomes 
perfectly balanced (i.e., $(1/2,1)$-balanced).
\begin{figure*}
\centering
\newcommand{\distone}{3.3cm*0.3}
\begin{tikzpicture}[thick,scale=0.7, node distance=\distone]
\tikzstyle{every state}=[very thick, fill=white,draw=black,text=black,font=\small , inner sep=-0.05cm, minimum size=8mm]

\node[state] (v1) {1};
\xdef\lv{1}
\foreach \v in {2,...,4} {
\node[state,above right of=v\lv] (v\v) {\v};
\draw (v\lv) to (v\v);
\xdef\lv{\v}
}

\foreach \v in {5,...,7} {
\node[state,below right of=v\lv] (v\v) {\v};
\draw (v\lv) to (v\v);
\xdef\lv{\v}
}

\node[above right of=v7] (arrow1) {\Large $\Rightarrow$};

\node[state,below right of=arrow1] (v1') {1};
\node[state,above right of=v1'] (v2') {2};
\node[state,above right of=v2',node distance=2cm] (v4') {4};
\node[state,below right of=v4',node distance=2cm] (v6') {6};
\node[state,below right of=v6'] (v7') {7};
\node[state,below left of=v6'] (v5') {5};
\node[state,below right of=v2'] (v3') {3};

\foreach \x/\y in {1/2,2/3,2/4,4/6,6/5,6/7} {
\draw (v\x ') to (v\y ');
}

\node[above right of=v7'] (arrow2) {\Large $\Rightarrow$};

\node[state,below right of=arrow2] (v1h) {1,2};
\node[state,above right of=v1h] (v2h) {2,4};
\node[state,above right of=v2h,node distance=2cm] (v4h) {4};
\node[state,below right of=v4h,node distance=2cm] (v6h) {4,6};
\node[state,below right of=v6h] (v7h) {7,6};
\node[state,below left of=v6h] (v5h) {4,5,6};
\node[state,below right of=v2h] (v3h) {2,3,4};

\foreach \x/\y in {1/2,2/3,2/4,4/6,6/5,6/7} {
\draw (v\x h) to (v\y h);
}
\renewcommand{\distone}{5.5cm*0.3}
\node[left of=v4]{\large $\Tree(G)$};
\node[left of=v4']{\large $\Ranktree$};
\node[left of=v4h]{\large $\widehat{\Ranktree}$};

\end{tikzpicture}
\caption{Given the tree-decomposition $\Tree(G)$ on the left, the graph in the middle is the corresponding $\Ranktree$ and the one on the right is the corresponding tree-decomposition $\widehat{\Ranktree}=\Rep(\Ranktree)$
after replacing each bag $\Bag$ with $\Nhoodnodes(\Bag)$.\label{fig:TreeG2Rank2BalTreeG}}
\end{figure*}  

\smallskip
\balancedtreedec*
\begin{proof}
By~\cite{Bodlaender96} an initial tree-decomposition $\Tree'(G)$ of $G$ with width $t$ and $b=O(n)$ bags can be constructed in $O(n)$ time.
\cref{lem:balance_correctness} and \cref{lem:rank_assertions} prove that the constructed $\Rankhtree$ is a
$(\alpha, \beta, \gamma)$ tree-decomposition of $G$.
The time and space complexity come from the construction of $\Ranktree$ by the recursion of $\Rankalgo$.
It can be easily seen that every level of the recursion processes disjoint components $\Comp_i$ of $\Tree'(G)$ in $O(|\Comp_i|)$
time, thus one level of the recursion requires $O(b)$ time in total. There are $O(\log b)$ such levels,
since every $\Levelfactor$ levels, the size of each component has been reduced to at most a factor $\Balfactor$.
Hence the time complexity is $O(b\cdot \log b)=)(n\cdot \log n)$.
The space complexity is that of processing a single level of the recursion, hence $O(b)=O(n)$.
\end{proof}

\section{Details of \cref{sec:product_tree_dec}}\label{sec:product_tree_proofs}

\smallskip
\begin{restatable}{lemma}{concurtree}\label{lem:product_tree_dec}
$\ProductTree(G)$ is a tree decomposition of $G$.
\end{restatable}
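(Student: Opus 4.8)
The plan is to verify the three defining conditions C1, C2, C3 directly from the recursive construction of $\ProductTree$. First I would set up notation: write $T=\ProductTree(G)$, and recall that each bag of $T$ is produced on some recursive input $(T_i(\Bag_i))_{1\leq i\leq k}$ via \cref{eq:bag}, and its children are produced on inputs $(T_i(\Bag_{i,r_i}))_i$ ranging over $(r_i)_i\in\Children^k$. The key structural observation, which I would record first, is the following: the tree $T$ is isomorphic (as a tree) to the ``product'' of the trees $T_i$ in the sense that its bags are in bijection with tuples $(\Bag_1,\dots,\Bag_k)$ where each $\Bag_i$ is a bag of $T_i$ at the \emph{same depth}, with $(\Bag_1,\dots,\Bag_k)$ a child of $(\Bag_1',\dots,\Bag_k')$ iff each $\Bag_i$ is a child of $\Bag_i'$ in $T_i$. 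This follows because every $T_i$ is a full binary tree so all leaves are at the same depth; hence the recursion terminates simultaneously in all coordinates. I would then restate \cref{eq:bag} in this language: the bag indexed by $(\Bag_1,\dots,\Bag_k)$ is $\bigcup_i\big(\prod_{j<i}V_{T_j}(\Bag_j)\times \Bag_i\times\prod_{j>i}V_{T_j}(\Bag_j)\big)$.

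For C1 (node cover): a node $u=\langle u_i\rangle_i$ of $G$ lies in the bag indexed by $(\Bag_1,\dots,\Bag_k)$ whenever there is some coordinate $i$ with $u_i\in\Bag_i$ and $u_j\in V_{T_j}(\Bag_j)$ for all $j\neq i$. Taking each $\Bag_i=\Bag_{u_i}$ to be the root bag of $u_i$ in $T_i$, and using \cref{rem:bag_size} / the definition of root bag, I would argue such a tuple occurs at a common depth; alternatively, simply take the root bags of $T$ which contain all of $V$. In fact it is cleanest to note that the root bag of $T$ (index = root of each $T_i$) equals $\prod_i V_{T_i}(\text{root}_i)=\prod_i V_i=V$, so C1 is immediate. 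For C2 (edge cover): take $(u,v)\in E\subseteq E_{\mathsf p}$, so there is $\mathcal I\subseteq[k]$ with $(u_i,v_i)\in E_i$ for $i\in\mathcal I$ and $u_i=v_i$ for $i\notin\mathcal I$. Since each $T_i$ is a tree decomposition of $G_i$, for each $i\in\mathcal I$ there is a bag $\Bag_i$ of $T_i$ with $u_i,v_i\in\Bag_i$; for $i\notin\mathcal I$ pick $\Bag_i$ the root bag of $u_i=v_i$. I would then move all the $\Bag_i$ down/up to a common depth along the path toward the root (using that an ancestor of a bag containing a node also — no, that's false in general; instead use that bags along a root-path only grow in $V_{T_i}(\cdot)$): the point is that for any bag $\Bag_i$ and any ancestor $\Bag_i'$, $\Bag_i\subseteq V_{T_i}(\Bag_i')$, and moreover if $u_i,v_i\in\Bag_i$ then they both lie in $\Bag_i'$ when $\Bag_i'$ is chosen as the lowest common ancestor bag still containing them — but by C3 for $T_i$, $u_i$ (resp. $v_i$) occupies a contiguous subtree, so if both are in $\Bag_i$ they are in every bag on the path between their root bags, and in particular we can slide to any common depth between $\max$ of their levels and the top. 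Choosing the shallowest such depth simultaneously over all $i$, I get a single bag of $T$ containing both $u$ and $v$ via \cref{eq:bag} applied at that depth with coordinate $i\in\mathcal I$ (and $u_j=v_j\in\Bag_j$ trivially for $j\notin\mathcal I$). I expect the bookkeeping here — exhibiting one common depth that works for all coordinates at once — to be the main obstacle, and I would handle it by always sliding each $\Bag_i$ all the way up to the root level, where the bag is all of $V$; this makes C2 trivial at the cost of not being sharp, which is fine for correctness.

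For C3 (contiguity): fix a node $u=\langle u_i\rangle_i$ of $G$. Using the product structure of $T$, a bag indexed by $(\Bag_1,\dots,\Bag_k)$ contains $u$ iff $\exists i:\, u_i\in\Bag_i \wedge \forall j\neq i:\, u_j\in V_{T_j}(\Bag_j)$. I would show the set of such indices forms a contiguous subtree of $T$. First, $\{\Bag_j : u_j\in V_{T_j}(\Bag_j)\}$ is exactly $T_j(\Bag_{u_j})$, the subtree of $T_j$ rooted at the root bag of $u_j$ — this is downward closed from $\Bag_{u_j}$ and hence a contiguous subtree containing the relevant part of any root-path. Second, $\{\Bag_i : u_i\in\Bag_i\}$ is a contiguous subtree of $T_i$ (C3 for $T_i$) and is moreover contained in $T_i(\Bag_{u_i})$ (definition of root bag). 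Combining: the index set for $u$ in $T$ is $\bigcup_{i=1}^k\big(T_i(\Bag_{u_i})$-indices in all coords $\neq i$, and ``$u_i\in\Bag_i$''-indices in coord $i\big)$. I would then observe that all of these share the common ``anchor'' tuple $(\Bag_{u_1},\dots,\Bag_{u_k})$ (the root bags), which is itself an index containing $u$, and that each of the $k$ pieces is a contiguous subtree of $T$ passing through this anchor; a union of contiguous subtrees all containing a common bag is again a contiguous subtree. Hence $u$ occupies a contiguous subtree of $T$, establishing C3. The delicate point is that the piece for coordinate $i$ — which allows $\Bag_i$ to vary over a contiguous subtree of $T_i$ while coordinates $j\neq i$ are pinned to $V_{T_j}(\Bag_j)$-membership — really does map to a contiguous subtree of $T$ under the product isomorphism; I would verify this by checking that if $(\Bag_1,\dots,\Bag_k)$ and $(\Bag_1'',\dots,\Bag_k'')$ are both in this piece then the whole path between them in $T$ (which in each coordinate is the path between $\Bag_j$ and $\Bag_j''$ in $T_j$) stays in the piece, using that the $V_{T_j}(\cdot)$-condition is downward-closed and the $u_i\in\Bag_i$ condition is contiguous in $T_i$. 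With C1, C2, C3 verified, $\ProductTree(G)$ is a tree decomposition of $G$, and the (linear-size, timing) claims are exactly \cref{them:product}, which we may invoke.
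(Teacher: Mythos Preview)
Your proposal has a genuine gap that breaks all three parts.

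For C1 (and hence your C2), you assert that the root bag of $T$ equals $\prod_i V_{T_i}(\text{root}_i)=\prod_i V_i=V$. But \cref{eq:bag} does not give a single product; at the root it gives $\bigcup_i\big(\prod_{j<i}V_j\times(\text{root bag of }T_i)\times\prod_{j>i}V_j\big)$, so a node $u=\langle u_i\rangle$ lies in the root bag of $T$ only when \emph{some} $u_i$ belongs to the root bag of $T_i$. This can fail for every $i$ simultaneously, so the root bag need not be all of $V$, and your shortcut for C1 and C2 collapses. The paper instead picks the coordinate $j=\arg\min_i\Level(u_i)$ and uses the bag of $T$ at depth $\Level(u_j)$ whose $j$-th coordinate is $B_{u_j}$; since every other $u_i$ has $\Level(u_i)\ge\Level(u_j)$, the ancestor of $B_{u_i}$ at that depth puts $u_i$ into $V_{T_i}(\cdot)$, and $u_j\in B_{u_j}$ supplies the existential coordinate in \cref{eq:bag}.

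For C3 you make two further errors. First, you claim $\{B_j:u_j\in V_{T_j}(B_j)\}=T_j(B_{u_j})$, ``downward closed from $B_{u_j}$''. In fact this set also contains every ancestor of $B_{u_j}$ (since $V_{T_j}$ of an ancestor contains $u_j$), so it is the root-to-$B_{u_j}$ path together with the bags containing $u_j$; it is contiguous, but not what you wrote. Second, your ``anchor'' tuple $(B_{u_1},\dots,B_{u_k})$ is generally not a bag of $T$ at all: bags of $T$ are indexed by tuples of bags at the \emph{same} depth, and the levels $\Level(u_i)$ need not agree. So the union-through-a-common-anchor argument never gets off the ground. The paper's C3 again hinges on the coordinate $m=\arg\min_i\Level(u_i)$: for any two bags $B,B'$ containing $u$ one shows $u_m\in B_m\cap B'_m$, and any intermediate $B''$ on the $T$-path has $B''_m$ lying between $B_{u_m}$ and one of $B_m,B'_m$ (because the depth of $B''$ is at least $\Level(u_m)$), whence $u_m\in B''_m$ and $u\in B''$. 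Your approach can be repaired along these lines, but as written it does not go through.
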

\begin{proof}
We show that $T$ satisfies the three conditions of a tree decomposition.
\begin{compactenum}
\item[C1] For each node $u=\Prod{u_i}_{1\leq i\leq k}$, let $j=\arg\min_i\Level(u_i)$. 
Then $u\in \Bag$, where $\Bag$ is the bag constructed by step~1 of $\ProductTree$
when it operates on the tree-decompositions $(T_i(\Bag_{i}))_{1\leq i\leq k}$ with $\Bag_{j}=\Bag_{u_j}$,
the root bag of $u_j$ in $T_j$.
\item[C2] Similarly, for each edge $(u,v)\in E$ with $u=\Prod{u_i}_{1\leq i\leq k}$ and $v=\Prod{v_i}_{1\leq i\leq k}$,
let $j=\arg\min_i(\max(\Level(u_i), \Level(v_i)))$. Then $(u,v)\in \Bag$, where $\Bag$ is a bag similar to C1. 
\item[C3] For any node $u=\Prod{u_i}_{1\leq i\leq k}$ and path $P:\Bag\Path \Bag'$ with $u\in \Bag\cap \Bag'$,
let $\Bag''$ be any bag of $P$.
Since at least one of $\Bag$, $\Bag'$ is a descendant of $\Bag''$, we have 
$V_T(\Bag)\subseteq V_T(\Bag'')$ or $V_T(\Bag')\subseteq V_T(\Bag'')$,
and because $u\in \Bag\cap \Bag'$, if $\Bag''$ was constructed on input $(T_i(\Bag''_i))_{1\leq i\leq k}$,
we have $u_i\in V_{T_i}(\Bag''_i)$.
Let $(T_i(\Bag_i))_{1\leq i\leq k}$ and $(T_i(\Bag'_i))_{1\leq i\leq k}$ be the inputs to the algorithm
when $\Bag$ and $\Bag'$ were constructed, and it follows that for some $1\leq j\leq k$
we have $u_j\in \Bag_j\cap \Bag'_j$.
Then $\Bag''_j$ is an intermediate bag in the path $P_j:\Bag_j\Path \Bag'_j$ in $T_j$,
thus $u_j\in \Bag''_j$ and hence $u\in \Bag''$.
\end{compactenum}
The  desired result follows.
\end{proof}


\smallskip
\begin{restatable}{lemma}{recconc}\label{lemm:rec-conc}
Consider the following recurrence.
\begin{equation}\label{eq:recP}
\Time(n_1,\dots, n_k) \leq \sum_{1\leq i\leq k} \prod_{j\neq i}n_j  + \sum_{\mathclap{(r_i)_i\in \Children^k}} \Time( n_{1,{r_1}},\dots,  n_{k,{r_k}})
\end{equation}
such that  for every $i$ we have that 
$
\sum_{(r_i)_i\in \Children^k} n_{i,r_i} \leq n_i. 
$
Then we have 
\begin{equation}\label{eq:solution1}
\Time(n_1,\dots, n_k)\leq   \prod_{1\leq i\leq k}n_i -  \sum_{1\leq i\leq k}\prod_{j\neq i}n_j.
\end{equation}
\end{restatable}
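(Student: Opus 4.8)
The plan is to prove the bound
\[
\Time(n_1,\dots,n_k)\leq \prod_{1\leq i\leq k}n_i - \sum_{1\leq i\leq k}\prod_{j\neq i}n_j
\]
by induction on $\sum_i n_i$ (equivalently, on the depth of the recursion tree). First I would observe that the claimed bound is meaningful only when $\prod_i n_i \geq \sum_i \prod_{j\neq i} n_j$; for the base case, when the recursion bottoms out (every $n_i$ is at most the constant width bound and the sum $\sum_i \prod_{j\neq i} n_j$ of ``work'' terms covers the trivial contribution), the inequality holds since the right-hand side is nonnegative and $\Time$ on a constant-size instance is $O(1)$, absorbed into the leading term. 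I would set up the induction so that whenever some $n_i$ drops to a constant the corresponding factor is bounded and the recursion on that coordinate stops.

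For the inductive step, I would substitute the inductive hypothesis into the recurrence. Write $f(n_1,\dots,n_k) = \prod_i n_i - \sum_i \prod_{j\neq i} n_j$ for the target function. Applying the hypothesis to each recursive call gives
\[
\Time(n_1,\dots,n_k)\leq \sum_{i} \prod_{j\neq i} n_j \;+\; \sum_{(r_i)_i\in\Children^k} f\big(n_{1,r_1},\dots,n_{k,r_k}\big).
\]
Now I would expand $\sum_{(r_i)_i} f(n_{1,r_1},\dots,n_{k,r_k})$ into its two pieces. The first piece is $\sum_{(r_i)_i} \prod_i n_{i,r_i} = \prod_i\big(\sum_{r_i\in\Children} n_{i,r_i}\big)$ by distributing the product over the sum of children; by the hypothesis $\sum_{r_i} n_{i,r_i}\leq n_i$ this is at most $\prod_i n_i$. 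The second piece is $-\sum_{(r_i)_i}\sum_\ell \prod_{j\neq\ell} n_{j,r_j}$; for a fixed $\ell$, summing $\prod_{j\neq\ell}n_{j,r_j}$ over all choices $(r_i)_i$ factors as $\big(\sum_{r_\ell} 1\big)\cdot\prod_{j\neq\ell}\big(\sum_{r_j} n_{j,r_j}\big)$ — but here the key point is that $\sum_{r_\ell\in\Children}1 = 2$ (or $|\Children|$), so this term is $2\prod_{j\neq\ell}\big(\sum_{r_j}n_{j,r_j}\big)\geq 2\cdot$(something)$\,\geq \prod_{j\neq\ell}n_j$ provided each coordinate splits into at least two nontrivial parts; I'd need to argue $\sum_{r_j}n_{j,r_j}\geq n_j/2$ or handle the degenerate splits. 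Combining, $\Time \leq \sum_i\prod_{j\neq i}n_j + \prod_i n_i - \sum_i\prod_{j\neq i}n_j = \prod_i n_i$, and then the subtracted $\sum_i\prod_{j\neq i}n_j$ is recovered because the ``$+\sum_i\prod_{j\neq i}n_j$'' overhead term is exactly cancelled by the surplus in the negative part of the expanded $f$.

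The main obstacle I anticipate is the bookkeeping in the second piece: one must show the negative contributions from the recursive calls strictly dominate, i.e. $-\sum_{(r_i)_i}\sum_\ell\prod_{j\neq\ell}n_{j,r_j} + \sum_\ell\prod_{j\neq\ell}n_j \leq -\sum_\ell\prod_{j\neq\ell}n_j$, which requires the inequality $\sum_{(r_i)_i}\prod_{j\neq\ell}n_{j,r_j}\geq 2\prod_{j\neq\ell}n_j$ for each $\ell$ — and this can fail if a child $n_{j,r_j}$ is empty, so I would need to either (a) pad the tree-decomposition so every internal bag has exactly $2^k$ nonempty children (consistent with the full-binary-tree convention already adopted in Section~\ref{sec:prelim_tree_dec}), or (b) strengthen the induction to track the number of leaves and carry a term linear in it. Option (a) aligns with the paper's standing assumption that $\Tree(G)$ is a full binary tree, so I would take that route, noting the recursion in \cref{eq:bag} produces $\Children^k = 2^k$ children at each internal node and using the balance hypothesis $\sum_{r_i}n_{i,r_i}\leq n_i$ together with the constant-width base case to close the argument cleanly.
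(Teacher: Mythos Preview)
Your approach is essentially the same as the paper's: substitute the inductive hypothesis into the recurrence, split the resulting sum into the product part $X=\sum_{(r_i)_i}\prod_i n_{i,r_i}$ and the negative part $Y=\sum_{(r_i)_i}\sum_\ell\prod_{j\neq\ell}n_{j,r_j}$, bound $X\leq\prod_i n_i$ by factoring, and bound $Y\geq 2\sum_\ell\prod_{j\neq\ell}n_j$ using the extra factor of $|\Children|=2$ from the dummy index $r_\ell$. The obstacle you flag (that $Y\geq 2\sum_\ell\prod_{j\neq\ell}n_j$ needs $\sum_{r_j}n_{j,r_j}\geq n_j$, while the hypothesis only gives $\leq$) is real and is present in the paper's proof as well; the paper sidesteps it in the final line by relaxing to $\Time\leq c_1\prod_i n_i - c_2\sum_i\prod_{j\neq i}n_j$ ``for appropriate choices of the constants $c_1,c_2$'' and concluding only the $O(n^k)$ bound, which is all that is used downstream.
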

\begin{proof}
Indeed, substituting \cref{eq:solution1} to the recurrence \cref{eq:recP} we have
\begin{align*}
\Time(n_1,\dots, n_k) \leq & \sum_{1\leq i\leq k} \prod_{j\neq i}n_j  + \\
& \sum_{(r_i)_i\in \Children^k} \left( \prod\limits_{1\leq i\leq k} n_{i,{r_i}} - \sum_{1\leq i\leq k}\prod_{j\neq i}n_{j,{r_j}}  \right)\\
 = & \sum_{1\leq i\leq k} \prod_{j\neq i}n_j + X - Y \numberthis \label{eq:recp2}
\end{align*}
where 
\[
X = \sum_{(r_i)_i\in \Children^k} \left( \prod\limits_{1\leq i\leq k}n_{i,{r_i}} \right) 
\]
and
\[
Y = \sum_{(r_i)_i\in \Children^k} \left( \sum_{1\leq i\leq k}\prod_{j\neq i}n_{j,{r_j}} \right)
\]

We compute $X$ and $Y$ respectively.
\begin{align*}
X =& \sum_{(r_i)_i\in \Children^k} \left( \prod\limits_{1\leq i\leq k} n_{i,{r_i}} \right)\\
=&\sum_{r_1\in \Children}n_{1,{r_1}}\cdot \left( \sum_{r_2\in \Children} n_{2,{r_2}}\cdot \left(  \dots \sum_{r_k\in \Children} n_{k,{r_k}} \right) \right)\\
\leq  & \prod\limits_{1\leq i\leq k}n_i\numberthis \label{eq:X}
\end{align*}
by factoring out every $n_{i,r_i}$ of the sum. Similarly,
\begin{align*}
Y =& \sum_{(r_i)_i\in \Children^k} \left( \sum_{1\leq i\leq k}\prod_{j\neq i}n_{j,{r_j}} \right)\\
=& \sum_{1\leq i\leq k}   \left(\sum_{(r_i)_i\in \Children^k} \prod_{j\neq i}n_{j,{r_j}} \right)\\
=& 2 \cdot \sum_{1\leq i\leq k} \left( \sum_{r_1\in \Children} n_{1,{r_1}}\cdot \dots \right. 
 \left( \sum_{r_{i-1}\in \Children} n_{{i-1},{r_{i-1}}} \cdot \right.\\
&  \left( \sum_{r_{i+1}\in \Children} n_{{i+1},{r_{i+1}}} \cdot \right.
 \dots \left.\left.\left. \left(  \sum_{r_k\in \Children} n_{k,{r_k}} \right) \right) \right)\right) \\
\geq & 2\cdot \sum_{1\leq i\leq k}\prod_{j\neq i} n_j \numberthis \label{eq:Y}
\end{align*}
as the inner sum in the second line is independent of $i$, and $r_i\in \Children$.

Substituting inequalities~\cref{eq:X} and \cref{eq:Y} to \cref{eq:recp2} we obtain
\begin{align*}
\Time(n_1,\dots,n_k)\leq &   \sum_{1\leq i\leq k} \prod_{j\neq i}n_j+X-Y\\
 \leq & c_1\cdot  \prod_{1\leq i\leq k}n_i - c_2 \cdot \sum_{1\leq i\leq k}\prod_{j\neq i}n_j
\end{align*}
for appropriate choices of the constants $c_1$, $c_2$, and thus $\Time(n_1,\dots,n_k)= O(n_1\cdot \ldots \cdot n_k)=O(n^k)$, as desired.
\end{proof}

\smallskip
\begin{lemma}\label{lem:product_tree_compl}
$\ProductTree$ requires $O(n^k)$ time and space.
\end{lemma}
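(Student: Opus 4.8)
The plan is to establish the time and space bound by analyzing the recursive structure of $\ProductTree$ directly, using the recurrence already set up in \cref{eq:recP} together with \cref{lemm:rec-conc}, and by separately bounding the work done at each recursive node by \cref{rem:bag_size}.

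First I would recall the structure of $\ProductTree$: on input $(T_i(\Bag_i))_{1\leq i\leq k}$ it builds a single root bag $\Bag$ according to \cref{eq:bag}, and then, when no $\Bag_i$ is a leaf, recurses on all $2^k$ choices $(r_1,\dots,r_k)\in\Children^k$. The two contributions to be accounted for are (i)~the cost of constructing $\Bag$ itself, and (ii)~the total cost of the recursive calls. For (i), \cref{rem:bag_size} gives $|\Bag| = O\bigl(\sum_i \prod_{j\neq i} n_j\bigr)$ where $n_i = |T_i(\Bag_i)|$, so building $\Bag$ — and storing it — takes time and space proportional to this quantity (up to constant factors, since each tree-decomposition $T_i$ has constant width). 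For (ii), each child call operates on $(T_i(\Bag_{i,r_i}))_i$, and because the subtrees $T_i(\Bag_{i,1})$ and $T_i(\Bag_{i,2})$ of a node $\Bag_i$ partition $T_i(\Bag_i)\setminus\{\Bag_i\}$, we have $\sum_{r_i\in\Children} n_{i,r_i} \leq n_i$ for every $i$. This is exactly the side condition required by \cref{lemm:rec-conc}. Putting these together yields precisely the recurrence \cref{eq:recP}, and so by \cref{lemm:rec-conc} the total running time is $O(n_1\cdots n_k) = O(n^k)$ when $n_i\leq n$ for all $i$.

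For the space bound I would argue that the working space of $\ProductTree$ at any moment is dominated by the size of the output tree-decomposition $T$, since the algorithm does no auxiliary bookkeeping beyond constructing and emitting bags. The total size of $T$ — the sum of the sizes of all bags constructed over all recursive calls — satisfies the same recurrence \cref{eq:recP} (the root bag contributes $O(\sum_i\prod_{j\neq i}n_j)$ nodes, and the recursive calls contribute the sizes of their subtrees), hence is also $O(n^k)$ by \cref{lemm:rec-conc}. One subtlety worth a sentence is that $|T_i|$ itself is $O(n)$ by \cref{them:tree_decomp} (the input tree-decompositions have linearly many bags), so that $n_i\leq n$ at the top-level call is legitimate; and the recursion has constant arity $2^k$ at each branching node, so no extra polylogarithmic factors creep in.

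The main obstacle — though it has already been discharged in \cref{lemm:rec-conc} — is the non-obvious fact that the recurrence \cref{eq:recP} solves to $O(n^k)$ rather than something like $O(n^k\log n)$: a naive bound would charge $O(n^{k-1})$ work at each of up to $n$ levels. The resolution is the telescoping identity in the proof of \cref{lemm:rec-conc}, which shows that the guessed closed form $\prod_i n_i - \sum_i\prod_{j\neq i}n_j$ is a valid upper bound because the ``correction'' term $Y = 2\sum_i\prod_{j\neq i}n_j$ exactly cancels the per-node overhead $\sum_i\prod_{j\neq i}n_j$ across the $2^k$ children. So in the write-up of \cref{lem:product_tree_compl} I would simply: (1)~invoke \cref{rem:bag_size} for the per-node cost, (2)~observe that the subtree-size constraint on the $n_{i,r_i}$ holds, (3)~conclude that both time and space obey \cref{eq:recP}, and (4)~apply \cref{lemm:rec-conc} to get $O(n^k)$.
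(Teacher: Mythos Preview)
Your proposal is correct and follows essentially the same approach as the paper: invoke \cref{rem:bag_size} for the per-bag cost, observe that the child subtree sizes satisfy the constraint $\sum_{r_i} n_{i,r_i}\leq n_i$, identify the resulting recurrence as \cref{eq:recP}, and apply \cref{lemm:rec-conc} to obtain $O(n^k)$. The paper's write-up is terser (it simply notes that the algorithm does constant work per output node per bag and then defers to \cref{rem:bag_size} and \cref{lemm:rec-conc}), but the logical skeleton is the same.
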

\begin{proof}
It is easy to verify that $\ProductTree(G)$ performs a constant number of operations per node per bag in the
returned tree decomposition. 
Hence we will bound the time taken by bounding the size of $\ProductTree(G)$.
Consider a recursion step of $\ProductTree$ on input $(T_i(\Bag_i))_{1\leq i\leq k}$.
Let $n_i=|T_i(\Bag_i)|$ for all $1\leq i\leq k$, and $n_{i,{r_i}}=|T_i(\Bag_{i,{r_i}})|$, $r_i\in \Children$,
where $\Bag_{i,{r_i}}$ is the $r_i$-th child of $\Bag_i$.
In view of \cref{rem:bag_size}, the time required by $\ProductTree$ on this input is given by the recurrence
in \cref{eq:recP}, up to a constant factor.
The desired result follows from \cref{lemm:rec-conc}.
\end{proof}

\smallskip
\producttree*
\begin{proof}
\cref{lem:product_tree_dec} proves the correctness and \cref{lem:product_tree_compl} the complexity.
Here we focus on bounding the size of a bag $\Bag$ with $\Level(\Bag)\leq i\cdot \gamma$.
Let $(T_i(\Bag_i))_{1\leq i\leq k}$ be the input on $\ProductTree$ when it constructed $\Bag$ using \cref{eq:bag}
and $n_i=|T_i(\Bag_i)|$. 
Observe that $\Level(\Bag)=\Level(\Bag_i)$ for all $i$, and since each $T_i$ is $(\beta, \gamma)$-balanced,
we have that $n_i\leq O(n\cdot \beta^i)$.
Since each $T_i$ is $\alpha$-approximate, $|\Bag_i|=O(1)$ for all $i$.
It follows from \cref{eq:bag} and \cref{rem:bag_size} that $|\Bag|=O(n^{k-1}\cdot \beta^{i})$.
\end{proof}

\section{Details of \cref{sec:concurrent}}\label{sec:concurrent_proofs}

\smallskip
\uvdist*
\begin{proof}
By~\cite[Lemma~1]{CIPG15}, for every bag $\Bag_i$ with $i>1$ and path $P:u\Path v$, 
there exists a node $x_i\in \Bag_{i-1}\cap \Bag_{i}$.
Denote by $P_{x,y}$ a path $x\Path y$ in $G$.
Then
\begin{align*}
\Distance(u,v) =& \bigoplus_{P_{u,v}} \otimes(P_{u,v}) \\
=& \bigoplus_{x_i\in \Bag_{i-1}\cap \Bag_i} \left(\bigoplus_{P_{u,x_i}} \otimes(P_{u,x_i}) \otimes  \bigoplus_{P_{x_i,v}} \otimes(P_{x_i,v}) \right)\\
= & \bigoplus_{x_i\in \Bag_{i-1}\cap \Bag_i} \left(\Distance(u,x_i) \otimes  \Distance(x_i,v) \right)
\end{align*}
and the proof follows an easy induction on $i$.
\end{proof}

\smallskip
\pnodedistances*
\begin{proof}
By construction, for every node $v\in V$ that strictly refines $\PNode{v}$ (i.e., $v\Strictrefines \PNode{v}$), we have
$\PNode{\Weight}(\PNode{v}^1, v)=\Distance(\PNode{v}^1,v)=\One$ and
$\PNode{\Weight}(v, \PNode{v}^2)=\Distance(v, \PNode{v}^2)=\One$,
i.e., every such $v$ can reach (resp. be reached from) $\PNode{v}^2$ (resp. $\PNode{v}^1$)
without changing the distance from $\PNode{u}$.
The claim follows easily.
\end{proof}

\smallskip
\partialtreedecomp*
\begin{proof}
By \cref{them:product}, $\ProductTree(G)$ is a tree decomposition of $G$.
To show that $\PNode{T}$ is a tree decomposition of the partial expansion $\PNode{G}$,
it suffices to show that the conditions C1-C3 are met for every pair of nodes $\PNode{u}^1$, $\PNode{u}^2$
that correspond to a strict partial node $\PNode{u}$ of $\PNode{G}$.
We only focus on $\PNode{u}^1$, as the other case is similar.
\begin{compactenum}
\item[C1] This condition is met, as $\PNode{u}^1$ appears in every bag of $\PNode{T}$
that contains a node $u$ that refines $\PNode{u}^1$.
\item[C2] Since every node $\PNode{u}^1$ is connected only to nodes $u$ of $G$ that refine $\PNode{u}$,
this condition is also met.
\item[C3] First, observe that $\PNode{u}^1$ appears in the root bag $\PNode{\Bag}$ of $\PNode{T}$.
Then, for every simple path $P:\PNode{\Bag}\Path \PNode{\Bag}'$ from the root to some leaf bag $\PNode{\Bag}'$,
if $\PNode{\Bag}''$ is the first bag in $P$ where $\PNode{u}^1$ does not appear,
then some non-$\bot$ constituent of $u$ does not appear in bags of $\PNode{T}_{\PNode{\Bag}''}$,
hence neither does $\PNode{u}^1$. Thus, $\PNode{u}^1$ appears in a contiguous subtree of $\PNode{T}$.
\end{compactenum}
The desired result follows.
\end{proof}

\smallskip
\concrecurrence*
\begin{proof}
We analyze each recurrence separately. First we consider \cref{eq:rec1}.
Note that
\[
\left(n \cdot \left(\frac{1+\delta}{2}\right)^{\lambda-1}\right)^{3\cdot (k-1)} = \left(\frac{1+\delta}{2}\right)^{3\cdot(\lambda-1)\cdot (k-1)} \cdot n^{3\cdot (k-1)} \numberthis \label{eq:time_geom}
\]
and
\[
2^{\lambda \cdot k} \cdot \left(\frac{1+\delta}{2}\right)^{3\cdot (\lambda-1)\cdot (k-1)} = 
\frac{(1+\delta)^{3\cdot(\lambda-1)\cdot(k-1)}} {2^{2\cdot k\cdot \lambda +3\cdot (k+\lambda-1)}}
\numberthis \label{eq:time_leq1}
\]
and since $\log(1+\delta)=\frac{\ln (1+\delta)}{\ln 2} < \frac{\delta}{\ln 2}<2\cdot \delta$, we have 
\[
(1+\delta)^{3\cdot(\lambda-1)\cdot(k-1)} = 2^{\log(1+\delta) \cdot 3\cdot(\lambda-1)\cdot(k-1)} <  2^{6\cdot \delta\cdot(\lambda-1)\cdot(k-1)}
\]
Hence the expression in \cref{eq:time_leq1} is bounded by $2^x$ with
\begin{align*}
x &\leq 6\cdot \delta\cdot(\lambda-1)\cdot(k-1) - 2\cdot k\cdot \lambda +3\cdot (\lambda + k - 1)\\
&= -2\cdot \lambda\cdot k\cdot (1-3\cdot \delta ) + 3\cdot (\lambda + k -1)\cdot (1-2\cdot \delta)
\end{align*}
Let 
$
f(k) = -2\cdot \lambda\cdot k\cdot (1-3\cdot \delta ) + 3\cdot (\lambda + k -1)\cdot (1-2\cdot \delta)
$
and note that since $\lambda\geq \frac{4}{\eps} \geq 4$ and $\delta\leq \frac{\eps}{18}\leq \frac{1}{18}$, 
$f(k)$ is decreasing, and thus maximized for $k=2$, for which we obtain
$f(2)= -4\cdot \lambda\cdot (1-3\cdot \delta) + 3\cdot \lambda \cdot (1-2\cdot \delta) = -\lambda\cdot (1-6\cdot \delta) <0$
as $\delta \leq \frac{1}{18}$.
It follows that there exists a constant $c<1$ for which 
\[
2^{\lambda \cdot k}\cdot \Time_k\left(n\cdot \left(\frac{1+\delta}{2}\right)^{\lambda-1}\right) \leq c \cdot n^{3\cdot (k-1)}
\]
which yields that \cref{eq:rec1} follows a geometric series, and thus $\Time_k(n)=O(n^{3\cdot (k-1)})$.

We now turn our attention to \cref{eq:rec2}. 
When $k\geq 3$, an analysis similar to \cref{eq:rec1} yields the bound $O(n^{2\cdot (k-1)})$.
When $k=2$, since $\eps>0$, we write \cref{eq:rec2} as
\begin{align*}
\Space_2(n) &\leq n^{2+\eps} + 2^{2\cdot \lambda} \cdot \Space_2\left(n\cdot \left(\frac{1+\delta}{2}\right)^{\lambda-1}\right) \numberthis \label{eq:space2}
\end{align*}
Similarly as above, we have 
\[
\left(n \cdot \left(\frac{1+\delta}{2}\right)^{\lambda-1}\right)^{2+\eps} = \left(\frac{1+\delta}{2}\right)^{(2+\eps)\cdot(\lambda-1)} \cdot  n^{2+\eps} \numberthis \label{eq:space_geom}
\]
and
\[
2^{2\cdot \lambda} \cdot \left(\frac{1+\delta}{2}\right)^{(2+\eps)\cdot (\lambda-1)} = 
\frac{(1+\delta)^{(2+\eps)\cdot(\lambda-1)}} {2^{-2 + \eps\cdot(\lambda-1)}}
\numberthis \label{eq:space_leq1}
\]
and since $\log(1+\delta)=\frac{\ln (1+\delta)}{\ln 2} < \frac{\delta}{\ln 2}<2\cdot \delta$, we have 
\[
(1+\delta)^{(2+\eps)\cdot(\lambda-1)}< 2^{2\cdot \delta\cdot (2+\eps)\cdot(\lambda-1)}
\]

Hence the expression in \cref{eq:space_leq1} is bounded by $2^x$ with
\begin{align*}
x &\leq 2\cdot \delta\cdot (2+\eps)\cdot(\lambda-1) +2 - \eps\cdot(\lambda-1)\\
&= (\lambda-1)\cdot (2\cdot \delta\cdot (2+\eps) - \eps) + 2\\
&\leq (\lambda-1)\cdot \frac{4\cdot \eps + 2\cdot \eps^2 - 18\cdot \eps}{18} + 2\\
&\leq (1-\lambda)\cdot \eps\cdot \frac{2}{3} + 2\\
&\leq -(4-\eps)\cdot \frac{2}{3} + 2\leq 0
\end{align*}
since $\delta\leq \frac{\eps}{18}$ and $\lambda\geq \frac{4}{\eps}$ and $\eps\leq 1$.
It follows that there exists a constant $c<1$ for which 
\[
2^{2\cdot \lambda }\cdot \Space_2(n) \leq c \cdot n^{2+\eps}
\]
which yields that \cref{eq:space2} follows a geometric series, and thus $\Space_2(n)=O(n^{2+\eps})$.

\end{proof}
\section{Details of \cref{sec:arbitrary_product}}\label{sec:arbitrary_product_proofs}

\smallskip
\twclaim*
\begin{proof}
Observe that if we (i)~ignore the direction of the edges and (ii)~remove multiple appearances of the same edge, we obtain a tree.
It is known that trees have treewidth~1.
\end{proof}

\smallskip
\pathlemma*
\begin{proof}
Recall that only red edges contribute to the weights of paths in $G'$.
We argue that there is path $\PNode{P}:\Prod{x_i, x_i}\Path \Prod{x_j, x_j}$ in $G'$ that traverses a single red edge
iff there is an edge $(x_i,x_j)$ in $G$ with $\otimes(\PNode{P})=\Weight(x_i, x_j)$.
\begin{compactenum}
\item  Given the edge $(x_i, x_j)$, the path $\PNode{P}$ is formed by traversing
the red edge $(\Prod{x_i, y_j},\Prod{x_i, x_j})$ as
\begin{align*}
& \Prod{x_i, x_i} \to \Prod{x_i, y_i} \Path  \Prod{x_i, y_j} \to \Prod{x_i, x_j} \to\\
& \Prod{y_i, x_j} \Path \Prod{y_i, x_j} \to \Prod{x_j, y_j}
\end{align*}
Since $\Weight((\Prod{x_i, y_j},\Prod{x_i, x_j}))=\Weight(x_i, x_j)$ and all other edges of $\PNode{P}$
have weight $\One$, we have that $\otimes(\PNode{P})=\Weight(x_i, x_j)$.
\item Every path $\PNode{P}$ that traverses a red edge $\Prod{x_{i'}, y_{j'}}\to \Prod{x_{i'}, x_{j'}}$ has to traverse a blue edge 
to $\Prod{x_{j'}, x_{j'}}$. Then $x_{j'}$ must be $x_j$, otherwise $\PNode{P}$ 
will traverse a second red edge before reaching $\Prod{x_j, x_j}$.
\end{compactenum}
The result follows easily from the above.
\end{proof}

\section{Formal Pseudocode of Our Algorithms}\label{sec:algorithms}

The current section presents formally (in pseudocode) the algorithms that appear in the main
part of the paper.

\setcounter{algocf}{0}

\begin{algorithm}
\small
\SetAlgoNoLine
\DontPrintSemicolon
\caption{$\Rankalgo$}\label{algo:rank}
\KwIn{A component $\Comp$ of $T$, a natural number $\ell\in [\lambda]$}
\KwOut{A rank tree $\Ranktree$}
\BlankLine
Assign $\mathcal{T}\gets$ an empty tree\\
\uIf{$|\Comp|\cdot \frac{\delta}{2}\leq 1$}{
Assign $\RBag\gets \bigcup_{\Bag\in\Comp}\Bag$ and make $\RBag$ the root of $\mathcal{T}$
}
\uElseIf{$\ell>0$}
{\label{line:flag_true}
Assign $(\Seplist, \Complist)\gets \Splitalgo(\Comp)$\\
Assign $\RBag\gets \bigcup_{\Bag_i\in \Seplist}\Bag_i$\\
Assign $(\ov{\Comp}_1, \ov{\Comp}_2)\gets \Mergealgo(\mathcal{\Complist})$\\
Assign $\mathcal{T}_1 \gets \Rankalgo(\ov{\mathcal{\Comp}}_1, (\ell+1)\mod \Levelfactor)$\\
Assign $\mathcal{T}_2 \gets \Rankalgo(\ov{\mathcal{\Comp}}_2, (\ell+1)\mod \Levelfactor)$\\
Make $\RBag$ the root of $\mathcal{T}$ and $\mathcal{T}_1$ and $\mathcal{T}_2$ its left and right subtree\\
}
\Else{
\label{line:flag_false}
\eIf{$|\Nhood(\Comp)| > 1$}{
Let $\Bag\gets$ a bag of $\Comp$ whose removal splits $\Comp$ to $\ov{\Comp}_1$, $\ov{\Comp}_2$
with $|\Nhood(\ov{\Comp}_i)\cap \Nhood(\Comp)|\leq \frac{|\Nhood(\Comp)|}{2}$\\
Assign $\RBag\gets \Bag$\\
Assign $\mathcal{T}_1 \gets \Rankalgo(\ov{\Comp}_1, (\ell+1)\mod \Levelfactor)$\\
Assign $\mathcal{T}_2 \gets \Rankalgo(\ov{\Comp}_2, (\ell+1)\mod \Levelfactor)$\\
Make $\RBag$ the root of $\mathcal{T}$ and $\mathcal{T}_1$ and $\mathcal{T}_2$ its left and right subtree\\
}{
Assign $\mathcal{T}\gets \Rankalgo(\Comp, (\ell-1)\mod \Levelfactor)$\\
}
}
\Return $\mathcal{T}$
\end{algorithm}

\begin{algorithm}
\small
\SetAlgoNoLine
\DontPrintSemicolon
\caption{$\mathsf{ConcurTree}$}\label{algo:ConcurTree}
\KwIn{Tree-decompositions $T_i=(V_{T_i}, E_{T_i})_{1\leq i\leq k}$ with root bags $(\Bag_i)_{1\leq i\leq k}$.} 
\KwOut{A tree decomposition $T$ of the concurrent graph}
\BlankLine
Assign $\Bag \leftarrow \emptyset$\\
Assign $T \leftarrow$ a tree with the single bag $\Bag$ as its root\\
\For{$i \in [k]$}
{
Assign $\Bag \leftarrow \Bag \cup \left( \prod_{1 \leq j < i}  V_{T_j}(\Bag_j) \times \Bag_i \times \prod_{i < j \leq k} V_{T_j}(B_j)  \right)$
}
\If{ none of the $\Bag_i$'s is a leaf in its respective $T_i$}
{
    \For{ every sequence of bags $\Bag'_1, \ldots, \Bag'_{k}$ such that each $\Bag'_i$ is a child of $\Bag_i$ in $T_i$}
    {
        Assign $T'_i \leftarrow \mathsf{ConcurTree}(T_1(\Bag'_1), \ldots, T_k(\Bag'_k))$\\
        Add $T'_i$ to $T_i$, setting the root of $T'_i$ as a new child of $B$
    }
}
\Return T
\end{algorithm}

\newcommand{\wtbar}{\PNode{\mathsf{wt}}}
\newcommand{\wtcurly}{\boldsymbol{wt}}
\begin{algorithm}
\small
\SetAlgoNoLine
\DontPrintSemicolon
\caption{$\mathsf{ConcurPreprocess}$ \cref{item:partial_exp}}\label{algo:ConcurPreprocess1}
\KwIn{Graphs $(G_i=V_i, E_i)_{1\leq i\leq k}$, a concurrent graph $G(V,E)$ of $G_i$'s and a weight function $\mathsf{wt}: E \rightarrow \Sigma$} 
\BlankLine

\tcc{Construct the partial expansion $\PNode{G}$ of $G$}
Assign $\PNode{V} \leftarrow V$\\
Assign $\PNode{E} \leftarrow E$\\
Create a map $\wtbar: \PNode{E} \rightarrow \Sigma$\\
Assign $\wtbar \leftarrow \mathsf{wt}$\\

    \ForEach{ $u'\in \prod_i(V_i\cup \{\bot\})$ }
    {
	    Let $u\in V$ such that  $u \sqsubset u'$\\
        Assign  $\PNode{V} \leftarrow \PNode{V} \cup \left\{ \PNode{u}^1, \PNode{u}^2 \right\}$\\
        Assign $\PNode{E} \leftarrow \PNode{E} \cup \left\{ (\PNode{u}^1, u), (u, \PNode{u}^2) \right\}$\\
        Set $\wtbar(\PNode{u}^1, u) \leftarrow \PNode{\mathbf{1}}$\\
        Set $\wtbar(u, \PNode{u}^2) \leftarrow \PNode{\mathbf{1}}$
    }

\Return $\PNode{G}=(\PNode{V}, \PNode{E})$ and $\PNode{\Weight}$

\end{algorithm}

\begin{algorithm}
\small
\SetAlgoNoLine
\DontPrintSemicolon
\caption{$\mathsf{ConcurPreprocess}$  \cref{item:partial_tree_dec}}\label{algo:ConcurPreprocess2}
\KwIn{A tree-decomposition $T=\Tree(G)=(V_T, E_T)$ and the partial expansion $\PNode{G}=(\PNode{V}, \PNode{E})$} 
\BlankLine

\tcc{Construct the tree-decomposition $\PNode{T}$ of $\PNode{G}$}


Assign $\PNode{V}_{\PNode{T}} \leftarrow \emptyset$

\ForEach{bag $\Bag\in V_T$}
{
Assign $\PNode{\Bag}\leftarrow \Bag$
    \ForEach{$u \in \Bag$}{
        \ForEach{$\PNode{u}\in \PNode{V}$ such that $u \Strictrefines \PNode{u}$}
        {
            Assign $\PNode{\Bag} \leftarrow \PNode{\Bag} \cup \left \{ \PNode{u}^1, \PNode{u}^2 \right \}$
        }
    }
Assign $\PNode{V}_{\PNode{T}}\leftarrow \PNode{V}_{\PNode{T}}\cup \{\PNode{\Bag}\}$
}

\Return $\PNode{T}=(\PNode{V}_{\PNode{T}},E_{\PNode{T}})$

\end{algorithm}

\begin{algorithm}
\small
\SetAlgoNoLine
\DontPrintSemicolon
\caption{$\mathsf{ConcurPreprocess}$ \cref{item:ld} }\label{algo:ConcurPreprocess3}
\KwIn{The partial expansion tree-decomposition $\PNode{T}=(\PNode{V}_{\PNode{T}},\PNode{E}_{\PNode{T}})$, and weight function $\PNode{\Weight}$} 
\tcc{Local distance computation}
\ForEach{partial node $\PNode{u}$}
{
    Create two maps $\Fwd_{\PNode{u}}, \Bwd_{\PNode{u}} : \PNode{\Bag}_{\PNode{u}} \rightarrow \Sigma$ \\
    \For{$\PNode{v} \in \PNode{\Bag}_{\PNode{u}}$}
    {
        Assign $\Fwd_{\PNode{u}}(\PNode{v}) \leftarrow \wtbar(\PNode{u}, \PNode{v})$\\
        Assign $\Bwd_{\PNode{u}}(\PNode{v}) \leftarrow \wtbar(\PNode{v}, \PNode{u})$
    }
    
}

\ForEach{bag $\PNode{\Bag}$ of $\PNode{T}$ in bottom-up order}
{
    Assign $d' \leftarrow$ the transitive closure of $\PNode{G}[\PNode{\Bag}]$ wrt $\Weightmap_{\PNode{\Bag}}$\\
    \ForEach{$\PNode{u}, \PNode{v} \in \PNode{\Bag}$}
    {
        \If{$\Level(\PNode{v}) \leq \Level(\PNode{u})$}
        {
            Assign $\Bwd_{\PNode{u}}(\PNode{v}) \leftarrow d' (\PNode{v}, \PNode{u})$\\
            Assign $\Fwd_{\PNode{u}}(\PNode{v}) \leftarrow d' (\PNode{u}, \PNode{v})$\\
        }
    }
}

\ForEach{bag $\PNode{\Bag}$ of $\PNode{T}$ in top-down order}
{
    Assign $d' \leftarrow$ the transitive closure of $\PNode{G}[\PNode{\Bag}]$ wrt $\Weightmap_{\PNode{\Bag}}$\\
    \ForEach{$\PNode{u}, \PNode{v} \in \PNode{\Bag}$}
    {
        \If{$\Level(\PNode{v}) \leq \Level(\PNode{u})$}
        {
            Assign $\Bwd_{\PNode{u}}(\PNode{v}) \leftarrow d' (\PNode{v}, \PNode{u})$\\
            Assign $\Fwd_{\PNode{u}}(\PNode{v}) \leftarrow d' (\PNode{u}, \PNode{v})$\\
        }
    }
}

\end{algorithm}

\newcommand{\fwdp}{\Fwd^{+}}
\newcommand{\bwdp}{\Bwd^{+}}
\newcommand{\vancestor}{\PNode{\mathcal{V}}}
\newcommand{\wtcurlyp}{\wtcurly^{+}}

\begin{algorithm}
\small
\SetAlgoNoLine
\DontPrintSemicolon
\caption{$\mathsf{ConcurPreprocess}$ \cref{item:ancestors} }\label{algo:ConcurPreprocess4}
\KwIn{The partial expansion tree-decomposition $\PNode{T}=(\PNode{V}_{\PNode{T}},\PNode{E}_{\PNode{T}})$ and maps $\Fwd_{\PNode{u}},\Bwd_{\PNode{u}}:\PNode{\Bag}_u\to \Sigma$ for every partial node $\PNode{u}$} 
\tcc{Ancestor distance computation}

\ForEach{node $u \in V$}
{
    Create two maps $\From_u, \To_u : \vancestor_{\PNode{T}}(\PNode{\Bag}_u) \rightarrow \Sigma$
}

\ForEach{bag $\PNode{\Bag}$ of $\PNode{T}$ in DFS order starting from the root}
{
    {
        Let $\PNode{\Bag}'$ be the parent of $\PNode{\Bag}$\\
        \ForEach{node $u \in \PNode{B} \cap V$ such that $\PNode{B}$ is the root of $u$} 
        {
                \ForEach{$\PNode{v} \in \AncestV{\PNode{V}}_{\PNode{T}}(\PNode{\Bag}_u)$}
                {
                    Assign $\From_u(\PNode{v}) \leftarrow  \underset{x \in \PNode{\Bag} \cap \PNode{\Bag}'}{\bigoplus} \Fwd_u (x) \otimes \Weightmap^{+}(x, \PNode{v})$\\
                    Assign $\To_u(\PNode{v}) \leftarrow \underset{x \in \PNode{\Bag} \cap \PNode{\Bag}'}{\bigoplus} \Bwd_u (x) \otimes \Weightmap^{+}(\PNode{v}, x)$\\
                }
        }
        
    }
}

\end{algorithm}

\begin{algorithm}
\small
\SetAlgoNoLine
\DontPrintSemicolon
\caption{$\Concqueryalgo$ Single-source query}\label{algo:singlesourcequery}
\KwIn{A source node $u \in V$} 
\KwOut{A map $A: V \rightarrow \Sigma$ that contains distances of vertices from $u$}
\BlankLine

Create a map $A: V \rightarrow \Sigma$\\
\For{$v \in V$}
{
    Assign $A(v) \gets \PNode{\textbf{0}}$
}

\For{every bag $\PNode{\Bag}$ of $\PNode{T}$ in BFS order starting from $\PNode{\Bag}_u$}
{
    \For{$x, v \in \PNode{\Bag} \cap V$}
    {
        \If{$\Level(v) \leq \Level(x)$}
        {
            Assign $A(v) \gets A(v) \oplus A(x) \otimes \Fwd_x(v)$
        }
    }
}

\Return $A$

\end{algorithm}

\begin{algorithm}
\small
\SetAlgoNoLine
\DontPrintSemicolon
\caption{$\Concqueryalgo$ Pair query}\label{algo:pairquery}
\KwIn{Two nodes $u, v \in V$} 
\KwOut{The distance $\Distance(u,v)$}
\BlankLine

Let $\PNode{\Bag} \gets $ the LCA of $\PNode{B}_u$ and $\PNode{B}_v$ in $\PNode{T}$ \\

Assign $d \gets \PNode{\mathbf{0}}$

\For{$x \in \PNode{\Bag} \cap V$}
{
    Assign $d \gets d \oplus \fwdp_u(x) \otimes \bwdp_v(x)$
}

\Return $d$

\end{algorithm}

\begin{algorithm}
\small
\SetAlgoNoLine
\DontPrintSemicolon
\caption{$\Concqueryalgo$ Partial pair query}\label{algo:partialpairquery}
\KwIn{Two partial nodes $\PNode{u}, \PNode{v} \in \PNode{V}$, at least one of which is strictly partial} 
\KwOut{The distance $\Distance(\PNode{u}, \PNode{v})$}
\BlankLine

\uIf{both $\PNode{u}$ and $\PNode{v}$ are strictly partial}
{
    \Return $\Fwd_{\PNode{u}^1} (\PNode{v}^2)$
}
\uElseIf{$\PNode{u}$ is strictly partial}
{
    \Return $\bwdp_v(\PNode{u}^1)$
}
\Else{
    \Return $\fwdp_u (\PNode{v}^2)$
}

\end{algorithm}

\end{document}